\newcommand{%
  \immediate\write18{texcount -1 -sum -merge .tex > -words}%
  \input{-words}words%
}[1]{%
  \immediate\write18{texcount -1 -sum -merge #1.tex > #1-words}%
  \input{#1-words}words%
}
\newtheorem{theorem}{Theorem}
\newtheorem{lemma}{Lemma}
\newtheorem{proposition}{Proposition}
\newenvironment{proof}[1][Proof]{\noindent\textbf{#1.} }{\ \rule{0.5em}{0.5em}}
\newcolumntype{L}[1]{>{\raggedright\let\newline\\arraybackslash\hspace{0pt}}m{#1}}
\newcolumntype{C}[1]{>{\centering\let\newline\\arraybackslash\hspace{0pt}}m{#1}}
\newcolumntype{R}[1]{>{\raggedleft\let\newline\\arraybackslash\hspace{0pt}}m{#1}}
\begin{document}

\begin{titlepage}
\title{Breakdown Analysis for Instrumental Variables with Binary Outcomes}
\author{Pedro Picchetti\thanks{\textbf{Pedro Picchetti:} Instituto de Economía, Pontificia Universidad Católica de Chile.\\ E-mail: pedro.picchetti@uc.cl. I thank Alexandre Poirier, Cristine Pinto, Jonathan Roth, Matthew Masten, Peter Hull and Vitor Possebom. I gratefully acknowledge the financial support from FAPESP 2021/13708-8. All errors are my own.\\ }}
\date{\today}
\maketitle
\begin{abstract}

\noindent This paper studies the partial identification of treatment effects in Instrumental Variables (IV) settings with binary outcomes under violations of independence. I derive the identified sets for the treatment parameters of interest in the setting, as well as breakdown values for conclusions regarding the true treatment effects. I derive $\sqrt{N}$-consistent nonparametric estimators for the bounds of treatment effects and for breakdown values. These results can be used to assess the robustness of empirical conclusions obtained under the assumption that the instrument is independent from potential quantities, which is a pervasive concern in studies that use IV methods with observational data. In the empirical application, I show that the conclusions regarding the effects of family size on female unemployment using same-sex siblings as the instrument are highly sensitive to violations of independence. \\

\vspace{0in}
\noindent\textbf{Keywords:} Partial Identification, Sensitivity Analysis, Instrumental Variables.
\vspace{0in}\\
\noindent\textbf{JEL Codes:} C01, C13, C21.\\

\bigskip
\end{abstract}
\setcounter{page}{0}
\thispagestyle{empty}
\end{titlepage}
\pagebreak \newpage

\doublespacing

\section{Introduction}

Instrumental variables (IV) techniques are among the most widely used empirical tools in social sciences. In the canonical IV setting, the causal effect of a binary treatment is identified by exploiting variations in a binary instrument in the form of the \cite{wald} estimand. Point identification is achieved if the instrument satisfies a set of assumptions. For instance, the instrumental variable must be independent from potential treatments and potential outcomes.

Although in certain cases the independence assumption is readily justified (experimental studies), it is often unverifiable and must be defended by appealing to context specific knowledge, specially in observational studies. In this paper I study what can be learned about treatment effects in IV settings under weaker versions of the instrument independence assumption.

I focus in the case where the outcome is binary. I derive bounds for the first-stage and reduced form parameters, as well as bounds for the Local Average Treatment Effect (LATE) under a bounded dependence assumption called \textit{c-dependence} \citep{mastenpoirier2018}, which bounds the distance between the probability of receiving the instrument given observed covariates and unobserved potential quantities and the probability of being treated given just the observed covariates.

The first-stage parameter, the share of compliers, is partially identified as function of the difference between the probability of assignment given covariates and the probability of assignment given covariates and potential treatments. The reduced form parameter, the intention-to-treat effect, is partially identified as function of the difference between the probability of assignment given covariates and the probability of assignment given covariates and potential outcomes. The LATE is partially identified as a function of both sensitivity parameters.

I identify breakdown values for the first-stage and reduced form, as well as the breakdown frontier for the LATE. Breakdown values are the violations of the independence assumptions under which a particular conclusion no longer holds. For instance, one could be interested to learn under which violations of independence the conclusion that the treatment effect has a particular sign holds. If a researcher is concerned about the external validity of the study, the breakdown analysis of the first-stage is useful to understand under which violations of independence the share of compliers is above a certain share of the study population.

I propose nonparametric estimators for the bounds of causal effects and breakdown values, and derive their asymptotic properties using convergence results for \textit{Hadamard
directional differentiable} functions \citep{fangsantos}.

The identified sets for the LATE are not sharp. I derive sharp bounds for the LATE under a joint \textit{$c$-dependence} assumption for potential outcomes and potential treatments. The bounds of the set can be used to derive the breakdown point for conclusions regarding the LATE.

Monte Carlo simulations show the desirable finite sample properties of the proposed estimators.

For the empirical application, I revisit \cite{angev}, which studies the effects of family size on female employment using same-sex siblings as the instrument, and estimate the identified sets for the share of compliers, the ITT and the LATE under different relaxations of independence. The estimated breakdown values for the LATE and the ITT are not statistically different from zero, which suggests that the conclusions of the study are highly sensitive to violations of independece.

\textbf{Related Literature:} This paper relates broadly to three strands of the causal inference literature. First, ir is connected to the literature on partial identification and sensitivity analysis in IV settings. While most results on the literature focus on partial identification under violations of the exclusion restriction \citep{conley,wang18,mastenporirer21,cinelli}, this paper focuses solely on violations of independence. In that sense, it is similar to \cite{mastenkline} and \cite{jonashesh}, but differs from the former by allowing two-sided noncompliance and from the latter by choosing a different sensitivity parameter.

Second, this paper relates to the literature on the identification of breakdown values, introduced by \cite{horwitzmanski}. My approach to inference follows closely the one introduced in \cite{mastenpoirier2020}. While most of the work in this literature focuses on missing data settings \citep{klinesantos} and selection on observables \citep{mastenpoirier2020}, this is the first paper studying inference for breakdown values in settings with non-compliance.

Finally, this paper is related to the literature on IV settings with binary outcomes, which dates back to the seminal work of \cite{heckman78}. While most prominent work on this literature focuses on the identification of the average structural functions \citep{vytlacilyildiz,shaikhvytlacil} or partial identification of Average Treatment Effects \citep{MACHADO2019522}, this paper is more closely related to \cite{chesher} as it builds on the LATE framework for identification and sensitivity analysis.

\textbf{Outline of the paper:} The rest of the paper is organized as follows: Section 2 describes the framework and target parameters in the setting. Section 3 provides the partial identification results for the case of binary outcomes and in Section 4 I show the identification of the breakdown values. In Section 5 I perform a numerical illustration of the method. Section 6 introduces the estimators and their asymptotic properties. Section 7 presents the partial identification results for the case of joint \textit{$c$-dependence}. Section 8 presents the Monte Carlo simulations, Section 9 presents the empirical application and Section 10 concludes.

\section{Framework}

Let $Z\in\left\{ 0,1\right\}$ denote a binary variable that indicates whether an individual was assigned to treatment ($Z=1$) or control ($Z=0$). In the setting, non-compliance is allowed, which means that not all individuals assigned to treatment will actually take the treatment and not all individuals assigned to control will remain untreated. Rather than determining treatment status, the assignment represents an encouragement towards treatment.

Let $D\in\left\{ 0,1\right\}$ denote the actual treatment status. Define the potential treatment associated to assignment $z$ as $D(z)$. We observe the treatment status

\begin{equation*}
    D=ZD(1)+(1-Z)D(0)
\end{equation*}

Let $Y\in\left\{ 0,1\right\}$ denote the observed binary outcome. The potential outcome associated to assignment $z$ is defined as $Y(D(z),z)$. At first, I allow potential outcomes depend arbitrarily on treatment and assignment. Observed and potential outcomes are related by

\begin{equation*}
    Y=ZY(D(1),1)+(1-Z)Y(D(0),0)
\end{equation*}

Let $X\in\mathcal{S}(X)$ be a vector of observed covariates and $p_{z|x}=\mathbb{P}\left ( Z=z|X=x \right )$ be the observed propensity score for assignment. I maintain the following assumption regarding the joint distribution of $(D(z),Y(D(z),z),Z,X)$ throughout the paper:

\textbf{Assumption 1:} For each $z,z'\in\left\{ 0,1\right\}$ and $x\in\mathcal{S}(X)$:

\begin{enumerate}
    \item $\mathbb{P}\left ( D(z)=1|Z=z',X=x \right )\in \left ( 0,1 \right )$
    \item $\mathbb{P}\left ( Y(D(z),z)=1|Z=z',X=x \right )\in \left ( 0,1 \right )$
    \item $p_{z|x}>0$
\end{enumerate}

Assumptions 1.1 and 1.2 state that the support of potential quantities does not depend on the assignment. Assumption 1.3 states that all individuals can be assigned to treatment and control with probability greater than zero, and is usually referred to as the common support, or overlap assumption.

The fundamental behavioral assumption for identification in IV settings restricts how individuals respond to assignment, and is formalized below:

\textbf{Assumption 2:} For all $x\in\mathcal{S}(X)$, we have $D(1)\geq D(0)$ conditional on $X=x$.

Assumption 2 is referred to as the monotonicity condition \citep{imbensangrist} and it states that there are no individuals that would take treatment if assigned to control and would not take treatment in the presence of the encouragement. Under assumption 2 individuals can be divided into three groups regarding their response to assignment: Always-takers (individuals that take treatment regardless of the assignment), Compliers (individuals that mimick their assignment) and Never-takers (individuals that don't take treatment regardless of their assignment).

Another necessary assumption for identification is the exclusion restriction:

\textbf{Assumption 3:} For all $x\in\mathcal{S}(X)$ and $z\in\left\{ 0,1\right\}$, $Y(D(z),z)=Y(D(z
))$.

The exclusion restriction states that the outcome is only affected directly by the actual uptake of the treatment. Hence, assignment only affects outcomes to the extent that it affects the choice of treatment. Under the exclusion restriction, the observed outcome relates to potential outcomes simply by $Y=ZY(D(1))+(1-Z)Y(D(0))$.

Point identification in IV settings usually relies on two additional assumptions, which are stated below:

\textbf{Assumption 4:} For all $x\in\mathcal{S}(X)$, $\mathbb{E}\left [ D|Z=1,X=x \right ]\neq\mathbb{E}\left [D|Z=0,X=x  \right ]$

Assumption 4 is a technical assumption often referred to as relevance.

\textbf{Assumption 5:} For all $x\in\mathcal{S}(X)$, $\left (Y(D(z)),D(z)  \right )\perp Z|X=x$.

Assumption 5 states that the assignment of treatment is independent of the potential quantities. Although it is usually justified in experimental settings, it is hard to justify and verify in observational settings.

Under these five assumptions, it is well known that the average treatment effect for compliers (LATE) conditional on $X_{i}=x$ is identified by the conditional Wald estimand:

\begin{align*}
    &\mathbb{E}\left [ Y(1)-Y(0)|D(1)>D(0),X=x \right ]=\frac{\mathbb{E}\left [ Y|Z=1,X=x \right ]-\mathbb{E}\left [ Y|Z=0,X=x \right ]}{\mathbb{E}\left [ D|Z=1,X=x \right ]-\mathbb{E}\left [ D|Z=0,X=x \right ]}\\&=\frac{\mathbb{E}\left [ Y(D(1))-Y(D(0))|X=x \right ]}{\mathbb{E}\left [ D(1)-D(0)|X=x \right ]}\equiv\frac{\tau_{Y}(x)}{\tau_{D}(x)}\equiv\tau(x)
\end{align*}

The unconditional LATE is identified by integrating $\tau_{Y}(x)$ and $\tau_{D}(x)$ over the distribution of covariates. In this paper, I study the partial identification of the LATE in settings where the independence assumption is violated. The approach consists in finding bounds for the first-stage $(\tau_{D}(x))$ and the reduced form $(\tau_{Y}(x))$ as functions of the magnitude of the dependence of treatment assignment on potential quantities.

The partial identification results are used to conduct sensitivity analysis and identifying breakdown frontiers, that is, the boundary between the set of assumptions which lead to a specific conclusion and those which
do not. For instance, one might be interested in the values of dependence which change the conclusion that the LATE is greater than zero.

\section{Partial Identification with Binary Outcomes}

I consider the partial identification as a function of violations of independence in a setting where the outcome $Y$ is binary. In that case, the conditional LATE can be interpreted as the increase in probability of observing $Y=1$ due to the treatment for compliers with covariates $X=x$:

\begin{equation*}
    \tau(x)=\mathbb{P}\left (Y(1)=1 |D(1)>D(0),X=x \right )-\mathbb{P}\left (Y(0)=1 |D(1)>D(0),X=x \right )
\end{equation*}

I begin with the partial identification of the share of compliers. 

\subsection{First-Stage}

I begin with the partial identification of $\tau_{D}(x)$. For that purpose, write $\tau_{D}(x)=\tau_{D(1)}(x)-\tau_{D(0)}(x)$, where $\tau_{D(z)}(x)=\mathbb{E}\left [D(z)|X=x  \right ]$. The parameter $\tau_{D}(x)$ can be interpreted as the share of compliers with $X=x$: $\mathbb{P}\left(D(1)>D(0)|X=x\right)$. In the case where Assumptions 1-5 and independence hold, $\tau_{D}(x)$ is identified by

\begin{equation*}
    \tau_{D}(x)=\mathbb{E}\left [D|Z=1,X=x  \right ]-\mathbb{E}\left [D|Z=0,X=x  \right ]
\end{equation*}

which is usually referred to as the first-stage in the Wald estimand. 

We replace the independence assumption by a bounded dependence assumption, called \textit{c-dependence} \citep{mastenpoirier2018}:

\textbf{Definition:} Let $x\in\mathcal{S}(X)$. Let $c_{1}$ be a scalar between 0 and 1. We say that $Z$ is \textit{$c_1$-dependent} with potential treatment $D(z)$ given $X=x$ if

\begin{equation*}
    \sup_{d\in\left\{ 0,1\right\}}\left | \mathbb{P}\left ( Z=1|D(z)=d,X=x \right )-\mathbb{P}\left ( Z=1|X=x \right )\right |\leq c_{1}
\end{equation*}
\

Under \textit{$c_{1}$-dependence}, the unobserved propensity score is allowed to deviate $c_1$ probability units away from the observed propensity score $p_{1|x}$. For $c_{1}=0$, the assignment is independent from potential treatments (Assumption 5 holds). Throughout the paper, I assume \textit{$c_1$-dependence}:

\textbf{Assumption 5A:} $Z$ is \textit{$c_1$-dependent} with $D(1)$ given $X$ and with $D(0)$ given $X$.

Let $p_{D|z,x}=\mathbb{P}\left ( D=1|Z=z,X=x \right )$. Lemma 1 provides sharp identified sets for potential treatments and the share of compliers:

\begin{lemma}
    Suppose Assumptions 1-3 and 5A hold. Then, the sharp identified set for potential treatment associated to assignment $z$ is $\tau_{D(z)}(x)\in\left[\tau^{LB}_{D(z)}(c_{1},x),\tau^{UB}_{D(z)}(c_{1},x)\right]$ is, where

    \begin{equation*}
        \tau^{LB}_{D(z)}(c_{1},x)=\max\left\{ \frac{p_{D|z,x}p_{z|x}}{p_{z|x}+c_{1}},\frac{p_{D|z,x}p_{z|x}-c_{1}}{p_{z|x}-c_{1}},p_{D|z,x}p_{z|x}\right\}
    \end{equation*}

    and

    \begin{equation*}
        \tau_{D(z)}^{UB}(c_{1},x)=\min\left\{ \frac{p_{D|z,x}p_{z|x}}{p_{z|x}-c_{1}}\mathbf{1}\left ( p_{z|x}>c_{1} \right )+\mathbf{1}\left ( p_{z|x}\leq c_{1} \right ),\frac{p_{D|z,x}p_{z|x}+c_{1}}{p_{z|x}+c_{1}},p_{D|z,x}p_{z|x}+(1-p_{z|x})\right\}
    \end{equation*}

    Consequently, the sharp identified set for the share of compliers is $\tau_{D}(x)\left[\tau^{LB}_{D}(c_{1},x),\tau^{UB}_{D}(c_{1},x)\right]$, where

    \begin{equation*}
        \tau_{D}^{LB}(c_{1},x)=\max\left ( 0,\tau^{LB}_{D(1)}(c_{1},x)-\tau^{UB}_{D(0)}(c_{1},x) \right )
    \end{equation*}

    and

    \begin{equation*}
        \tau_{D}^{UB}(c_{1},x)=\tau^{UB}_{D(1)}(c_{1},x)-\tau_{D(0)}^{LB}(c_{1},x)
    \end{equation*}

\end{lemma}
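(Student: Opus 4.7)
The plan is to translate the $c_1$-dependence restriction into inequalities on the counterfactual probability $\tau_{D(z)}(x)=\mathbb{P}(D(z)=1|X=x)$, combine those with the trivial law-of-total-probability bounds, and then assemble the complier-share bounds from the identity $\tau_D(x)=\tau_{D(1)}(x)-\tau_{D(0)}(x)$ together with monotonicity.

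For $\tau_{D(1)}(x)$, I would first use consistency ($D=D(1)$ whenever $Z=1$) to rewrite $\mathbb{P}(D(1)=d,Z=1|X=x)$ in terms of the observable $p_{D|1,x}p_{1|x}$, and then apply Bayes' rule to obtain
\begin{equation*}
\mathbb{P}(Z=1|D(1)=1,X=x)=\frac{p_{D|1,x}p_{1|x}}{\tau_{D(1)}(x)},\qquad\mathbb{P}(Z=1|D(1)=0,X=x)=\frac{(1-p_{D|1,x})p_{1|x}}{1-\tau_{D(1)}(x)}.
\end{equation*}
Imposing $|\cdot-p_{1|x}|\le c_1$ and rearranging gives two pairs of inequalities on $\tau_{D(1)}(x)$. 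The $d=1$ constraint delivers $\frac{p_{D|1,x}p_{1|x}}{p_{1|x}+c_1}\le\tau_{D(1)}(x)\le\frac{p_{D|1,x}p_{1|x}}{p_{1|x}-c_1}$, with the upper side vacuous when $p_{1|x}\le c_1$ (which is exactly why the indicator split appears in the lemma). The $d=0$ constraint, after subtracting both sides from $1$, yields $\frac{p_{D|1,x}p_{1|x}-c_1}{p_{1|x}-c_1}\le\tau_{D(1)}(x)\le\frac{p_{D|1,x}p_{1|x}+c_1}{p_{1|x}+c_1}$. Finally, the identity $\tau_{D(1)}(x)=p_{D|1,x}p_{1|x}+\mathbb{P}(D(1)=1|Z=0,X=x)(1-p_{1|x})$ together with the trivial $[0,1]$ range of the unobserved conditional probability contributes the third pair, $p_{D|1,x}p_{1|x}\le\tau_{D(1)}(x)\le p_{D|1,x}p_{1|x}+(1-p_{1|x})$. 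Taking the pointwise max and min of these three lower/upper bounds reproduces the lemma's expressions. The derivation for $\tau_{D(0)}(x)$ is symmetric after rewriting the $c_1$-dependence condition as $|\mathbb{P}(Z=0|D(0)=d,X=x)-p_{0|x}|\le c_1$, which swaps $p_{1|x}$ with $p_{0|x}$ throughout.

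For sharpness of the $\tau_{D(z)}(x)$ interval I would parameterize candidate distributions by the single free counterfactual $\pi:=\mathbb{P}(D(z)=1|Z=1-z,X=x)\in[0,1]$: each choice of $\pi$ fully specifies a joint law of $(D(z),Z)|X$ that reproduces the observed $(D,Z)|X$, and the two Bayes identities above are the only constraints that $c_1$-dependence imposes on it, so every value in the derived interval is attained. For the complier share, monotonicity forces $\tau_D(x)\ge 0$ and $\tau_{D(0)}(x)\le\tau_{D(1)}(x)$; the stated bounds are the signed difference of the marginal bounds, clipped at zero from below. Sharpness then follows because the $c_1$-dependence constraints on $D(1)$ and on $D(0)$ decouple across the two potential treatments, so any admissible pair of marginal values with $\tau_{D(0)}(x)\le\tau_{D(1)}(x)$ can be realized jointly.

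The main technical hurdle is the sharpness argument for the complier share: one must construct a joint law of $(D(0),D(1),Z)|X$ that simultaneously (i) satisfies both marginal $c_1$-dependence constraints with the desired marginal values, (ii) respects monotonicity by placing zero mass on $\{D(1)=0,D(0)=1\}$, and (iii) projects via consistency onto the observed $(D,Z)|X$ distribution. Because the two marginal constraint sets involve disjoint counterfactual cells, one can pick the counterfactual probabilities $\mathbb{P}(D(1)=1|Z=0,X=x)$ and $\mathbb{P}(D(0)=1|Z=1,X=x)$ independently within their respective admissible ranges and then stitch them together through any monotone bivariate-Bernoulli coupling with the prescribed marginals, which is always feasible under $\tau_{D(0)}(x)\le\tau_{D(1)}(x)$.
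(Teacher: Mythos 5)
The paper does not actually prove this lemma from first principles: it disposes of it in one line by citing Proposition 5 of Masten and Poirier (2018), which delivers exactly the three-way max/min bounds for a single counterfactual probability under $c$-dependence. Your derivation of the marginal bounds for $\tau_{D(z)}(x)$ is a correct and self-contained reconstruction of that cited result — the Bayes-rule step for $d=1$ and $d=0$, the indicator split when $p_{z|x}\leq c_{1}$, and the law-of-total-probability pair are precisely the ingredients behind the citation, and your sharpness argument for the marginal interval via the single free parameter $\pi=\mathbb{P}(D(z)=1|Z=1-z,X=x)$ is the standard one. Up to that point you have, if anything, done more than the paper.

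The gap is in your final paragraph, on sharpness of the complier-share bounds. You claim that any admissible pair of marginals with $\tau_{D(0)}(x)\leq\tau_{D(1)}(x)$ can be realized by a monotone coupling because the counterfactual cells are disjoint. But monotonicity is $D(1)\geq D(0)$ almost surely conditional on $X=x$, which must hold \emph{within each $Z$-cell}; since consistency pins $\mathbb{P}(D(1)=1|Z=1,X=x)=p_{D|1,x}$ and $\mathbb{P}(D(0)=1|Z=0,X=x)=p_{D|0,x}$ to observed quantities, a monotone coupling compatible with the data requires the cell-wise inequalities $\mathbb{P}(D(0)=1|Z=1,X=x)\leq p_{D|1,x}$ and $p_{D|0,x}\leq\mathbb{P}(D(1)=1|Z=0,X=x)$, not merely the marginal ordering $\tau_{D(0)}(x)\leq\tau_{D(1)}(x)$. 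For the upper bound this is harmless, because the extremal choices ($\pi_{1}$ as large as possible, $\pi_{0}$ as small as possible) push in the direction that makes these cell-wise constraints easier, so feasibility of any admissible monotone model implies feasibility of the extremal one. For the lower bound the constraints can bind: the extremal choices $\pi_{1}=\ell_{1}$, $\pi_{0}=u_{0}$ may violate $\ell_{1}\geq p_{D|0,x}$ or $u_{0}\leq p_{D|1,x}$, in which case the monotonicity-constrained minimum of $\tau_{D}(x)$ is $\max\{0,p_{D|1,x}-u_{0}\}p_{1|x}+\max\{0,\ell_{1}-p_{D|0,x}\}(1-p_{1|x})$, which can strictly exceed $\max\{0,\tau^{LB}_{D(1)}-\tau^{UB}_{D(0)}\}$; likewise, attaining the clipped value $0$ requires $p_{D|0,x}$ and $p_{D|1,x}$ themselves to lie in the respective admissible sets. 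So your argument establishes validity of the stated lower bound but not its sharpness without an additional case analysis. To be fair, the paper does not supply this argument either — the citation to Masten and Poirier covers only the marginal bounds, where monotonicity plays no role — so this is a gap you share with the source rather than one you introduced.
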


Lemma 1 follows directly from Proposition 5 of \cite{mastenpoirier2018}. The upper bound for the first-stage is identified by the difference between the upper bound of $\tau_{D(1)}(c_{1},x)$ and the lower bound of $\tau_{D(0)}(c_{1},x)$. These are both quantities between zero and one, and the monotonicity assumptions ensures that the difference between these quantities is positive.

The lower bound is identified by the difference beteween the lower bound of $\tau_{D(1)}(c_{1},x)$ and the upper bound of $\tau_{D(0)}(c_{1},x)$. There is no guarantee that these difference is greater than zero. Since the first-stage identifies a share between zero and one, the lower bound for is restricted to be at least as great as zero.

The unconditional bounds for the first-stage are obtained by integrating the conditional bounds over the distribution of covariates:

\vspace{-10mm}

\begin{align*}
    &\tau_{D}^{LB}(c_{1})=\max\left ( 0,\int\tau^{LB}_{D(1)}(c_{1},x)dF_{X}(x)-\int\tau^{UB}_{D(0)}dF_{X}(x)(c_{1},x) \right )\\&\tau_{D}^{UB}(c_{1})=\int\tau^{UB}_{D(1)}(c_{1},x)dF_{X}(x)-\int\tau_{D(0)}^{LB}(c_{1},x)dF_{X}(x)
\end{align*}

\subsection{Reduced Form}

Now, I focus on the identification of $\tau_{Y}(x)$, which I write as $\tau_{Y}(x)=\tau_{Y(D(1))}(x)-\tau_{Y(D(0))}(x)$, where $\tau_{Y(D(z))}(x)=\mathbb{E}\left [ Y(D(z))|X=x \right ]$. The parameter $\tau_{Y}(x)$ captures the effect of assignment on potential outcomes, which is often referred to in the literature as the Intention-to-Treat effect (ITT). In the case where Assumptions 1-5 and independence hold, $\tau_{Y}(x)$ is identified by

\begin{equation*}
    \tau_{Y}(x)=\mathbb{E}\left [Y|Z=1,X=x  \right ]-\mathbb{E}\left [Y|Z=0,X=x  \right ]
\end{equation*}

which is referred to as the reduced form. As in the first-stage, I relax the independence assumption replace it by a \textit{c-dependence} assumption of $Z_{i}$ with the potential outcomes.

\textbf{Definition:} Let $x\in\mathcal{S}(X)$. Let $c_{2}$ be a scalar between 0 and 1. We say that $Z$ is \textit{$c_2$-dependent} with potential outcome $Y(D(z))$ given $X=x$ if

\begin{equation*}
    \sup_{y\in\left\{ 0,1\right\}}\left | \mathbb{P}\left ( Z=1|Y(D(z))=y,X=x \right )-\mathbb{P}\left ( Z=1|X=x \right )\right |\leq c_{2}
\end{equation*}

For $c_{2}=0$, the assignment is independent from potential outcomes (Assumption 5 holds). Throughout the paper, I assume \textit{$c_2$-dependence}:

\textbf{Assumption 5B:} $Z$ is \textit{$c_2$-dependent} with  $Y(D(1))$ given $X$ and with $Y(D(0))$ given $X$.

Let $p_{Y|z,x}=\mathbb{P}\left ( Y=1|Z=z,X=x \right )$. I use the results from \cite{mastenpoirier2018} to derive sharp identified sets for potential outcomes and the ITT:

\begin{lemma}
    Suppose Assumptions 1-3 and 5B hold. Then, the sharp identified set for potential outcome associated to assignment $z$ is $\tau_{Y(D(z))}(x)\in\left[\tau^{LB}_{Y(D(z))}(c_{2},x),\tau^{UB}_{Y(D(z))}(c_{2},x)\right]$ is, where

    \begin{equation*}
        \tau^{LB}_{Y(D(z))}(c_{2},x)=\max\left\{ \frac{p_{Y|z,x}p_{z|x}}{p_{z|x}+c_{2}},\frac{p_{Y|z,x}p_{z|x}-c_{2}}{p_{z|x}-c_{2}},p_{Y|z,x}p_{z|x}\right\}
    \end{equation*}

    and

    \begin{equation*}
        \tau_{Y(D(z))}^{UB}(c_{2},x)=\min\left\{ \frac{p_{Y|z,x}p_{z|x}}{p_{z|x}-c_{2}}\mathbf{1}\left ( p_{z|x}>c_{2} \right )+\mathbf{1}\left ( p_{z|x}\leq c_{2} \right ),\frac{p_{Y|z,x}p_{z|x}+c_{2}}{p_{z|x}+c_{2}},p_{Y|z,x}p_{z|x}+(1-p_{z|x})\right\}
    \end{equation*}

    Consequently, the sharp identified set for the share of compliers is $\tau_{Y}(x)\left[\tau^{LB}_{Y}(c_{2},x),\tau^{UB}_{Y}(c_{2},x)\right]$, where

    \begin{equation*}
        \tau_{Y}^{LB}(c_{2},x)=\max\left ( 0,\tau^{LB}_{Y(D(1))}(c_{2},x)-\tau^{UB}_{Y(D(0))}(c_{2},x) \right )
    \end{equation*}

    and

    \begin{equation*}
        \tau_{Y}^{UB}(c_{2},x)=\tau^{UB}_{Y(D(1))}(c_{2},x)-\tau_{Y(D(0))}^{LB}(c_{2},x)
    \end{equation*}
\end{lemma}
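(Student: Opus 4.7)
The plan is to observe that Lemma 2 is the exact structural analogue of Lemma 1 with the binary potential outcome $Y(D(z))$ playing the role of the binary potential treatment $D(z)$, and with the sensitivity parameter $c_2$ replacing $c_1$. The engine in both cases is Proposition 5 of \cite{mastenpoirier2018}, which delivers sharp bounds on the conditional mean of any binary random variable $W$ satisfying $\sup_w|\mathbb{P}(Z=1|W=w,X=x)-\mathbb{P}(Z=1|X=x)|\leq c$, using only the observed propensity $p_{z|x}$ and the observed conditional mean of $W$ on $\{Z=z,X=x\}$. Since neither monotonicity (Assumption 2) nor exclusion (Assumption 3) enters the derivation of those bounds, the same three-term $\max$ and $\min$ formulas carry over verbatim to $\tau_{Y(D(z))}(x)$, now with $p_{Y|z,x}$ in place of $p_{D|z,x}$ and $c_2$ in place of $c_1$. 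Applying the result separately for $z=0$ and $z=1$ under Assumption 5B yields the two bracketed intervals stated in the lemma.

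The second step lifts these bounds on $\tau_{Y(D(0))}(x)$ and $\tau_{Y(D(1))}(x)$ to bounds on the conditional ITT $\tau_Y(x)=\tau_{Y(D(1))}(x)-\tau_{Y(D(0))}(x)$. The upper endpoint is the difference of the individual upper and lower bounds, while the lower endpoint is $\max\bigl(0,\tau^{LB}_{Y(D(1))}(c_2,x)-\tau^{UB}_{Y(D(0))}(c_2,x)\bigr)$. The $\max\{0,\cdot\}$ reflects the restriction that $\tau_Y(x)$ is interpretable as a non-negative share in the author's setup, mirroring the first-stage formula; the operation is monotone and weakly coarsens the raw difference, but it preserves sharpness of the identified interval whenever the data-implied set in fact lies on the non-negative axis.

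The main nontrivial point is sharpness of the bounds on the \emph{difference}. Sharpness of each marginal bound is inherited from \cite{mastenpoirier2018}, but sharpness of bounds on a difference of means is in general strictly stronger. The key observation is that Assumption 5B imposes the $c_2$-dependence restriction only on the marginals of $(Z,Y(D(0)))$ and $(Z,Y(D(1)))$, leaving the cross-distribution of $(Y(D(0)),Y(D(1)))$ given $(Z,X)$ completely unconstrained. I would verify sharpness by explicitly constructing, for each endpoint, a joint law of $(Y(D(0)),Y(D(1)),Z,X)$ that (i) is compatible with the distribution of $(Y,Z,X)$ through $Y=ZY(D(1))+(1-Z)Y(D(0))$, (ii) satisfies Assumption 5B with the given $c_2$, and (iii) realizes the stated value of $\tau_Y(x)$. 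Because the two $c_2$-dependence constraints act on separate marginals, the pair of Masten--Poirier extremal constructions for the two potential outcomes can be coupled into a single joint distribution.

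The hard part is precisely that coupling step: one must confirm that the two separate extremal constructions admit a common underlying joint law that reproduces the observed marginal of $(Y,Z,X)$. This is the same point implicitly handled in Lemma 1 for $(D(0),D(1))$, and the argument transfers directly here because $c$-dependence is a restriction on marginal conditional probabilities of $Z$ and places no constraint on the copula of the two potential outcomes. Integrating the resulting conditional bounds against $F_X$ then produces the corresponding unconditional ITT bounds, exactly analogous to the display immediately following Lemma 1.
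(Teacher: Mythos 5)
Your overall route is the same as the paper's: the paper offers no separate proof of Lemma 2 and simply asserts that it follows from Proposition 5 of \cite{mastenpoirier2018}, exactly as Lemma 1 does, with $Y(D(z))$ in place of $D(z)$ and $c_2$ in place of $c_1$. Your observation that neither monotonicity nor exclusion enters the marginal bounds, and your attention to the coupling step needed for sharpness of bounds on the difference (which the paper never addresses), go beyond what the paper writes down and are sound.

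The one genuine problem is your treatment of the $\max\{0,\cdot\}$ in $\tau_Y^{LB}(c_2,x)$. You justify it on the grounds that $\tau_Y(x)$ is ``interpretable as a non-negative share,'' but it is not: $\tau_Y(x)=\mathbb{E}[Y(D(1))-Y(D(0))\,|\,X=x]$ is the intention-to-treat effect, which equals the complier share times the LATE and is negative whenever the LATE is negative (as it is in the paper's own empirical application). The paper itself says, immediately after the lemma, that ``the ITT is not bounded by definition between zero and one, and thus, there is no need to constraint the lower bound to be at least as great as zero,'' and its unconditional bounds $\tau_Y^{LB}(c_2)$ drop the truncation; the $\max\{0,\cdot\}$ in the lemma statement is evidently a copy-paste artifact from the first-stage. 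Your own hedge --- that truncation preserves sharpness only ``whenever the data-implied set in fact lies on the non-negative axis'' --- concedes the point: in the generic case where the raw difference of bounds is negative, truncating at zero produces an invalid (too large) lower endpoint, so the argument as written does not establish the stated formula as a sharp lower bound. You should either flag the truncation as an error in the statement or prove the lower bound without it, i.e.\ $\tau_Y^{LB}(c_2,x)=\tau^{LB}_{Y(D(1))}(c_2,x)-\tau^{UB}_{Y(D(0))}(c_2,x)$, which is what the rest of the paper actually uses.
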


The bounds are similar to those obtained for the first-stage. Note that the ITT is not bounded by definition between zero and one, and thus, there is no need to constraint the lower bound to be at least as great as zero.

The unconditional bounds for the ITT are identified by integrating the conditional bounds over the distribution of covariates:

\vspace{-5mm}

\begin{align*}
    &\tau_{Y}^{LB}(c_{2})=\int\tau^{LB}_{Y(D(1))}(c_{2},x)dF_{X}(x)-\int\tau^{UB}_{Y(D(0))}(c_{2},x)dF_{X}(x)\\&\tau_{Y}^{UB}(c_{2})=\int\tau^{UB}_{Y(D(1))}(c_{2},x)dF_{X}(x)-\int\tau^{LB}_{Y(D(0))}(c_{2},x)dF_{X}(x)
\end{align*}

\subsection{Local Average Treatment Effect}

The Local Average Treatment Effect (LATE), the average treatment effect for the subgroup of compliers, under the standard IV assumptions, is point-identified by the ratio of the reduced form and the first-stage. Replacing Assumption 5 with Assumptions 5A and 5B, we obtain the following bounds for the LATE. Putting the pieces from Sections 3.1 and 3.2 together, we find that $\tau(x)\in\left[\tau^{LB}(c_{1},c_{2},x),\tau^{UB}(c_{1},c_{2},x)\right]$, with

\vspace{-5mm}

\begin{align*}
    &\tau^{LB}(c_{1},c_{2},x)=\max\left(\frac{\tau_{Y}^{LB}(c_{2},x)}{\tau_{D}^{UB}(c_{1},x)},-1\right)\\&\tau^{UB}(c_{1},c_{2},x)=\min\left(\frac{\tau_{Y}^{UB}(c_{2},x)}{\tau_{D}^{LB}(c_{1},x)},1\right)
\end{align*}

In the case of a binary outcome, treatment effects are not greater than $1$ nor smaller than $-1$. Hence, the lower bound is the greatest value between the ratio of the lower bound of the ITT and the upper bound of the first-stage, and $-1$. The upper bound is the smallest value between the ratio of the upper bound of the ITT and the lower bound of the first-stage, and $1$. The unconditional bounds for the LATE are identified by integrating the conditional LATE bounds over the distribution of covariates:

\begin{equation*}
    \tau^{LB}(c_{1},c_{2})=\max\left ( \frac{\int\tau^{LB}_{Y}(c_{2},x)dF_{X}(x)}{\int\tau^{UB}_{D}(c_{1},x)dF_{X}(x)},-1 \right )
\end{equation*}

and

\begin{equation*}
    \tau^{UB}(c_{1},c_{2})=\min\left ( \frac{\int\tau^{UB}_{Y}(c_{2},x)dF_{X}(x)}{\int\tau^{LB}_{D}(c_{1},x)dF_{X}(x)},1 \right )
\end{equation*}

The bounds for the LATE are functions of violations of instrument independence with respect to both potential treatments and potential outcomes. In that sense, this result is similar to the partial identification result presented in Section 5.3 of \cite{jonashesh}, which partially identifies the LATE as a function of the finite-population covariance between the assignment probabilities and the potential outcomes and treatments. In the next section, I show how researchers can identify the set of violations of independence under which a conclusion is invalidated.

\section{Breakdown Analysis}

The fundamental interest here is to understand under which violations of independence a particular conclusion still hold. In this section I define breakdown points for conclusions regarding the firs-stage and the reduced form separately, and a breakdown frontier for conclusions regarding the LATE.

I begin by defining the breakdown point for conclusions of the first-stage. That is, what is the largest value of $c_1$ under which we can conclude that $\mathbb{P}\left(D(1)>D(0)\right)\geq\mu$? First, define the robust region for the conclusion as the set of values of $c_{1}$ under which the conclusion holds:

\begin{equation*}
    RR_{FS}(\mu)=\left\{ c_{1}\in \left [ 0,1 \right ]:\tau^{LB}_{D}(c_{1})\geq\mu\right\}
\end{equation*}

The robust region for the first-stage is the set of values of $c_{1}$ for which the identified set for the share of compliers is above $\mu$. The breakdown point is the value of $c_{1}$ on the boundary of the robust region. Formally, the breakdown point is $c_{1}^{*}$ such that $\tau^{LB}_{D}(c_{1}^{*})=\mu$. The breakdown point is identified by solving the expression $\int\tau^{LB}_{D(1)}(c_{1},x)dF_{X}(x)-\int\tau^{UB}_{D(0)}(c_{1},x)dF_{X}(x)=\mu$ for $c_{1}$. Let $bp_{FS}(\mu)$ denote the solutions of the expression above. The analytical expression for the breakdown point is

\begin{equation*}
    c_{1}^{*}=\min\left\{ \max\left\{ bp_{FS}(\mu),0\right\},1\right\}
\end{equation*} 

That is, the breakdown point is the smallest value of $c_{1}$ which solves the breakdown equation, as long as it is bounded in the unit interval. If that is not case, then the breakdown point depends on the worst-case bounds. If the worst-case bounds lie within the robust region, then the breakdown point is 1. Otherwise, it is equal to 0.

Similarly, when it comes to the reduced form, the breakdown point is the largest value of $c_{2}$ under which one can conclude that $\mathbb{E}\left[Y(D(1))-Y(D(0))\right]\geq\mu$. It is defined implicitly by $c_{2}^{*}$ such that $\tau^{LB}_{Y}(c_{2}^{*})=\mu$. Analogously to the first-stage, the breakdown point for the ITT is identified by solving the expression $\int\tau^{LB}_{Y(D(1))}(c_{2},x)dF_{X}(x)-\int\tau^{UB}_{Y(D(0))}(c_{2},x)dF_{X}(x)=\mu$ for $c_{2}$. Let $bp_{RF}(\mu)$ denote the solutions to the expression above, we obtain the following expression for the breakdown point:

\begin{equation*}
    c_{2}^{*}=\min\left\{ \max\left\{ bf_{RF}(\mu),0\right\},1\right\}
\end{equation*}

When it comes to the LATE, the parameter is partially identified as function of both sensitivity parameters $c_{1},c_{2}$. In that case, the robust region of identification is the set of values of $c_{1}$ and $c_{2}$ under which the conclusion holds. It is defined as

\begin{equation*}
    RR(\mu)=\left\{ \left ( c_{1},c_{2} \right )\in \left [ 0,1 \right ]^{2}:\tau^{LB}(c_{1},c_{2})\geq\mu\right\}
\end{equation*}

The breakdown frontier is the set of values of $c_{1}$ and $c_{2}$ on the boundary of the robust region. Specifically, the breakdown frontier is defined as 

\begin{equation*}
    BF(\mu)=\left\{ \left ( c_{1},c_{2} \right )\in \left [ 0,1 \right ]^{2}:\tau^{LB}(c_{1},c_{2})=\mu\right\}
\end{equation*}

Consider the case where $\mu\neq0$. The breakdown frontier can be identified as a function of $c_{1}$ by solving the following equation for $c_{2}$:

\begin{equation*}
    \frac{\int\tau^{LB}_{Y(D(1))}(c_{2},x)dF_{X}(x)-\int\tau^{UB}_{Y(D(0))}(c_{2},x)dF_{X}(x)}{\int\tau^{UB}_{D}(c_{1},x)}=\mu
\end{equation*}

Let $bf(c_{1},\mu)$ denote the solutions, we obtain the following expression for the breakdown frontier as a function of $c_{1}$:

\begin{equation*}
    BF(c_{1},\mu)=\min\left\{ \max\left\{ bf(c_{1},\mu),0\right\},1\right\}
\end{equation*}

Note that in the case where $\mu=0$, the breakdown frontier collapses to the breakdown point for the ITT. That is, $BF(c_{1},0)=bp_{RF}(0)$.

The shape of the breakdown frontier provides insights on the tradeoffs between these two types of relaxations of independence when researchers are drawing conclusions. The relaxations $c_1$ and $c_2$ are measured in the same unit, which helps the interpretation of the breakdown analysis.

\section{Numerical Illustration}

I illustrate the breakdown approach with a simple numerical illustration. Let $X$ be a binary covariate that follows a Bernoulli distribution with parameter $p_{x}=0.5$ Let $Z$ denote the instrument which also follows a Bernoulli distribution with parameter $p_{z}$. Let $p_{z|x}=0.5$, for the sake of simplicity. Selection into treatment follows a threshold-crossing model as in \cite{vytla}:

\begin{equation*}
    D=\mathbf{1}\left\{ \pi_{z}Z+\pi_{x}X\geq V\right\}
\end{equation*}

where $V$ has a standard uniform distribution. The parameter $\pi_{z}$ is the share of compliers, and is set to be equal to 0.5.

The binary outcome also follows a threshold model, as in Heckman (1978):

\begin{equation*}
    Y=\mathbf{1}\left\{ \beta_{d}D+\beta_{x}X\geq U\right\}
\end{equation*}

where $U$ is also uniformly distributed. The random variables $U$ and $V$ are linearly correlated as in \cite{olsen}, which generates the selection problem. The parameter $\beta_{d}$ is the LATE in this DGP, which is set to be equal to 0.5. Hence, it follows that the ITT is equal to 0.25.

\begin{figure}[t!]
\justify     
\caption{Identified sets for the first-stage and reduced form.}
\label{fig:terms}
\subfigure[First-stage]{\label{fig:terms_overtime}\includegraphics[width=80mm]{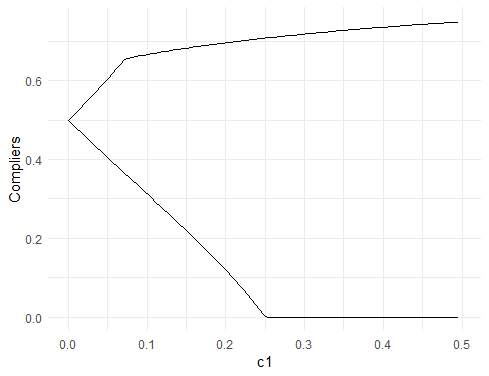}}
\subfigure[Reduced form]{\label{fig:map}\includegraphics[width=80mm]{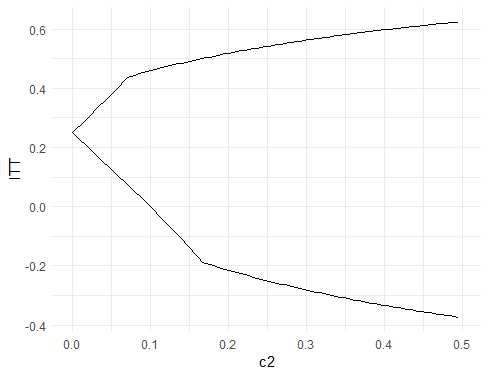}}\\

\scriptsize \noindent \textit{Note:} Figure (a) show the identified set for the share of compliers under the dgp described in Section 6 and different values of $c_{1}$. Figure (b) show the identified set for the ITT under the dgp described in Section 6 and different values of $c_{2}$
\end{figure}

Figure 1 (a) shows the identified set for the first-stage as a function of $c_{1}$. The breakdown point for the conclusion that the share of compliers is greater than zero is $c_{1}^{*}=0.25$. Figure 1 (b) show the identified set of the ITT as a function of $c_{2}$. The breakdown point for the conclusion that the ITT, and hence the LATE, is greater than zero is $c_{1}^{
*}=0.1$. 

Figure 2 shows the breakdown frontier for the LATE. I specify the breakdown frontier for the conclusion that the LATE is greater than 0.25, which is half of its true value. The blue area represents the robust region, that is, this is the set of values for violations of independence under which the conclusion still holds.

\begin{figure}[t]
    \justify
    \centering
    \caption{Breakdown Frontier for the LATE}
    \includegraphics{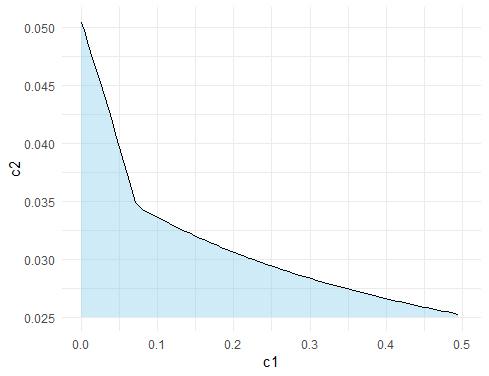}
    \label{fig:enter-label}\\
    \scriptsize \noindent \textit{Note:} Figure 2 plots the breakdown frontier for the conclusion that $\tau\geq0.25$.
\end{figure}

\section{Estimation and Inference}

In this section I study estimation and inference on the identified sets and breakdown values defined above. The estimands for the bounds of assignment effects, the LATE and the breakdown values are  functionals of the conditional cdf of outcomes and treatment given assignment and covariates, the probability of assignment given covariates, and the marginal distribution of the covariates. I propose nonparametric sample analog estimators and derive asymptotic distributional results using a delta method for directionally differentiable functionals. First, I assume we observe a random sample of data.

\textbf{Assumption 6:} The random variables $\left\{ \left ( Y_{i},D_{i},Z_{i},X_{i} \right )\right\}_{i=1}^{n}$ are independently and identically distributed according to the distribution of $\left(Y,D,Z,X\right)$.

Furthermore, I assume the support of covariates is discrete.

\textbf{Assumption 7:} The support of X is discrete and finite. Let $\mathcal{S}(X)=\left\{ x_{1},...,x_{K}\right\}$ up to a finite $K$.

All target parameters are functionals of the underlying parameters $p_{Y|z,x}$, $p_{D|z,x}$, $p_{z|x}$ and $q_{x}=\mathbb{P}\left(X=x\right)$. Their parametric estimators are, respectively, 

\begin{equation*}
    \widehat{p}_{Y|z,x}=\frac{\frac{1}{N}\sum_{i=1}^{N}\mathbf{1}\left ( Y_{i}=1 \right )\mathbf{1}\left ( Z_{i}=z,X_{i}=x \right )}{\frac{1}{N}\sum_{i=1}^{N}\mathbf{1}\left ( Z_{i}=z,X_{i}=x \right )},
\end{equation*}

\begin{equation*}
    \widehat{p}_{D|z,x}=\frac{\frac{1}{N}\sum_{i=1}^{N}\mathbf{1}\left ( D_{i}=1 \right )\mathbf{1}\left ( Z_{i}=z,X_{i}=x \right )}{\frac{1}{N}\sum_{i=1}^{N}\mathbf{1}\left ( Z_{i}=z,X_{i}=x \right )},
\end{equation*}

\begin{equation*}
    \widehat{p}_{z|x}=\frac{\frac{1}{N}\sum_{i=1}^{N}\mathbf{1}\left ( Z_{i}=z,X_{i}=x \right )}{\frac{1}{N}\sum_{i=1}^{N}\mathbf{1}\left ( X_{i}=x \right )}
\end{equation*}and

\begin{equation*}
    \widehat{q}_{x}=\frac{1}{N}\sum_{i=1}^{N}\mathbf{1}\left ( X_{i}=x \right )
\end{equation*}

These quantities converge uniformly to a Gaussian process at a $\sqrt{N}$-rate; see Lemma B1 in Appendix B. Next, consider the bounds for potential treatments and outcomes. I estimate these bounds by a plug-in estimator of the quantities above. First, I introduce an additional assumption:

\textbf{Assumption 8:} For all $x\in\mathcal{S}(X)$, we have $\max\left\{ c_{1},c_{2}\right\}<\min\left\{ p_{1|x},p_{0|x}\right\}$ and $\tau^{LB}_{D}(c_{1},x)>0$.

Assumption 8 is a technical assumption which simplifies the bounds for potential treatments to

\vspace{-10mm}

\begin{align*}
    &\tau^{LB}_{D(z)}(c_{1},x)=\max\left\{ \frac{p_{D|z,x}p_{z|x}}{p_{z|x}+c_{1}},\frac{p_{D|z,x}p_{z|x}-c_{1}}{p_{z|x}-c_{1}},p_{D|z,x}p_{z|x}\right\}\\&\tau_{D(z)}^{UB}(c_{1},x)=\min\left\{ \frac{p_{D|z,x}p_{z|x}}{p_{z|x}-c_{1}},\frac{p_{D|z,x}p_{z|x}+c_{1}}{p_{z|x}+c_{1}},p_{D|z,x}p_{z|x}+(1-p_{z|x})\right\}
\end{align*}

and the bounds for potential outcomes to

\vspace{-10mm}

\begin{align*}
    &\tau^{LB}_{Y(D(z))}(c_{2},x)=\max\left\{ \frac{p_{Y|z,x}p_{z|x}}{p_{z|x}+c_{2}},\frac{p_{Y|z,x}p_{z|x}-c_{2}}{p_{z|x}-c_{2}},p_{Y|z,x}p_{z|x}\right\}\\&\tau_{Y(D(z))}^{UB}(c_{2},x)=\min\left\{ \frac{p_{Y|z,x}p_{z|x}}{p_{z|x}-c_{2}},\frac{p_{Y|z,x}p_{z|x}+c_{2}}{p_{z|x}+c_{2}},p_{Y|z,x}p_{z|x}+(1-p_{z|x})\right\}
\end{align*}

This simplification is particularly important to guarantee \textit{Hadamard directional differentiablity} of the estimators. Also, it modifies that standard relevance assumption to the partially identified case, ensuring that the upper bound for the LATE is not subject to the weak instrument problem.

The bounds for potential quantities are estimated by replacing the population quantities in the expressions above by its sample analogues. In Lemma B2 of Appendix B I show that the estimators for potential quantities converge in distribution to a nonstandard distribution given by a continuous transformation of Gaussian processes. This result is the building block for deriving the asymptotic properties of the estimators for bounds and breakdown values. 

First, consider the bounds for the first-stage. The plug-in estimators for the lower and the upper bound, respectively are

\begin{align*}
    &\widehat{\tau}^{LB}_{D}(c_{1},x)=\widehat{\tau}^{LB}_{D(1)}(c_{1},x)-\widehat{\tau}^{UB}_{D(0)}(c_{1},x )\\&\widehat{\tau}^{UB}_{D}(c_{1},x)=\widehat{\tau}^{UB}_{D(1)}(c_{1},x)-\widehat{\tau}^{LB}_{D(0)}(c_{1},x )
\end{align*}

The unconditional bounds are estimated by integrating the estimates of conditional bounds over the empirical distribution of covariates:

\begin{equation*}
    \widehat{\tau}^{LB}_{D}(c_{1})=\frac{1}{N}\sum_{i=1}^{N}\widehat{\tau}^{LB}_{D}(c_{1},X_{i})
\end{equation*}

and

\begin{equation*}
    \widehat{\tau}^{UB}_{D}(c_{1})=\frac{1}{N}\sum_{i=1}^{N}\widehat{\tau}^{UB}_{LB}(c_{1},X_{i})
\end{equation*}

Now consider the bounds for the reduced form. The plug-in estimators are

\begin{align*}
    & \widehat{\tau}_{Y}^{LB}(c_{2},x)=\widehat{\tau}^{LB}_{Y(D(1))}(c_{2},x)-\widehat{\tau}^{UB}_{Y(D(0))}(c_{2},x)\\&\widehat{\tau}_{Y}^{UB}(c_{2},x)=\widehat{\tau}^{UB}_{Y(D(1))}(c_{2},x)-\widehat{\tau}^{LB}_{Y(D(0))}(c_{2},x)
\end{align*}

The unconditional bounds are obtained by integrating the estimates over the empirical distribution of covariates:

\begin{equation*}
    \widehat{\tau}^{LB}_{Y}(c_{2})=\frac{1}{N}\sum_{i=1}^{N}\widehat{\tau}^{LB}_{Y}(c_{2},X_{i})
\end{equation*}

and

\begin{equation*}
    \widehat{\tau}^{UB}_{Y}(c_{2})=\frac{1}{N}\sum_{i=1}^{N}\widehat{\tau}^{UB}_{Y}(c_{2},X_{i})
\end{equation*}

The asymptotic distribution of the estimator for these bounds is formalized in the proposition below:

\begin{proposition}
    Assume Assumptions 1-3, 5A, 5B and 6-8 hold. Then,

    \begin{equation*}
        \sqrt{N}\begin{pmatrix}
\widehat{\tau}_{D}^{LB}(c_{1})-\tau^{LB}_{D}(c_{1}) \\\widehat{\tau}_{D}^{UB}(c_{1})-\tau_{D}^{UB}(c_{1})
\end{pmatrix}\overset{d}{\rightarrow}\textbf{Z}_{FS}(d,z,x,c_{1})
    \end{equation*}

    and

    \begin{equation*}
        \sqrt{N}\begin{pmatrix}
\widehat{\tau}_{Y}^{LB}(c_{2})-\tau^{LB}_{Y}(c_{2}) \\\widehat{\tau}_{Y}^{UB}(c_{2})-\tau_{Y}^{UB}(c_{2})
\end{pmatrix}\overset{d}{\rightarrow}\textbf{Z}_{RF}(y,z,x,c_{2})
    \end{equation*}

    where $\textbf{Z}_{FS}(d,z,x,c_{1})$ and $\textbf{Z}_{RF}(y,z,x,c_{2})$ are Gaussian Elements defined in the Section 1 of Appendix A.
\end{proposition}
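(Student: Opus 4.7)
The plan is to obtain the limit distribution by composing three layers: the joint Gaussian limit of the basic sample moments, the directional delta method applied to the closed-form bound expressions, and a linear aggregation over the finitely many covariate cells. First, I would invoke Lemma B1, which by the central limit theorem and Assumptions 6--7 gives the joint $\sqrt{N}$-convergence of the vector of sample analogues $(\widehat{p}_{Y|z,x},\widehat{p}_{D|z,x},\widehat{p}_{z|x},\widehat{q}_x)_{z\in\{0,1\},x\in\mathcal{S}(X)}$ to a mean-zero Gaussian element, call it $\mathbb{G}$. Because $\mathcal{S}(X)$ has cardinality $K<\infty$, we are essentially working in a finite-dimensional Euclidean space, so ``uniform'' convergence reduces to multivariate CLT.

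Second, I would treat the conditional bounds as functionals of $\mathbb{G}$. Under Assumption 8 the indicator $\mathbf{1}(p_{z|x}>c_1)$ is identically one on a neighborhood of the truth, so the simplified bounds in the statement apply; each of the three candidate expressions inside the $\max$ or $\min$ is a continuously (hence fully Hadamard) differentiable rational function of $(p_{\bullet|z,x},p_{z|x})$ because its denominator is bounded away from zero. The $\max$ and $\min$ operators are Hadamard directionally differentiable (see Shapiro, 1990), with directional derivative equal to the $\max$ (resp.\ $\min$) of the derivatives of the active arguments at the point of evaluation. Composing, the maps $(p_{D|z,x},p_{z|x})\mapsto \tau^{LB}_{D(z)}(c_1,x)$ and $\tau^{UB}_{D(z)}(c_1,x)$ are Hadamard directionally differentiable, and likewise for $\tau^{LB}_{Y(D(z))}$ and $\tau^{UB}_{Y(D(z))}$. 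Applying the Fang--Santos delta method (cited in the paper), the plug-in estimators of the conditional potential-quantity bounds converge jointly, at the $\sqrt{N}$-rate, to the corresponding directional derivatives evaluated at the Gaussian limit $\mathbb{G}$.

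Third, I would aggregate. The first-stage bound estimators $\widehat{\tau}^{LB}_D(c_1,x)$ and $\widehat{\tau}^{UB}_D(c_1,x)$ are linear combinations of the conditional potential-treatment bounds (taking the lower-bound difference does not involve a further $\max$ with $0$ because Assumption 8 forces $\tau_D^{LB}(c_1,x)>0$), and similarly for the reduced-form bounds. The unconditional estimators are the finite sums $\widehat{\tau}_D^{LB}(c_1)=\sum_{x}\widehat{\tau}_D^{LB}(c_1,x)\widehat{q}_x$, and analogously for the upper bound and for the reduced form. The mapping $(\widehat{\tau}(c,\cdot),\widehat{q})\mapsto \sum_x \widehat{\tau}(c,x)\widehat{q}_x$ is continuously differentiable on the relevant domain, so another application of the delta method composes the directional derivative from step two with this linear aggregation. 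Stacking the lower- and upper-bound components yields joint convergence of $\sqrt{N}(\widehat{\tau}_D^{LB}(c_1)-\tau_D^{LB}(c_1),\widehat{\tau}_D^{UB}(c_1)-\tau_D^{UB}(c_1))^{\top}$ to the continuous-transformation-of-Gaussian limit I would denote $\mathbf{Z}_{FS}(d,z,x,c_1)$, and the identical argument with $Y$ in place of $D$ and $c_2$ in place of $c_1$ delivers $\mathbf{Z}_{RF}(y,z,x,c_2)$.

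The main obstacle is the careful treatment of the non-smoothness introduced by the $\max$ and $\min$: at points where two or three of the candidate expressions tie, the limit is a non-Gaussian continuous (piecewise linear) functional of a Gaussian process, which is why the Fang--Santos delta method is needed rather than the classical one. Assumption 8 is essential here because it rules out the worst-case kink coming from $p_{z|x}-c_1\downarrow 0$ in the upper bound and ensures positivity of $\tau_D^{LB}(c_1,x)$, eliminating an outer $\max(0,\cdot)$; both simplifications are what make the composed functional tractable. The remaining bookkeeping — joint convergence across $(z,x)$ cells and across lower/upper bounds — is routine given the finiteness of $\mathcal{S}(X)$ in Assumption 7.
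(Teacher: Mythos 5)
Your proposal is correct and follows essentially the same route as the paper: Lemma B1 gives the joint Gaussian limit of the sample moments, Lemma B2 applies the Fang--Santos directional delta method to the $\max$/$\min$ defining the conditional potential-quantity bounds, and the unconditional bounds follow by differencing and aggregating over the finite covariate support. The only substantive difference is that you invoke Assumption 8 ($\tau^{LB}_{D}(c_{1},x)>0$) to discard the outer $\max\{0,\cdot\}$ on the first-stage lower bound, whereas the paper retains that operator and computes its Hadamard directional derivative explicitly --- but under Assumption 8 that derivative reduces to the linear term you use, so the two arguments coincide.
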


Now consider the estimation for the breakdown point for the claim that $\tau_{D}\geq \mu$. We focus on that case where $\tau^{LB}_{D}(0)>\mu$, which implies that $c_{1}^{*}>0$. Define the estimator for the breakdown point $c_{1}^{*}$ as

\begin{equation*}
    \widehat{c}_{1}^{*}=\inf\left\{ c_{1}\in \left [ 0,1 \right ]:\widehat{\tau}_{D}^{LB}(c_{1})\leq\mu\right\}
\end{equation*}

and it is obtained by replacing the population quantities from $bf_{FS}(\mu)$ with the sample analogues defined in this section. The estimator for the breakdown frontier for the first-stage is thus $\widehat{c}_{1}^{*}=\min\left\{ \max\left\{ \widehat{b}f_{FS}(\mu),0\right\},1\right\}$

Similarly, when it comes to the estimation for the breakdown point for the claim that $\tau_{Y}\geq \mu$, define the estimator for the breakdown point $c_{2}^{*}$ as

\begin{equation*}
    \widehat{c}_{2}^{*}=\inf\left\{ c_{2}\in \left [ 0,1 \right ]:\widehat{\tau}_{Y}^{LB}(c_{2})\leq\mu\right\}
\end{equation*}

which is obtained  by replacing the population quantities from $bf_{RF}(\mu)$ with the sample analogues. The estimator for the breakdown frontier for the first-stage is thus $\widehat{c}^{*}_{2}=\min\left\{ \max\left\{ \widehat{b}f_{FS}(\mu),0\right\},1\right\}$.

I now provide a formal result regarding the asymptotic distribution of $\widehat{c}_{1}^{*}$ and $\widehat{c}_{2}^{*}$:

\begin{theorem}
    Assume Assumptions 1-3, 5A, 5B and 6-8 hold. Furthermore, assume that $c_{1},c_{2}\in\left [ 0,\overline{C} \right ]$. Then, 

    \begin{equation*}
        \sqrt{N}\left ( \widehat{c}_{1}^{*}-c_{1}^{*} \right )\overset{d}{\rightarrow}\textbf{Z}_{FS}^{BP}
    \end{equation*}

    and 

    \begin{equation*}
        \sqrt{N}\left ( \widehat{c}_{2}^{*}-c_{2}^{*} \right )\overset{d}{\rightarrow}\textbf{Z}_{RF}^{BP}
    \end{equation*}

    where $\textbf{Z}_{FS}^{BP}$ and $\textbf{Z}_{RF}^{BP}$ are Gaussian random variables defined in Section 2 of Appendix A.
\end{theorem}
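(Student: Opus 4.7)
The plan is to express $\widehat{c}_1^*$ and $\widehat{c}_2^*$ as Hadamard directionally differentiable functionals of the estimated bound processes and apply the delta method of \cite{fangsantos}. I describe the argument for $\widehat{c}_1^*$ in detail; the case of $\widehat{c}_2^*$ is identical with $\tau_D^{LB}$ replaced by $\tau_Y^{LB}$ and $G_D$ replaced by the analogous limit process $G_Y$.

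First, I would strengthen Proposition 1 from pointwise to uniform convergence of $\widehat{\tau}_D^{LB}(\cdot)$ on $c_1 \in [0,\bar{C}]$. Under Assumption 8, the restriction $\bar{C} < \min_x\{p_{1|x}, p_{0|x}\}$ keeps all denominators bounded away from zero, so $\tau_D^{LB}(c_1, x)$ is a piecewise smooth, jointly continuous function of $(c_1, p_{D|z,x}, p_{z|x})$. Because Lemma B1 delivers $\sqrt{N}$-rate Gaussian convergence of the probability estimators, the continuous mapping theorem on $\ell^\infty([0,\bar{C}])$, combined with the finite-dimensional average over $X$ under Assumption 7, yields
\begin{equation*}
\sqrt{N}\bigl(\widehat{\tau}_D^{LB}(\cdot) - \tau_D^{LB}(\cdot)\bigr) \rightsquigarrow G_D(\cdot) \quad \text{in } \ell^\infty([0,\bar{C}]),
\end{equation*}
for a mean-zero Gaussian process $G_D$ with almost-surely continuous sample paths.

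Second, I would identify the breakdown point as the output of the functional
\begin{equation*}
\phi(f) = \inf\bigl\{c \in [0,\bar{C}] : f(c) \leq \mu\bigr\},
\end{equation*}
and establish that $\phi$ is Hadamard differentiable at $f = \tau_D^{LB}$ tangentially to the set of continuous directions. Assumption 8 implies that in a neighborhood of the interior point $c_1^*$, the active branch of each $\max$ and $\min$ inside $\tau_D^{LB}$ is fixed, so $\tau_D^{LB}$ is continuously differentiable and strictly decreasing there. The implicit function theorem then gives
\begin{equation*}
\phi'_{\tau_D^{LB}}(h) = -\frac{h(c_1^*)}{\partial \tau_D^{LB}(c_1^*)/\partial c_1},
\end{equation*}
which is a continuous linear functional of $h$. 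The functional delta method then yields
\begin{equation*}
\sqrt{N}\bigl(\widehat{c}_1^* - c_1^*\bigr) \xrightarrow{d} -\frac{G_D(c_1^*)}{\partial \tau_D^{LB}(c_1^*)/\partial c_1} \equiv \textbf{Z}_{FS}^{BP},
\end{equation*}
which is Gaussian as a linear image of a Gaussian process evaluated at a single point.

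The main obstacle is verifying strict monotonicity of $\tau_D^{LB}$ at $c_1^*$ with nonzero derivative, which underlies both the implicit function theorem step and the linearity of $\phi'_{\tau_D^{LB}}$ (and hence the Gaussian limit, as opposed to the more general non-Gaussian limit that arises when one has only Hadamard directional differentiability). A subsidiary technical concern is the kink points where the active branches of the $\max$ and $\min$ operators switch; under Assumption 8 these kinks form a finite set in $[0,\bar{C}]$ and can be excluded from a neighborhood of $c_1^*$ by restricting attention to the relevant active branch, so the delta method applies on that branch and the argument goes through.
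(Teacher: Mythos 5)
Your plan matches the paper's proof: both rely on uniform weak convergence of $\widehat{\tau}_{D}^{LB}(\cdot)$ over $[0,\overline{C}]$, Hadamard differentiability of the infimum/inverse map at $\tau_{D}^{LB}$ with derivative $-h(c_{1}^{*})\big/\bigl(\partial\tau_{D}^{LB}(c_{1}^{*})/\partial c_{1}\bigr)$, and the functional delta method, with the identical argument repeated for the reduced form. The only step the paper includes that you omit is the verification that the set $\{c_{1}:\widehat{\tau}_{D}^{LB}(c_{1})\leq\mu\}$ is nonempty with probability approaching one (so the sample infimum is well defined and coincides with the estimator), but this is a minor technicality and your treatment of the active-branch and nonzero-derivative issues is if anything more explicit than the paper's.
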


Finally, consider the estimators for the bounds and breakdown frontier of the LATE. The estimators for the lower and upper bound are respectively

\begin{equation*}
    \widehat{\tau}^{LB}(c_{1},c_{2})=\max\left\{ \frac{\widehat{\tau}^{LB}_{Y}(c_{2})}{\widehat{\tau}_{D}^{UB}(c_{1})},-1\right\}
\end{equation*}

and

\begin{equation*}
    \widehat{\tau}^{LB}(c_{1},c_{2})=\min\left\{ \frac{\widehat{\tau}^{UB}_{Y}(c_{2})}{\widehat{\tau}_{D}^{LB}(c_{1})},1\right\}
\end{equation*}

The next lemma formalizes the asymptotic distribution of the bounds:

\begin{proposition}
    Suppose Assumptions 1-3, 5A, 5B and 6-8 hold. Then,

    \begin{equation*}
        \sqrt{N}\begin{pmatrix}
\widehat{\tau}^{LB}(c_{1},c_{2})-\tau^{LB}(c_{1},c_{2}) \\
\widehat{\tau}^{UB}(c_{1},c_{2})-\tau^{UB}(c_{1},c_{2})
\end{pmatrix}\overset{d}{\rightarrow}\textbf{Z}_{\tau}(y,d,z,x,c_{1},c_{2})
    \end{equation*}
\end{proposition}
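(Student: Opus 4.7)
The plan is to derive the limit as a two-stage application of the functional delta method, relying on Hadamard directional differentiability throughout. First, I would lift Proposition 1 to joint weak convergence of all four underlying bound estimators. Because $\widehat{\tau}_D^{LB}(c_1)$, $\widehat{\tau}_D^{UB}(c_1)$, $\widehat{\tau}_Y^{LB}(c_2)$, and $\widehat{\tau}_Y^{UB}(c_2)$ are Hadamard directionally differentiable transformations of the common empirical process $(\widehat{p}_{Y|z,x}, \widehat{p}_{D|z,x}, \widehat{p}_{z|x}, \widehat{q}_x)$ --- whose joint $\sqrt{N}$-rate Gaussian limit is Lemma B1 --- the chain rule for Hadamard directional derivatives yields joint weak convergence of the four-dimensional vector $(\widehat{\tau}_D^{LB}(c_1), \widehat{\tau}_D^{UB}(c_1), \widehat{\tau}_Y^{LB}(c_2), \widehat{\tau}_Y^{UB}(c_2))$ to a limit $(\mathbf{Z}_{FS}, \mathbf{Z}_{RF})$ whose marginals coincide with those given by Proposition 1.

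Next, I would write the LATE bound estimators as
\begin{equation*}
\widehat{\tau}^{LB}(c_1, c_2) = \varphi_{-1}\!\left( \widehat{\tau}_Y^{LB}(c_2) / \widehat{\tau}_D^{UB}(c_1) \right), \qquad \widehat{\tau}^{UB}(c_1, c_2) = \varphi_{+1}\!\left( \widehat{\tau}_Y^{UB}(c_2) / \widehat{\tau}_D^{LB}(c_1) \right),
\end{equation*}
where $\varphi_{-1}(t) = \max(t, -1)$ and $\varphi_{+1}(t) = \min(t, 1)$. Assumption 8 guarantees $\tau_D^{LB}(c_1) > 0$, and monotonicity gives $\tau_D^{UB}(c_1) \geq \tau_D^{LB}(c_1) > 0$, so both denominators are bounded away from zero in a neighborhood of the population values; the ratio maps are therefore Fr\'echet differentiable there, with linear derivative $(h_a, h_b) \mapsto h_a/b - a\, h_b/b^2$. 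The outer maps $\varphi_{\pm 1}$ are the canonical examples of Hadamard directionally but not fully differentiable functions, with
\begin{equation*}
\varphi'_{-1}(t; h) = \begin{cases} h, & t > -1, \\ \max(h, 0), & t = -1, \\ 0, & t < -1, \end{cases}
\end{equation*}
and the symmetric expression for $\varphi'_{+1}$. Applying the directional delta method of \cite{fangsantos} to this composition, together with the joint convergence from the first step, yields the stated weak convergence, with $\mathbf{Z}_\tau$ defined as the image of $(\mathbf{Z}_{FS}, \mathbf{Z}_{RF})$ under the chained directional derivative.

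The main obstacle is tracking the loss of full differentiability at the clippings $\pm 1$. On the interior of the truncation region the limit is Gaussian and admits a closed form via the influence-function representations of $\mathbf{Z}_{FS}$ and $\mathbf{Z}_{RF}$ already obtained in Proposition 1; at the boundary, where the population ratio coincides with $\pm 1$, the limit is a non-Gaussian transformation of a Gaussian element produced by the kinked derivative above. This is precisely why the proposition is stated in terms of a generic limit $\mathbf{Z}_\tau$ rather than a Gaussian one, and why the \cite{fangsantos} framework, rather than the classical delta method, is required.
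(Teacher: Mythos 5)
Your proposal is correct and follows essentially the same route as the paper: both express the LATE bounds as a Hadamard directionally differentiable map of the first-stage and reduced-form bounds (quotient rule for the ratios, kinked derivative at the $\pm 1$ clippings), and conclude via the \cite{fangsantos} delta method applied to the convergence established in Proposition 1. If anything, your explicit first step establishing \emph{joint} weak convergence of the four bound estimators via the common underlying empirical process is a point the paper's proof leaves implicit, and your directional derivatives of $\varphi_{\pm1}$ at the kinks are stated with the correct orientation.
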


Denote the estimated breakdown frontier for the conclusion that $\tau\geq\mu$ by

\begin{equation*}
    \widehat{BF}(c_{1},\mu)=\min\left\{ \max\left\{ \widehat{bf}(c_{1},\mu),0\right\},1\right\}
\end{equation*}

\

I show that the estimated breakdown frontier converges in distribution.

\begin{theorem}
    Let Assumptions 1-3, 5A, 5B and 6-8 hold. Furthermore, let $\mathcal{C}\subset \left [ 0,\overline{C} \right ]$ and $\mathcal{M}\subset \left [ -1,1 \right ]$ be finite grids of points. Then,

    \begin{equation*}
        \sqrt{N}\left ( \widehat{BF}(c_{1},\mu) -BF(c_{1},\mu)\right )\overset{d}{\rightarrow}\textbf{Z}_{BF}(c_{1},\mu),
    \end{equation*}

a tight random element of $l^{\infty}\left ( \mathcal{C}\times\mathcal{M} \right )$.
\end{theorem}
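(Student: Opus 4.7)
The plan is to apply the Fang–Santos functional delta method to a composition of three maps that together produce $\widehat{BF}(c_1,\mu)$ from the underlying sample moments. Writing $g(\cdot,\cdot)$ for the LATE lower-bound functional
\begin{equation*}
    g(c_1,c_2) \;=\; \frac{\int \tau^{LB}_{Y(D(1))}(c_2,x)\,dF_X(x) - \int \tau^{UB}_{Y(D(0))}(c_2,x)\,dF_X(x)}{\int \tau^{UB}_D(c_1,x)\,dF_X(x)},
\end{equation*}
the breakdown frontier is $BF(c_1,\mu) = T(\psi(c_1,\mu))$ where $\psi(c_1,\mu)$ solves $g(c_1,\cdot) = \mu$ for $c_2$ and $T(t) = \min\{\max\{t,0\},1\}$ is the truncation to $[0,1]$. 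The first map — from the sample analogues to the process $g(\cdot,\cdot)$ — is handled by Proposition 2 and Lemma B2, which give joint uniform weak convergence at the $\sqrt{N}$-rate of $(\widehat{\tau}_Y^{LB}(c_2),\widehat{\tau}_Y^{UB}(c_2),\widehat{\tau}_D^{LB}(c_1),\widehat{\tau}_D^{UB}(c_1))$ over $c_1,c_2\in[0,\overline{C}]$ to a tight Gaussian-type limit. Because $\mathcal{C}$ and $\mathcal{M}$ are finite grids and the denominator $\int \tau^{UB}_D(c_1,x)\,dF_X(x)$ is bounded away from zero by Assumption 8, continuity of division plus the continuous mapping theorem delivers joint convergence of $\widehat{g}(c_1,c_2)$ on the grid.

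Next I would verify Hadamard directional differentiability of the implicit-inversion map $\psi$. Under Assumption 8 the closed-form expressions in Lemmas 1 and 2 make $c_2\mapsto \int \tau^{LB}_{Y(D(1))}(c_2,x)\,dF_X(x) - \int \tau^{UB}_{Y(D(0))}(c_2,x)\,dF_X(x)$ a piecewise-smooth, monotonically decreasing function of $c_2$ with one-sided derivatives everywhere (the kinks being the points where the active argument of the inner max/min switches). Because the denominator does not depend on $c_2$, $g(c_1,\cdot)$ inherits this strict monotonicity and piecewise smoothness, so $\psi(c_1,\mu)$ is uniquely defined whenever it lies in the open interval where the derivative is non-zero. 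A standard implicit-function argument at such regular points shows that the inversion map is Hadamard directionally differentiable as a map from perturbations $h$ of $g$ into perturbations of $\psi$, with directional derivative
\begin{equation*}
    \psi'_{g}[h](c_1,\mu) \;=\; -\,\frac{h(c_1,\psi(c_1,\mu))}{\partial_{c_2} g(c_1,\psi(c_1,\mu))},
\end{equation*}
replaced by one-sided analogues at kink points. Finally, the truncation $T$ is Hadamard directionally differentiable on $\mathbb{R}$ with derivative equal to the identity on $(0,1)$, zero outside $[0,1]$, and appropriate one-sided forms at $\{0,1\}$.

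With all three maps Hadamard directionally differentiable, the chain rule (Proposition 3.1 of \citet{fangsantos}) yields Hadamard directional differentiability of the composition $T\circ\psi\circ g$ tangentially to the support of the Gaussian limit. Invoking the Fang–Santos delta method (their Theorem 2.1) then gives
\begin{equation*}
    \sqrt{N}\bigl(\widehat{BF}(c_1,\mu) - BF(c_1,\mu)\bigr) \;\overset{d}{\rightarrow}\; \mathbf{Z}_{BF}(c_1,\mu) \;\equiv\; (T\circ\psi)'_{g}\bigl[\mathbf{Z}_{LATE}\bigr](c_1,\mu),
\end{equation*}
where $\mathbf{Z}_{LATE}$ is the Gaussian limit of $\sqrt{N}(\widehat g - g)$ implied by Proposition 2. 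Tightness in $\ell^{\infty}(\mathcal{C}\times\mathcal{M})$ is automatic because the grid is finite, so weak convergence reduces to joint convergence of finite-dimensional distributions, which follows from the joint Gaussian convergence of $\widehat g$ on the same grid.

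The main obstacle is step two: establishing Hadamard directional differentiability of the implicit-inversion map in a way that accommodates the piecewise structure of the bound functionals. In particular, one must argue that the points at which the inner max/min is attained change only on a set of sensitivity parameters of Lebesgue measure zero, so that the ``regular'' directional derivative above applies on the grid $\mathcal{C}\times\mathcal{M}$ with probability one, while the truncation at $\{0,1\}$ is simultaneously handled by one-sided derivatives. Extending the conclusion to a uniform convergence statement over a continuum of $(c_1,\mu)$ would require stronger equicontinuity arguments, but restricting to a finite grid, as the theorem does, avoids those complications.
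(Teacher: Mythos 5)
Your proof follows the same overall strategy as the paper's --- a delta method for Hadamard directionally differentiable maps applied to the functional that produces $\widehat{BF}(c_{1},\mu)$ from the bound processes --- but it is substantially more explicit than the paper's own argument, which consists of two sentences asserting that the numerator and denominator converge uniformly over $\mathcal{C}\times\mathcal{M}$ and that the Delta Method then applies. The two steps you isolate as the real content of the theorem, namely the Hadamard directional differentiability of the implicit-inversion map $\psi$ solving $g(c_{1},\cdot)=\mu$ and of the truncation $T(t)=\min\{\max\{t,0\},1\}$, are exactly the steps the paper leaves implicit; the closest the paper comes is in its proof of Theorem 1, where the analogous inversion derivative $-h(c^{*})\big/\tfrac{\partial \tau^{LB}_{D}(c^{*})}{\partial c_{1}}$ appears, and your formula $\psi'_{g}[h]=-h(c_{1},\psi(c_{1},\mu))\big/\partial_{c_{2}}g(c_{1},\psi(c_{1},\mu))$ is the correct two-parameter analogue. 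Your observation that tightness on a finite grid reduces to finite-dimensional convergence is also correct and is presumably why the theorem restricts to finite $\mathcal{C}$ and $\mathcal{M}$. One caveat applies equally to your sketch and to the paper: the inversion step requires that $\partial_{c_{2}}g$ be bounded away from zero at the solution (and that the solution be interior, or else that the one-sided truncation derivatives be invoked), and neither Assumption 8 nor the other stated assumptions guarantee this; the paper itself leaves a placeholder (``Assumptions ???'') at the corresponding point in the proof of Theorem 1, so this regularity condition should be stated explicitly rather than treated as automatic.
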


Since the limiting process is non-Gaussian, inference on the breakdown values will not be based on standard errors. The processes’ distribution is characterized fully by the expressions in Appendices B1 and B2, but obtaining analytical estimates of functionals of these processes is challenging. In the next subsection I give describe a bootstrap procedure that can be used to construct confidence intervals for the breakdown points and confidence bands for the breakdown frontier.

The breakdown points and frontier estimators can be obtained using standard root
finding algorithms, such as Matlab's \textit{fzero} and R's \textit{findZeros}. The solutions provide the estimates.

\subsection{Bootstrap Inference}

Before describing the procedure, I introduce some notation. Let $\mathcal{F}_{i}=\left(Y_{i},D_{i},Z_{i},X_{i}\right)$ and $F_{1:N}=\left\{\mathcal{F}_{1},...,\mathcal{F}_{N}\right\}$. Let $\widehat{\theta}$ denote the estimator of a parameter $\theta_{0}$ based on $\mathcal{F}_{1:N}$. Let $\textbf{A}^{*}_{1:N}\equiv\sqrt{N}\left(\widehat{\theta}^{*}-\widehat{\theta}\right)$ where $\widehat{\theta}^{*}$ is a draw from the nonparametric bootstrap distribution of $\widehat{\theta}$. Suppose $\textbf{A}$ is the tight limiting process of $\sqrt{N}\left(\widehat{\theta}-\theta_{0}\right)$. Bootstrap consistency is given by weak convergence in probability conditional on $\mathcal{F}_{1:N}$, that is,

\begin{equation*}
    \sup_{h\in BL_{1}}\left | \mathbb{E}\left [ h(\textbf{A}^{*}_{1:N})|\mathcal{F}_{1:N} \right ]-\mathbb{E}\left [ h(\textbf{A}) \right ]\right |=o_{p}(1)
\end{equation*}

where $BL_1$ denotes the set of Lipschitz functions into $\mathbb{R}$ with Lipschitz constant no greater than 1. For $\theta_{0}$ and $\widehat{\theta}$, I focus on the parameters introduced in Section 3 and their sample analogue estimators, which are plugged-in in the bounds estimators.

Let $\textbf{Z}_{1:N}=\sqrt{N}\left(\widehat{\theta}^{*}-\widehat{\theta}\right)$. Theorem 3.6.1 of van der Vaart and Wellner (1996) implies the bootstrap consistency of $\textbf{Z}_{1:N}$. Since the parameters of interest are Hadamard differentiable functionals of $\theta_{0}$, it follows from \cite{fangsantos} that the nonparametric bootstrap can be used to do inference on the identified sets and breakdown values.

For the breakdown points of the first-stage and the reduced form, the bootstrap procedure can be used to construct one-sided confidence intervals as in \cite{klinesantos}. For the breakdown frontier of the LATE, the bootstrap can be used to construct uniform one-sided confidence bands as in \cite{mastenpoirier2020}.

\section{Partial Identification under joint \textit{$c$-dependence}}

The results from Sections 3 and 4 provide the breakdown analysis framework for IV settings with binary outcomes in the case where the assumption of instrument independence is replaced by a bounded dependence assumption, that consider the probability of assignment conditional on potential treatments and potential outcomes separately.

Relaxing the independence assumption in terms of $c_1$ and \textit{$c_{2}$-dependence} allows the researcher to conduct breakdown analysis for the share of compliers and the ITT separately, while also allowing for the possibility of assessing tradeoffs between these assumptions in the breakdown frontier for the LATE.

Despite the several desirable features of this approach, it does not provide a sharp identified set for the LATE. In this section, I derive sharp bounds for the LATE under a joint \textit{$c$-dependence assumption}, which I define below:

\textbf{Definition}:  Let $x\in\mathcal{S}(X)$. Let $c$ be a scalar between 0 and 1. We say that $Z$ is joint \textit{$c$-dependent} with $\left(Y(D(z)),D(z)\right)$ given $X=x$ if 

\begin{equation*}
    \sup_{(y,d)\in\left\{0,1\right\}^{2}}\left |\mathbb{P}\left ( Z=1|Y(D(z))=y,D(z)=d,X=x \right )-\mathbb{P}\left ( Z=1|X=x \right ) \right |\leq c
\end{equation*}

As in Section 3, the sensitivity parameter captures deviations from the independence assumption in terms of the distance in probability units between the probability of assignment given covariates and the probability of assignment given covariates and potential quantities. Note that $c=0$ is the case where independence holds, and the target parameters in the setting are point identified. Throughout this section, I assume joint \textit{$c$-dependence}:

\textbf{Assumption 9:} $Z$ is joint \textit{$c$-dependent} with $\left(Y(D(1)),D(1)\right)$ given $X$ and $\left(Y(D(0)),D(0)\right)$ given $X$. Next, I derive the sharp identified set for potential quantities and the LATE:

\begin{theorem}
    For a random variable $Q\in\left\{Y,D\right\}$, denote its potential value associated to assignment $z$ as $Q(z)$. Suppose Assumptions 1-3 and 6-9 hold. The sharp identified set for potential quantities is $\tau_{Q(z)}(x)\in\left [ \tau^{LB}_{Q(z)}(c,x),\tau^{UB}_{Q(z)}(c,x) \right ]$, where

    \begin{align*}
       \tau^{LB}_{Q(z)}(c,x)=\max \left\{ \frac{p_{Q|z,x}p_{z|x}}{p_{z|x}+c},\frac{p_{Q|z,x}p_{z|x}-2c}{p_{z|x}-c},p_{Q|z,x}p_{z|x}\right\}
    \end{align*}

    and

    \begin{align*}
        \tau^{UB}_{Q(z)}(c,x)=\min \left\{ \frac{p_{Q|z,x}p_{z|x}}{p_{z|x}-c},\frac{p_{Q|z,x}p_{z|x}+2c}{p_{z|x}+c},p_{Q|z,x}p_{z|x}+(1-p_{z|x})\right\}
    \end{align*}

    Consequently, the sharp identified set for the LATE is $\tau(x)\in\left [  \tau^{LB}(c,x),\tau^{UB}(c,x)\right ]$, where

    \begin{equation*}
        \tau^{LB}(c,x)=\frac{\tau^{LB}_{Y(D(1))}(c,x)-\tau^{UB}_{Y(D(0))}(c,x)}{\tau^{UB}_{D(1)}(c,x)-\tau^{LB}_{D(0)}(c,x)}
    \end{equation*}

    and

\begin{equation*}
        \tau^{UB}(c,x)=\frac{\tau^{UB}_{Y(D(1))}(c,x)-\tau^{LB}_{Y(D(0))}(c,x)}{\tau^{LB}_{D(1)}(c,x)-\tau^{UB}_{D(0)}(c,x)}
    \end{equation*}

\end{theorem}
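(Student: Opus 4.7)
The plan is to extend the Masten--Poirier (2018) bounding argument used in Lemmas~1 and~2 from a binary potential variable to the four-valued joint $(Y(D(z)), D(z))$, and then obtain sharp LATE bounds by exploiting the single shared sensitivity parameter $c$.

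First I would translate joint $c$-dependence into pointwise bounds on the latent four-cell joint. Writing $\pi(y,d) = \mathbb{P}(Y(D(z))=y, D(z)=d\mid X=x)$ and $\tilde{\pi}(y,d) = \mathbb{P}(Y=y, D=d\mid Z=z, X=x)$, Bayes' rule gives $\mathbb{P}(Z=z\mid Y(D(z))=y, D(z)=d, X=x) = \tilde{\pi}(y,d)\, p_{z|x}/\pi(y,d)$. Assumption~9 then implies, cell by cell,
\[
\frac{\tilde{\pi}(y,d)\, p_{z|x}}{p_{z|x}+c} \;\le\; \pi(y,d) \;\le\; \frac{\tilde{\pi}(y,d)\, p_{z|x}}{p_{z|x}-c},
\]
and the probability-simplex constraint $\sum_{y,d}\pi(y,d)=1$ completes the characterization of the sharp identified set on the four-cell joint.

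Second, I would marginalize to obtain $\tau_{Q(z)}(x) = \sum_{(y,d): q=1}\pi(y,d)$ for $Q\in\{Y,D\}$. The three expressions in the theorem arise as: (i) summing the cell-wise upper bound over the ``$Q=1$'' cells, yielding $p_{Q|z,x}p_{z|x}/(p_{z|x}-c)$; (ii) using the identity $\tau_{Q(z)}(x) = 1 - \tau_{Q'(z)}(x)$ with the cell-wise lower bound on the ``$Q=0$'' cells—where the numerator acquires the $2c$ term because the extremal joint assignment places $a(y,d)$ at $p_{z|x}-c$ on ``$Q=1$'' cells and at $p_{z|x}+c$ on ``$Q=0$'' cells across a four-cell (rather than two-cell) support, compounding the admissible swing; and (iii) the trivial outer bound $p_{Q|z,x}p_{z|x}+(1-p_{z|x})$ arising by placing all unassigned ($Z\ne z$) mass on the ``$Q=1$'' cells. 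The symmetric argument yields the lower bound.

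Third, for the LATE bounds I would combine the marginal bounds into the ratio formulas stated in the theorem and argue sharpness by explicit construction. Because joint $c$-dependence constrains the full joint distribution of $(Y(D(z)), D(z), Z)\mid X$ with a single parameter $c$, the extremes of numerator and denominator in $\tau(x) = \tau_Y(x)/\tau_D(x)$ can be attained simultaneously. For the upper bound, I would exhibit $a(y,d) = \mathbb{P}(Z=z\mid Y(D(z))=y, D(z)=d, X=x) \in [p_{z|x}-c, p_{z|x}+c]$ that pushes $\tau_{Y(D(1))}$ up and $\tau_{D(1)}$ down (and symmetrically at $z=0$), then verify that the implied joint distribution of $(Y(1), Y(0), D(1), D(0), Z, X)$ is consistent with the observed conditional distribution of $(Y,D)\mid Z,X$, satisfies monotonicity $D(1)\ge D(0)$, the exclusion restriction, and joint $c$-dependence on both $(Y(D(1)), D(1))$ and $(Y(D(0)), D(0))$.

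The main obstacle is this simultaneous-sharpness construction: it is strictly stronger than the separate extremal constructions used for Lemmas~1 and~2, because a single distribution must attain the extreme value of the ratio while respecting joint $c$-dependence on \emph{both} potential-treatment worlds together with monotonicity. It is precisely this coupling—impossible in the $(c_1, c_2)$-dependence framework of Section~3, where the first-stage and reduced-form extremizers need not coincide with any single data-generating process—that yields genuine sharpness of the LATE identified set under Assumption~9.
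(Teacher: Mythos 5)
Your derivation of the marginal bounds follows essentially the same route as the paper's proof: both arguments translate joint $c$-dependence into cell-wise bounds on the four-cell joint $\mathbb{P}\left(Y(D(z))=y,D(z)=d\mid X=x\right)$ via Bayes' rule, obtain the $\pm 2c$ expressions by summing the complement-based cell bounds over the two cells comprising the event $\left\{Q(z)=1\right\}$, and append the trivial support bounds before taking the min/max; so on the validity of the stated inequalities the two proofs coincide. The genuine difference lies in what happens afterwards. The paper's proof stops once the inequalities are combined---it never exhibits a distribution attaining the bounds, neither for the marginals nor for the LATE ratio---whereas you correctly isolate the simultaneous-attainment construction (a single joint law of $(Y(1),Y(0),D(1),D(0),Z)$ extremizing numerator and denominator at once while respecting monotonicity, exclusion, and joint $c$-dependence in both assignment arms) as the crux of sharpness; but you also leave that step unexecuted, calling it the main obstacle. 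Be aware that this omission is not cosmetic. Joint $c$-dependence implies marginal $c$-dependence for any function of $\left(Y(D(z)),D(z)\right)$, because $\mathbb{P}\left(Z=1\mid Q(z)=q,X=x\right)$ is a convex combination of the cell propensities, each of which lies within $c$ of $p_{1|x}$. Hence the single-$c$ bounds of Lemma 2, e.g.\ $\tau^{UB}_{Q(z)}(c,x)\le (p_{Q|z,x}p_{z|x}+c)/(p_{z|x}+c)$, remain valid under Assumption 9 and are strictly tighter than the stated $2c$ versions whenever that piece binds, so the theorem's marginal bounds cannot be sharp as written; your attainment construction, if you carry it out, will run into exactly this obstruction. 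The honest conclusion is that both your proposal and the paper's proof establish validity of the stated intervals but neither establishes the claimed sharpness.
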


Theorem 3 provides the partial identification results for the LATE as a function of the sensitivity parameter $c$. As in Section 4, the bounds for the LATE can be used to identify the breakdown point to a particular conclusion under joint \textit{$c$-dependence}. Estimation and inference procedures for this case are similar to the ones presented in Section 6. Once again, the bounds for the unconditional LATE are identified by integrating the conditional bounds over the distribution of covariates.

When it comes to estimation, the nonparametric estimators for conditional probabilities defined in Section 6 can be used as plug-ins to build an estimator for the bounds of the LATE under joint \textit{$c$-dependence}. The proposition below shows that the estimators of the bounds converge to Gaussian elements:

\begin{proposition}
    Suppose Assumptions 1-3 and 6-9 hold. Then,

    \begin{equation*}
        \sqrt{N}\begin{pmatrix}
\widehat{\tau}^{LB}(c)-\tau^{LB}(c) \\
\widehat{\tau}^{UB}(c)-\tau^{UB}(c)
\end{pmatrix}\overset{d}{\rightarrow}\textbf{Z}_{j,\tau}(y,d,z,x,c)
    \end{equation*}
\end{proposition}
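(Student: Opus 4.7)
The plan is to mirror the argument used for Proposition 2 and apply the functional delta method for Hadamard directionally differentiable maps of Fang and Santos to the composition of functionals that defines $\widehat{\tau}^{LB}(c)$ and $\widehat{\tau}^{UB}(c)$. The starting point is Lemma B1, which delivers joint $\sqrt{N}$-convergence of the sample analogs $\widehat{p}_{Y|z,x}$, $\widehat{p}_{D|z,x}$, $\widehat{p}_{z|x}$, $\widehat{q}_{x}$ to a tight Gaussian process. Because $X$ has finite discrete support by Assumption 7, this process lives in a finite-dimensional space indexed by $(z,x)\in\{0,1\}\times\mathcal{S}(X)$, so no empirical-process machinery beyond what was used for Propositions 1 and 2 is needed.

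The next step is to show that the conditional bounds $\tau^{LB}_{Q(z)}(c,x)$ and $\tau^{UB}_{Q(z)}(c,x)$ given by Theorem 3 are Hadamard directionally differentiable functionals of $(p_{Q|z,x},p_{z|x})$. Each bound is a max (resp.\ min) of three maps that are smooth ratios of the underlying probabilities, provided an analog of Assumption 8 holds so that $c<\min\{p_{0|x},p_{1|x}\}$; the $\max/\min$ operators are known to be Hadamard directionally differentiable, with derivative given by the $\max/\min$ over active indices. The chain rule for directionally differentiable maps then yields a limit for the vector of conditional bounds that is a continuous, generally non-Gaussian, transformation of the Gaussian limit from Lemma B1. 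Integration over the empirical distribution of $X$ is just a finite linear combination of $(\widehat{q}_x)_{x\in\mathcal{S}(X)}$, so the continuous mapping theorem transmits the joint limit to the unconditional bounds of the potential quantities, exactly as in Lemma B2 and Proposition 2.

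The LATE bounds are then ratios of differences of these unconditional bounds; for example
\begin{equation*}
\tau^{LB}(c)=\frac{\int \tau^{LB}_{Y(D(1))}(c,x)-\tau^{UB}_{Y(D(0))}(c,x)\,dF_X(x)}{\int \tau^{UB}_{D(1)}(c,x)-\tau^{LB}_{D(0)}(c,x)\,dF_X(x)},
\end{equation*}
and analogously for $\tau^{UB}(c)$. Provided the denominator is bounded away from zero (the joint-$c$-dependence analog of the relevance clause in Assumption 8), the ratio is Hadamard directionally differentiable with derivative given by the usual quotient rule. A final application of the Fang--Santos delta method delivers the stated convergence, with $\textbf{Z}_{j,\tau}$ defined as the directional derivative of the composite map evaluated at the Gaussian limit of Lemma B1, whose explicit form is collected in Appendix A.

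The main obstacle is the non-smoothness introduced by the $\max$ and $\min$ operators: at parameter configurations where two or more of the three candidate expressions in Theorem 3 tie, the directional derivative is only positively homogeneous, which is precisely why $\textbf{Z}_{j,\tau}$ is in general a non-Gaussian tight random element rather than a Gaussian one. A secondary subtlety is ensuring, uniformly, that the ratio step is well defined; this requires importing a joint-$c$-dependence version of Assumption 8 so that the first-stage lower bound stays strictly positive, thereby avoiding a weak-instrument-type singularity in the upper bound for the LATE. Once these two points are dispatched, the remaining work is careful bookkeeping: track the directional derivative through each layer (smooth ratios, $\max/\min$, subtraction, integration against $\widehat{q}_x$, and the final quotient) to arrive at the explicit form of $\textbf{Z}_{j,\tau}$ stated in the appendix.
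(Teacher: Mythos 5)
Your proposal matches the paper's own proof: both start from the joint Gaussian limit of the plug-in estimators (Lemma B1/B3), pass through the Hadamard directionally differentiable max/min bounds for the potential quantities under joint $c$-dependence (Lemma B4), aggregate over the discrete covariate distribution, and finish with the Fang--Santos delta method applied via the quotient rule to the ratio defining the LATE bounds. The only cosmetic difference is that the paper's final map $\psi_{\tau}$ also truncates the two ratios at $\pm 1$ through an outer $\min/\max$ whose directional derivative it writes out explicitly, whereas you flag the kink only for the inner $\max/\min$; your remark that a joint-$c$-dependence analog of Assumption 8 is implicitly needed to keep the denominator bounded away from zero is correct and, if anything, more careful than the paper, which states the result under Assumptions 1--3 and 6--9 alone.
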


Now, turn to the breakdown point for the conclusion that the LATE is equal or greater than $\mu$, which is denoted by $c^{*}$. Let 

\begin{equation*}
    \widehat{c}^{*}=\inf\left\{ c\in\left [ 0,1 \right ]:\widehat{\tau}^{LB}(c)\leq\mu\right\}
\end{equation*}

denote the estimated breakdown point. The next result formally present a result about the asymptotic distribution of $\widehat{c^{*}}$:

\begin{theorem}
    Suppose Assumptions 1-3 and 6-9 hold. Furthermore, assume that $c\in\left [ 0,\overline{C} \right ]$. Then,

    \begin{equation*}
        \sqrt{N}\left ( \widehat{c}^{*}-c^{*} \right )\overset{d}{\rightarrow}\textbf{Z}_{j}^{BP}
    \end{equation*}

    where $\textbf{Z}^{BP}_{j}$ is a Gaussian random variable defined in Section 7 of Appendix A.
\end{theorem}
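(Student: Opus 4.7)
The plan is to mimic the strategy used to establish Theorem 2, applying the functional delta method to the inverse (level-crossing) map. Concretely, I will view $\widehat{c}^{*}$ as the image of the stochastic process $c\mapsto\widehat{\tau}^{LB}(c)$ under the functional
\begin{equation*}
    \phi(f)=\inf\{c\in[0,\overline{C}]:f(c)\leq\mu\},
\end{equation*}
and combine process-level convergence of $\widehat{\tau}^{LB}(\cdot)$ with Hadamard differentiability of $\phi$ at $\tau^{LB}$ to transfer the limit.

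First, I would upgrade Proposition 3 from a pointwise to a uniform weak convergence statement over $c\in[0,\overline{C}]$: that is, show that $\sqrt{N}\bigl(\widehat{\tau}^{LB}(\cdot)-\tau^{LB}(\cdot)\bigr)$ converges weakly in $\ell^{\infty}([0,\overline{C}])$ to a tight Gaussian-transformed process $\mathbf{Z}_{j,\tau}^{LB}(\cdot)$. This reduces to showing that the underlying sample-proportion processes $\widehat{p}_{Y\mid z,x}$, $\widehat{p}_{D\mid z,x}$, $\widehat{p}_{z\mid x}$, $\widehat{q}_{x}$ converge jointly to a Gaussian process uniformly in $c$ (which does not enter the data but only the functional), together with a uniform Hadamard-differentiable-in-$c$ argument for the $\max/\min$ compositions. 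Under Assumption 8, each of the three candidate functions inside the $\max$ and $\min$ defining $\tau_{Q(z)}^{LB}$ and $\tau_{Q(z)}^{UB}$ is continuous in the data and Lipschitz in $c$, so the composition is uniformly Hadamard directionally differentiable, and the delta method from \cite{fangsantos} gives uniform convergence.

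Next, I verify that $\phi$ is Hadamard differentiable at $f=\tau^{LB}$, tangentially to continuous functions. Because $c^{*}\in(0,1)$ (we are in the nontrivial regime), $\tau^{LB}$ is continuous in $c$, crosses $\mu$ at $c^{*}$, and is strictly decreasing there (the usual lemma, since increasing $c$ weakly enlarges all bounds in the expressions in Theorem 3, hence weakly shrinks $\tau^{LB}$, and generically strictly so at the crossing point). By the standard inverse-map lemma (e.g.\ van der Vaart and Wellner, Lemma 3.9.23, adapted), $\phi$ is then Hadamard differentiable at $\tau^{LB}$ with derivative
\begin{equation*}
    \phi'_{\tau^{LB}}(h)=-\frac{h(c^{*})}{(\tau^{LB})'(c^{*})},
\end{equation*}
where $(\tau^{LB})'(c^{*})$ denotes the one-sided derivative in the direction of interest. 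Applying the functional delta method of \cite{fangsantos} to $\widehat{c}^{*}=\phi(\widehat{\tau}^{LB})$ yields
\begin{equation*}
    \sqrt{N}\bigl(\widehat{c}^{*}-c^{*}\bigr)\overset{d}{\to}-\frac{\mathbf{Z}_{j,\tau}^{LB}(c^{*})}{(\tau^{LB})'(c^{*})}\equiv\mathbf{Z}_{j}^{BP}.
\end{equation*}
Since $\mathbf{Z}_{j,\tau}^{LB}(c^{*})$ is Gaussian (it is a linear functional of the underlying sample-proportion Gaussian process evaluated at a fixed $c^{*}$ where no $\max/\min$ ties occur), and $(\tau^{LB})'(c^{*})$ is a deterministic nonzero constant, $\mathbf{Z}_{j}^{BP}$ is a Gaussian random variable, consistent with the statement.

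The main obstacle is verifying the strict monotonicity and (one-sided) differentiability of $\tau^{LB}(c)$ at $c=c^{*}$ despite the $\max/\min$ operations inside its definition. Generically one of the three candidate expressions in Theorem 3 strictly attains the $\max$ (resp.\ $\min$) at $c^{*}$, and in that case $\tau^{LB}$ is smooth at $c^{*}$ and the derivative formula above delivers a Gaussian limit; at tie points the map is only directionally differentiable and the limit of $\sqrt{N}(\widehat{c}^{*}-c^{*})$ would be a piecewise linear, and hence non-Gaussian, functional of the Gaussian process, requiring the Fang--Santos bootstrap for inference. Assumption 8 combined with genericity of $c^{*}$ (as in the analogous step of Theorem 2) is what rules out ties and delivers the clean Gaussian limit stated here; this is the step I would write out most carefully.
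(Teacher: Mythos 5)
Your proposal follows essentially the same route as the paper: the paper's proof of this theorem simply invokes the argument for Theorem 1, which likewise treats $\widehat{c}^{*}$ as the level-crossing functional applied to the uniformly convergent process $\widehat{\tau}^{LB}(\cdot)$ on $[0,\overline{C}]$, invokes Hadamard differentiability of that map with derivative $-h(c^{*})\big/\frac{\partial\tau^{LB}(c^{*})}{\partial c}$, and concludes by the functional delta method. If anything you are more careful than the paper, which leaves the monotonicity/differentiability conditions at $c^{*}$ and the possibility of ties in the $\max/\min$ operators unaddressed.
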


When it comes to inference, the same Bootstrap procedures introduced in Section 6.1 can be used in the case of joint \textit{$c$-dependence} to perform valid inference over the bounds for the LATE and the breakdown point.

\section{Monte Carlo Simulations}

In this section I perform Monte Carlo exercises to study the properties of the estimators for the breakdown points under the DGP described in Section 5. I consider a sample size $n$ equal to 1000 and conduct 1000 Monte Carlo simulations to study the performance of the estimators for breakdown points regarding different conclusions.

To focus the nondegenerate case, I only consider claims that yield breakdown points greater than zero.

I analyze the performance of the estimators for the breakdown points of conclusions regarding the share of compliers and the ITT under separate $c_{1}$ and \textit{$c_{2}$-dependence}, and conclusions regarding the LATE under joint \textit{$c$-dependence}. Table 1 displays the results. The estimators are analyzed in terms of their average and median bias, and the 95 \% coverage of the one-sided confidence interval from \cite{klinesantos}.

Overall, the estimators exhibit desirable finite-sample properties, expressed in terms of close to zero finite-sample bias and empirical coverage of the confidence interval being close to the target 95 \%. The performance of the estimators is stable across different values of $\mu$.

\begin{table}[]
\caption{Monte Carlo Simulations}
\centering
\begin{tabular}{cccc}
\hline
          & $c_{1}^{*}$     & $c_{2}^{*}$     & $c^{*}$      \\ \hline
$\mu=0$      & \multicolumn{3}{c}{}     \\ \hline
Av. Bias  & 0.001  & -0.003 & 0.003  \\
Med. Bias & -0.001 & -0.005 & -0.002 \\
Cover     & 0.932  & 0.927  & 0.954  \\ \hline
$\mu=0.10$   & \multicolumn{3}{c}{}     \\ \hline
Av. Bias  & 0.001  & 0.009  & 0.004  \\
Med. Bias & 0.002  & 0.008  & -0.002 \\
Cover     & 0.929  & 0.916  & 0.942  \\ \hline
$\mu=0.20$   & \multicolumn{3}{c}{}     \\ \hline
Av. Bias  & 0.011  & 0.008  & 0.008  \\
Med. Bias & 0.009  & 0.006  & 0.007  \\
Cover     & 0.931  & 0.926  & 0.933  \\ \hline
\end{tabular}
\scriptsize \noindent \\\textit{Note:} Simulations based on 1.000 Monte Carlo experiments with sample size $N=1.000$. One-sided CIs for were built using the procedure from \cite{klinesantos}.
\end{table}

\section{Empirical Application: Family Size and Employment}

In this Section, I use the estimators from Sections 6 and 7 to perform the breakdown analysis for the results regarding family size and female employment in \cite{angev}, using data from the US Census Public Use Microsamples married mothers aged 21–35 in 1980 with at least 2 children and oldest child less than 18.

In this setting, the dependent variable is and indicator for women who worked for pay in 1979. Treatment is an indicator for women having three or more children, and the instrument is an indicator for women whose first two children have the same sex. The authors control for age, age at the first birth, race and sex of the first two children as covariates.

To begin the sensitivity analysis, I use selection on observables to to calibrate the beliefs regarding the amount of selection on unobservables. I take the approach from \cite{altonji} and \cite{mastenpoirier2018}. I partition the vector of covariates $X$ as $(X_{k},X_{-k})$, where $X_{k}$ is the $k$-th component and $X_{-k}$ is a vector with remaining components. The measures used to calibrate the beliefs regarding deviations from independence are

\begin{equation*}
    \overline{c}_{k}=\sup_{x_{-k}}\sup_{x_{k}}\left | \mathbb{P}\left ( Z=1|X=(x_{k},x_{-k}) \right )-\mathbb{P}\left ( Z=1|X_{-k}=x_{-k} \right )\right |
\end{equation*}

In the data, the largest value obtained form $\overline{c}_{k}$ is associated to to the indicator for women whose second child is a man, which was estimated to be $\overline{c}_{2nd\ sex}=0.011$. 

\begin{figure}[t!]
\justify     
\caption{Family Size and Female Employment - Compliers and ITT}
\subfigure[First-stage]{\includegraphics[width=80mm]{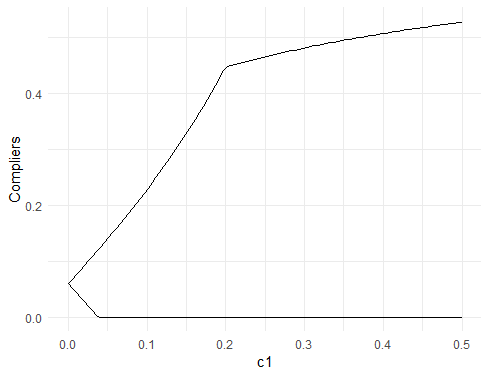}}
\subfigure[Reduced form]{\includegraphics[width=80mm]{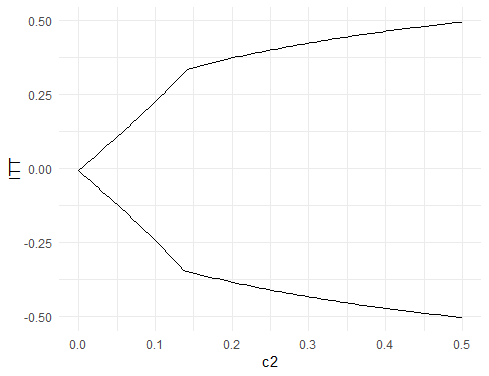}}\\

\scriptsize \noindent \textit{Note:} Figure (a) show the identified set for the share of compliers in the Angrist \& Evans (1998) setting under different values of $c_{1}$. Figure (b) show the identified set for the ITT in the Angrist \& Evans (1998) setting under different values of $c_{2}$.
\end{figure}

Picture 3 shows the identified sets for the share of compliers and the ITT in the application. The share of compliers in the case where $c_{1}=0$ is approximately 0.060 and the estimated breakdown point for the conclusion that the share of compliers is greater than zero is  $\widehat{c}_{1}^{*}=0.037$. The identified set for the ITT is displayed on the left. If point identification holds, then the ITT is equal to -0.008. However, the identified set is uninformative for most of the values of $c_{2}$. The estimated breakdown point for the conclusion that treatment effects are negative is $\widehat{c}_{2}^{*}=0.004$. However, it is not statistically different from zero.

\begin{figure}[t!]
    \justify
    \centering
    \caption{Identified-set for the effects of familiy size on unemployment}
    \includegraphics{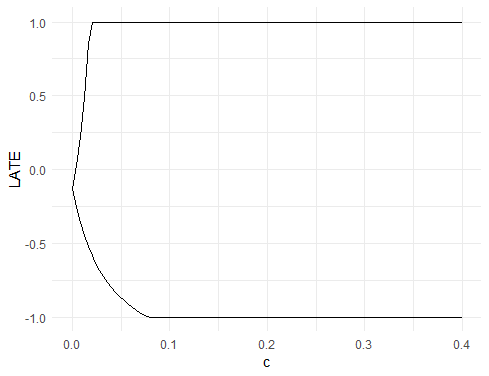}
    \label{fig:enter-label}\\
    \scriptsize \noindent \textit{Note:} Figure 4 show the identified-set for the LATE in the Angrist \& Evans (1998) setting under different values of $c$.
\end{figure}

Figure 4 shows the identified set for the LATE under joint \textit{$c$-dependence}. In the case of full independence of the instrument ($c=0)$, the LATE is equal to -0.132. The estimated breakdown point for the conclusion that the LATE is negative is $\widehat{c}^{*}=0.004$, and it is not statistically different from zero.

Overall, the results from the breakdown analysis suggest that the conclusions regarding the effects of family size on unemployment using same-sex siblings as the instrument are not robust to violations of independence of the instrument.

\section{Conclusion}

In this paper I discuss the partial identification of treatment effects in IV settings with binary outcomes under violations of independence. I derive identified sets for the first-stage, the reduced form and the LATE. Building on this result, I identify breakdown values for conclusions regarding these parameters. I derive the asymptotic properties for the estimators of the bounds and the breakdown values. I also derive sharp bounds for the LATE under a joint \textit{$c$-dependence} assumption.

Monte Carlo simulations show the desirable properties of the estimators for the breakdown points and the bootstrap procedure for constructing one-sided confidence intervals. This is still a work in progress. In the empirical application I study the effects of family size on female employment and find that weak conclusions about the share of compliers to the same-sex sibling instrument and treatment effects are highly sensitive to relaxations of random assignment of the instrument.

\bibliographystyle{apalike}
\bibliography{biblio.bib}

\section*{Appendix A}

\subsection*{Proof of Proposition 1}

 I begin with the first-stage. Consider the upper bound: $\widehat{\tau}_{D}^{UB}(c_{1},x)=\widehat{\tau}^{UB}_{D(1)}(c_{1},x)-\widehat{\tau}^{LB}_{D(0)}(c_{1},x)$. It follows directly from Lemma 4 in Appendix B that

\begin{equation*}
    \sqrt{N}\left ( \widehat{\tau}^{UB}_{D}(c_{1},x)-\tau^{UB}_{D}(c_{1},x) \right )\overset{d}{\rightarrow}\left ( \tilde{\textbf{Z}}^{(1)}_{D}(d,1,x,c_{1})-\tilde{\textbf{Z}}^{(2)}_{D}(d,0,x,c_{1}) \right )
\end{equation*}

Now we turn to the lower bound: $\widehat{\tau}^{LB}_{D}(c_{1},x)=\max\left\{ 0,\widehat{\tau}^{LB}_{D(1)}(c_{1},x)-\widehat{\tau}^{UB}_{D(0)}(c_{1},x)\right\}$. For fixed $d$ and $c_{1}$, define the mapping

\begin{align*}
    & \phi^{*}_{D}:l^{\infty}\left ( \left\{ 0,1\right\}\times\left\{ 0,1\right\}\times \mathcal{S}(X) \right )\times l^{\infty}\left ( \left\{ 0,1\right\}\times \mathcal{S}(X) \right )\times l^{\infty}\left ( \mathcal{S}(X) \right )\\&\rightarrow l^{\infty}\left ( \left\{ 0,1\right\},\mathcal{S}(X),\mathbb{R} \right )
\end{align*}

by 

\begin{equation*}
    \left [ \phi_{D}^{*}(\theta) \right ](z,x)=\max\left\{0,\theta^{(1)}(d,z,x)-\theta^{(2)}(d,z,x) \right\}
\end{equation*}

which is comprised by the functionals $\delta_{D,1}^{*}(\theta)=0$ and $\delta_{D,2}^{*}(\theta)=\theta^{(1)}(d,z,x)-\theta^{(2)}(d,z,x)$, with Hadamard derivatives respectively equal to 0 and $h^{(1)}(d,z,x)-h^{(2)}(d,z,x)$. Thus, the Hadamard derivative of $\phi_{FS}^{*}$ at $\theta_{0}$ is

\begin{equation*}
    \phi_{D,\theta_{0}}^{*'}(h)=\begin{pmatrix}
\mathbf{1}\left ( \delta_{D,1}^{*}(\theta_{0})=\delta_{D,2}^{*}(\theta_{0}) \right )\max\left\{ 0,\delta_{D,2,\theta_{0}}^{*'}(h)\right\} \\
+\mathbf{1}\left ( \delta_{D,1}^{*}(\theta_{0})>\delta_{D,2}^{*}(\theta_{0}) \right )\delta_{D,2,\theta_{0}}^{*'}(h)
\end{pmatrix}
\end{equation*}

Using the Delta Method for Hadamard differentiable functions, we obtain

\begin{equation*}
    \left [ \sqrt{N}\left ( \phi_{D}^{*}(\widehat{\theta})-\phi_{D}^{*}(\theta_{0}) \right ) \right ](z,x)\overset{d}{\rightarrow}\left [ \phi_{D,\theta_{0}}^{*'}(\tilde{\textbf{Z}}_{D}) \right ](z,x)\equiv \textbf{Z}^{*(2)}_{FS}(d,z,x,c_{1})
\end{equation*}

Combining the results yields

\begin{equation*}
    \sqrt{N}\begin{pmatrix}
\widehat{\tau}^{UB}_{D}(c_{1},x)-\tau^{UB}_{D}(c_{1},x) \\
\widehat{\tau}^{LB}_{D}(c_{1},x)-\tau^{LB}_{D}(c_{1},x)
\end{pmatrix}\overset{d}{\rightarrow}\begin{pmatrix}
\left ( \tilde{\textbf{Z}}_{D}^{(1)}(d,1,x,c_{1})-\tilde{\textbf{Z}}_{D}^{(2)}(d,0,x,c_{1}) \right ) \\
\textbf{Z}^{*(2)}_{FS}(d,z,x,c_{1})
\end{pmatrix}\equiv\textbf{Z}_{FS}^{*}(d,z,x,c_{1})
\end{equation*}

It follows that the estimators for the unconditional bounds converge weakly to Gaussian elements. We have

\begin{align*}
    &\sqrt{N}\left ( \widehat{\tau}^{LB}_{D}(c_{1})-\tau^{LB}_{D}(c_{1}) \right )\overset{d}{\rightarrow}\sum_{k=1}^{K}\left ( \tilde{\textbf{Z}}_{D}^{(1)}(d,1,x,c_{1})-\tilde{\textbf{Z}}_{D}^{(2)}(d,0,x,c_{1}) \right )+\sum_{k=1}^{K}\tau_{D}^{LB}(c_{1},x_{k})\textbf{Z}_{D}^{(3)}(0,0x_{k})\\&\equiv\textbf{Z}_{FS}^{(1)}(d,z,x,c_{1})
\end{align*}

and

\begin{equation*}
    \sqrt{N}\left ( \widehat{\tau}^{UB}_{D}(c_{1})-\tau^{UB}_{D}(c_{1}) \right )\overset{d}{\rightarrow}\sum_{k=1}^{K}\textbf{Z}^{*(2)}_{FS}(d,z,x,c_{1})+\sum_{k=1}^{K}\tau_{D}^{UB}(c_{1},x_{k})\textbf{Z}_{D}^{(3)}(0,0x_{k})\equiv\textbf{Z}_{FS}^{(2)}(d,z,x,c_{1})
\end{equation*}

Now, turn to the reduced form. I We begin with the upper bound: $\widehat{\tau}_{Y}^{UB}(c_{2},x)=\widehat{\tau}^{UB}_{Y(D(1))}(c_{2},x)-\widehat{\tau}^{LB}_{Y(D(0))}(c_{2},x)$. It follows directly from Lemma 4 that

\begin{equation*}
    \sqrt{N}\left ( \widehat{\tau}^{UB}_{Y}(c_{2},x)-\tau^{UB}_{Y}(c_{2},x) \right )\overset{d}{\rightarrow}\left ( \tilde{\textbf{Z}}^{(1)}_{Y}(y,1,x,c_{2})-\tilde{\textbf{Z}}^{(2)}_{Y}(y,0,x,c_{2}) \right )
\end{equation*}

Now we turn to the lower bound: $\widehat{\tau}^{LB}_{Y}(c_{2},x)= \widehat{\tau}^{LB}_{Y(D(1))}(c_{2},x)-\widehat{\tau}^{UB}_{Y(D(0))}(c_{2},x)$.

It also follows directly from Lemma 4 that

\begin{equation*}
    \sqrt{N}\left ( \widehat{\tau}^{LB}_{Y}(c_{2},x)-\tau^{LB}_{Y}(c_{2},x) \right )\overset{d}{\rightarrow}\left ( \tilde{\textbf{Z}}^{(2)}_{Y}(y,1,x,c_{2})-\tilde{\textbf{Z}}^{(1)}_{Y}(y,0,x,c_{2}) \right )
\end{equation*}

Hence, we have

\begin{equation*}
    \sqrt{N}\begin{pmatrix}
\widehat{\tau}^{LB}_{Y}(c_{2},x)-\tau^{LB}_{Y}(c_{2},x) \\\widehat{\tau}^{UB}_{Y}(c_{2},x)-\tau^{UB}_{Y}(c_{2},x)
\end{pmatrix}\overset{d}{\rightarrow}\begin{pmatrix}
\tilde{\textbf{Z}}_{Y}^{(1)}(y,1,x,c_{2})-\tilde{\textbf{Z}}_{Y}^{(2)}(y,0,x,c_{2}) \\\tilde{\textbf{Z}}_{Y}^{(2)}(y,1,x,c_{2})-\tilde{\textbf{Z}}_{Y}^{(1)}(y,0,x,c_{2})
\end{pmatrix}\equiv\tilde{\textbf{Z}}_{RF}^{*}(y,z,x,c_{2})
\end{equation*}

It follows that the estimators for the unconditional bounds converge weakly to Gaussian elements. We have

\begin{align*}
    &\sqrt{N}\left ( \widehat{\tau}^{LB}_{Y}(c_{2})-\tau^{LB}_{Y}(c_{2}) \right )\overset{d}{\rightarrow}\sum_{k=1}^{K}\left ( \tilde{\textbf{Z}}_{Y}^{(1)}(d,1,x,c_{2})-\tilde{\textbf{Z}}_{Y}^{(2)}(d,0,x,c_{2}) \right )+\sum_{k=1}^{K}\tau_{Y}^{LB}(c_{1},x_{k})\textbf{Z}_{Y}^{(3)}(0,0x_{k})\\&\equiv\textbf{Z}_{RF}^{(1)}(d,z,x,c_{2})
\end{align*}

and

\begin{align*}
    &\sqrt{N}\left ( \widehat{\tau}^{UB}_{Y}(c_{2})-\tau^{UB}_{Y}(c_{2}) \right )\overset{d}{\rightarrow}\sum_{k=1}^{K}\left ( \tilde{\textbf{Z}}_{Y}^{(2)}(d,1,x,c_{2})-\tilde{\textbf{Z}}_{Y}^{(1)}(d,0,x,c_{2}) \right )+\sum_{k=1}^{K}\tau_{Y}^{UB}(c_{2},x_{k})\textbf{Z}_{Y}^{(3)}(0,0x_{k})\\&\equiv\textbf{Z}_{RF}^{(2)}(d,z,x,c_{2})
\end{align*}        

which concludes the proof.

\subsection*{Proof of Theorem 1}

The breakdown point $c_{1}^{*}$ is defined implicitly by $\tau_{D}^{LB}(c_{1}^{*})=\mu$. The function $\tau^{LB}_{D}(c_{1})$ satisfies the Assumptions from van de Vaart (2000) due to Assumptions ???. Thus the mapping $\tau^{LB}_{D}(.)\mapsto c_{1}^{*}$ is Hadamard differentiable tangentially to the set of Càdlàg functions on $\left [ 0,\overline{C} \right ]$ with derivative equal to $\frac{-\partial h(c_{1}^{*})}{\frac{\partial \tau^{LB}_{D}(c_{1}^{*})}{\partial c_{1}}}$.

Recall that $\sqrt{N}\left(\widehat{\tau}_{D}^{LB}(c_{1})-\tau_{D}^{LB}(c_{1}) \right)$ converges in distribution to a random element of $l^{\infty}\left ( \left [ 0,\overline{C} \right ] \right )$ with continuous paths.

Let $\tilde{c}_{1}^{*}=\inf\left\{ c_{1}\in\left [ 0,\overline{C} \right ]: \widehat{\tau}^{LB}_{D}(c_{1})\leq\mu\right\}$. Since $c_{1}\in\left [ 0,\overline{C} \right ]$ by monotonicity of $\tau^{LB}_{D}(.)$, it follows that $\tau^{LB}_{D}(\overline{C})\leq\mu$. By $\sqrt{N}$ convergence of the First-stage bounds, we have

\begin{equation*}
    \mathbb{P}\left ( \tau_{D}^{LB}(\overline{C})\geq\mu \right )=\mathbb{P}\left ( \sqrt{N}\left ( \tau^{LB}(\overline{C})-\mu \right ) <\sqrt{N}\left ( \widehat{\tau}^{LB}_{D}(\overline{C})-\tau^{LB}_{D}(\overline{C}) \right )\right )\rightarrow 0
\end{equation*}

Therefore, the set $\left\{ c_{1}\in\left [ 0,\overline{C} \right ]:\widehat{\tau}^{LB}_{D}(c_{1})\leq\mu\right\}$ is nonempty almost surely, which implies that $\mathbb{P}\left ( \tilde{c}_{1}^{*}=\widehat{c}_{1}^{*} \right )$ approaches one as $N\rightarrow\infty$. Hence, it follows that

\begin{align*}
    &\sqrt{N}\left ( \widehat{c}_{1}^{*}-c_{1}^{*} \right )=\sqrt{N}(\widehat{c}_{1}^{*}-\tilde{c}_{1}^{*})+\sqrt{N}(\tilde{c}_{1}^{*}-c_{1}^{*})\\&o_{p}(1)+\sqrt{N}(\tilde{c}_{1}^{*}-c_{1}^{*})\overset{d}{\rightarrow}\textbf{Z}^{BP}_{FS}
\end{align*}

Applying the same logic to the reduced form yields the result for $\widehat{c}_{2}^{*}$, which concludes the proof.

\subsection*{Proof of Proposition 2}

Let $\widehat{\theta}_{\tau}=\left(\widehat{\tau}^{LB}_{Y}(c_{2}),\widehat{\tau}^{UB}_{Y}(c_{2}),\widehat{\tau}^{UB}_{D}(c_{1}),\widehat{\tau}^{LB}_{D}(c_{1})\right)$ and $\theta_{\tau}=\left( \tau^{LB}_{Y}(c_{2}),\tau^{UB}_{Y}(c_{2}),\tau^{UB}_{D}(c_{1}),\tau^{LB}_{D}(c_{1})\right)$. For fixed $y,d,c_{1},c_{2}$, define the mapping

\begin{align*}
    &\phi_{\tau}:l^{\infty}\left ( \left\{ 0,1\right\}^{3}\times\mathcal{S}(X) \right )\times l^{\infty}\left ( \left\{ 0,1\right\}^{2}\times\mathcal{S}(X) \right )\times l^{\infty}\left ( \left\{ 0,1\right\}\times\mathcal{S}(X) \right )\times l^{\infty}\left ( \mathcal{S}(X) \right )\\&\rightarrow l^{\infty}\left ( \left\{ -1,1\right\},\mathcal{S}(X),\mathbb{R}^{2} \right )
\end{align*}

by

\begin{equation*}
    \left [ \phi_{\tau}(\theta_{\tau}) \right ](z,x)=\begin{pmatrix}
\min\left\{ \frac{\theta_{\tau}^{(1)}(y,z,x,c_{2})}{\theta_{\tau}^{(3)}(d,z,x,c_{1})},1\right\} \\
\max\left\{ \frac{\theta_{\tau}^{(2)}(y,z,x,c_{2})}{\theta_{\tau}^{(4)}(d,z,x,c_{1})},-1\right\}
\end{pmatrix}
\end{equation*}

The Hadamard derivative for $\left [ \delta_{1}(\theta) \right ](z,x)=\frac{\theta^{(1)}(y,z,x,c_{2})}{\theta^{(3)}(d,z,x,c_{1})}$ is equal to

\begin{equation*}
    \left [ \delta^{'}_{1,\theta}(h) \right ](z,x)=\frac{h^{(1)}(y,z,x,c_{2})\theta^{(3)}(d,z,x,c_{1})-\theta^{(1)}(y,z,x,c_{2})h^{(3)}(d,z,x,c_{1})}{\left ( \theta^{(3)}(d,z,x,c_{1}) \right )^{2}}
\end{equation*}

The Hadamard derivative for $\left [ \delta_{2}(\theta) \right ](z,x)=1$ is equal to $0$. The hadamard derivative for $\left [ \delta_{3}(\theta) \right ](z,x)=\frac{\theta^{(2)}(y,z,x,c_{2})}{\theta^{(4)}(d,z,x,c_{1})}$ is equal to

\begin{equation*}
    \left [ \delta^{'}_{3,\theta}(h) \right ](z,x)=\frac{h^{(2)}(y,z,x,c_{2})\theta^{(4)}(d,z,x,c_{1})-\theta^{(2)}(y,z,x,c_{2})h^{(4)}(d,z,x,c_{1})}{\left ( \theta^{(4)}(d,z,x,c_{1}) \right )^{2}}
\end{equation*}

And the Hadamard derivative for $\left [ \delta_{4}(\theta) \right ](z,x)=-1$ is equal to $0$.

Hence, the Hadamard directional derivative of $\phi_{\tau}$ evaluated at $\theta_{\tau,0}$ is 

\begin{equation*}
    \phi_{\tau,\theta_{0}}^{'}(h)=\begin{pmatrix}
\mathbf{1}\left ( \delta_{\tau,1}(\theta_{0})=1 \right )\max\left\{ \delta^{'}_{\tau1,\theta_{0}}(h),0\right\} \\
+\mathbf{1}\left ( \delta_{\tau,1}(\theta_{0})<1 \right )\delta^{'}_{\tau1,\theta_{0}}(h) \\ 
 \\\mathbf{1}\left ( \delta_{\tau,3}(\theta_{0})=-1 \right )\min\left\{ \delta^{'}_{\tau3,\theta_{0}}(h),0\right\}
 \\+\mathbf{1}\left ( \delta_{\tau,3}(\theta_{0})>-1 \right )\delta^{'}_{\tau3,\theta_{0}}(h) 
\end{pmatrix}
\end{equation*}

By Proposition 1 and the Delta Method for Hadamard directionally differentiable functions,

\begin{equation*}
    \sqrt{N}\left ( \phi_{\tau}(\widehat{\theta})-\phi_{\tau}(\theta_{0}) \right )\overset{d}{\rightarrow}\left [ \theta_{\tau,\theta_{0}}^{'}(\tilde{\textbf{Z}}) \right ](z,x)\equiv\tilde{\textbf{Z}}_{\tau}(z,x)
\end{equation*}

which yields the process $\textbf{Z}_{\tau}(y,d,z,x,c_{1},c_{2})$.

\subsection*{Proof of Theorem 2}

By Lemmas 3 and 4 and Proposition 1, the numerator converges uniformly over $\mathcal{C}\times\mathcal{M}$. By Proposition 1, the denominator also converges uniformly. Hence, applying the Delta Method we obtain

\begin{equation*}
    \sqrt{N}\left ( \widehat{BF}(c_{1},\mu) -BF(c_{1},\mu)\right )\overset{d}{\rightarrow}\textbf{Z}_{BF}(c_{1},\mu)
\end{equation*}

which concludes the proof.

\subsection*{Proof of Theorem 3}

I begin by deriving the identified set for the joint probabilities $ \mathbb{P}\left(Y(D(z))=1,D(z)=1|X=x\right)$ and $\mathbb{P}\left(Y(D(z))=1,D(z)=0|X=x\right)$ under joint \textit{$c$-dependence}. The bounds on this joint probabilities are subsequently combined to obtain the bounds for potential quantities. I begin with $\mathbb{P}\left(Y(D(z))=1,D(z)=1|X=x\right)$.

First, consider the upper bound. We have

\begin{align*}
    &\mathbb{P}\left(Y(D(z))=1,D(z)=1|X=x\right)\\&=p_{Y,D|z,x}p_{z|x}+\mathbb{P}\left(Y(D(z))=1,D(z)=1|Z=1-z,X=x\right)p_{1-z|x}\\&\leq p_{Y,D|z,x}p_{z|x}+(1-p_{z|x})
\end{align*}

Moreover, we have

\begin{align*}
    &\mathbb{P}\left ( Y(D(z))=1,D(z)=1|X=x \right )=\frac{\mathbb{P}\left ( Y(D(z))=1,D(z)=1,X=x \right )}{\mathbb{P}\left ( X=x \right )}\\&\cdot\frac{\mathbb{P}\left ( Y(D(z))=1,D(z)=1,Z=z,X=x \right )\mathbb{P}\left ( Z=z,X=x \right )}{\mathbb{P}\left ( Y(D(z))=1,D(z)=1,Z=z,X=x \right )\mathbb{P}\left ( Z=z,X=x \right )}\\&=\frac{p_{Y,D|z,x}p_{z|x}}{\mathbb{P}\left ( Z=z|Y(D(z))=1,D(z)=1,X=x\right )}\\&\leq \frac{p_{Y,D|z,x}p_{z|x}}{p_{z|x}-c}
\end{align*}

where the last line follows from joint \textit{$c$-dependence}.

For the lower bound, we have

\begin{align*}
    &\mathbb{P}\left(Y(D(z))=1,D(z)=1|X=x\right)\\&=p_{Y,D|z,x}p_{z|x}+\mathbb{P}\left(Y(D(z))=1,D(z)=1|Z=1-z,X=x\right)p_{1-z|x}\\&\geq p_{Y,D|z,x}p_{z|x}
\end{align*}

and

\begin{align*}
    &\mathbb{P}\left ( Y(D(z))=1,D(z)=1|X=x \right )=\frac{\mathbb{P}\left ( Y(D(z))=1,D(z)=1,X=x \right )}{\mathbb{P}\left ( X=x \right )}\\&\cdot\frac{\mathbb{P}\left ( Y(D(z))=1,D(z)=1,Z=z,X=x \right )\mathbb{P}\left ( Z=z,X=x \right )}{\mathbb{P}\left ( Y(D(z))=1,D(z)=1,Z=z,X=x \right )\mathbb{P}\left ( Z=z,X=x \right )}\\&=\frac{p_{Y,D|z,x}p_{z|x}}{\mathbb{P}\left ( Z=z|Y(D(z))=1,D(z)=1,X=x\right )}\\&\geq \frac{p_{Y,D|z,x}p_{z|x}}{p_{z|x}+c}
\end{align*}

From the lower bound presented above, it follows that

\begin{align*}
    &\mathbb{P}\left ( Y(D(z))=1,D(z)=1|X=x \right )=1-\mathbb{P}\left ( Y(D(z))=1,D(z)=0|X=x \right )\\&-\mathbb{P}\left ( Y(D(z))=0,D(z)=1|X=x \right )-\mathbb{P}\left ( Y(D(z))=0,D(z)=0|X=x \right )\\&=1-\frac{p_{Y,1-D|z,x}p_{z|x}}{\mathbb{P}\left ( Z=z|Y(D(z))=1,D(z)=0,X=x \right )}\\&-\frac{p_{1-Y,D|z,x}p_{z|x}}{\mathbb{P}\left ( Z=z|Y(D(z))=0,D(z)=1,X=x \right )}-\frac{p_{1-Y,1-D|z,x}p_{z|x}}{\mathbb{P}\left ( Z=z|Y(D(z))=0,D(z)=0,X=x \right )}\\&\leq 1-\frac{p_{Y,1-D|z,x}p_{z|x}}{p_{z|x}+c}-\frac{p_{1-Y,D|z,x}p_{z|x}}{p_{z|x}+c}-\frac{p_{1-Y,1-D|z,x}p_{z|x}}{p_{z|x}+c}\\&=\frac{p_{Y,D|z,x}p_{z|x}+c}{p_{z|x}+c}
\end{align*}

Also,

\begin{align*}
    &\mathbb{P}\left ( Y(D(z))=1,D(z)=1|X=x \right )=1-\mathbb{P}\left ( Y(D(z))=1,D(z)=0|X=x \right )\\&-\mathbb{P}\left ( Y(D(z))=0,D(z)=1|X=x \right )-\mathbb{P}\left ( Y(D(z))=0,D(z)=0|X=x \right )\\&=1-\frac{p_{Y,1-D|z,x}p_{z|x}}{\mathbb{P}\left ( Z=z|Y(D(z))=1,D(z)=0,X=x \right )}\\&-\frac{p_{1-Y,D|z,x}p_{z|x}}{\mathbb{P}\left ( Z=z|Y(D(z))=0,D(z)=1,X=x \right )}-\frac{p_{1-Y,1-D|z,x}p_{z|x}}{\mathbb{P}\left ( Z=z|Y(D(z))=0,D(z)=0,X=x \right )}\\&\geq 1-\frac{p_{Y,1-D|z,x}p_{z|x}}{p_{z|x}-c}-\frac{p_{1-Y,D|z,x}p_{z|x}}{p_{z|x}-c}-\frac{p_{1-Y,1-D|z,x}p_{z|x}}{p_{z|x}-c}\\&=\frac{p_{Y,D|z,x}p_{z|x}-c}{p_{z|x}-c}
\end{align*}

Therefore, it follows that 

\begin{equation*}
    \mathbb{P}\left ( Y(D(z))=1,D(z)=1|X=x \right )\geq\max\left\{ \frac{p_{Y,D|z,x}p_{z|x}}{p_{z|x}+c}, \frac{p_{Y,D|z,x}p_{z|x}-c}{p_{z|x}-c},p_{Y,D|z,x}p_{z|x}\right\}
\end{equation*}

and

\begin{equation*}
    \mathbb{P}\left ( Y(D(z))=1,D(z)=1|X=x \right )\leq\min\left\{ \frac{p_{Y,D|z,x}p_{z|x}}{p_{z|x}-c}, \frac{p_{Y,D|z,x}p_{z|x}+c}{p_{z|x}+c},p_{Y,D|z,x}p_{z|x}+(1-p_{z|x})\right\}
\end{equation*}

Similarly, one can show that

\begin{equation*}
    \mathbb{P}\left ( Y(D(z))=1,D(z)=0|X=x \right )\geq\max\left\{ \frac{p_{Y,1-D|z,x}p_{z|x}}{p_{z|x}+c}, \frac{p_{Y,1-D|z,x}p_{z|x}-c}{p_{z|x}-c},p_{Y,1-D|z,x}p_{z|x}\right\}
\end{equation*}

and

\begin{equation*}
    \mathbb{P}\left ( Y(D(z))=1,D(z)=0|X=x \right )\leq\min\left\{ \frac{p_{Y,1-D|z,x}p_{z|x}}{p_{z|x}-c}, \frac{p_{Y,1-D|z,x}p_{z|x}+c}{p_{z|x}+c},p_{Y,1-D|z,x}p_{z|x}+(1-p_{z|x})\right\}
\end{equation*}

Now, I build on these bounds to identify the bounds on potential quantities. I derive the results for $\tau_{Y(D(z))}(x)$, the results for $\tau_{D(z)}(x)$ are analogous.

I begin with the upper bound. First, note that 

\begin{align*}
    &\mathbb{P}\left(Y(D(z))=1|X=x\right)=p_{Y|z,x}p_{z|x}+\mathbb{P}\left(Y(D(z))=1|Z=0,X=x\right)(1-p_{z|x})\\&\leq p_{Y|z,x}p_{z|x}+(1-p_{z|x})
\end{align*}

Furthermore, note that by the Law of Total Probability, 

\begin{align*}
    &\mathbb{P}\left ( Y(D(z))=1|X=x \right )=\mathbb{P}\left ( Y(D(z))=1,D(z)=1|X=x \right )+\mathbb{P}\left ( Y(D(z))=1,D(z)=0|X=x \right )\\&\leq\frac{p_{Y,D|z,x}p_{z|x}}{p_{z|x}-c}+\frac{p_{Y,1-D|z,x}p_{z|x}}{p_{z|x}-c}=\frac{p_{Y|z,x}p_{z|x}}{p_{z|x}-c}
\end{align*}

and

\begin{align*}
    &\mathbb{P}\left ( Y(D(z))=1|X=x \right )=\mathbb{P}\left ( Y(D(z))=1,D(z)=1|X=x \right )+\mathbb{P}\left ( Y(D(z))=1,D(z)=0|X=x \right )\\&\leq\frac{p_{Y,D|z,x}p_{z|x}+c}{p_{z|x}+c}+\frac{p_{Y,1-D|z,x}p_{z|x}+c}{p_{z|x}+c}=\frac{p_{Y|z,x}p_{z|x}+2c}{p_{z|x}+c}
\end{align*}

Combining the inequalities yields the upper bound. Now consider the lower bound. First, we have

\begin{align*}
    &\mathbb{P}\left(Y(D(z))=1|X=x\right)=p_{Y|z,x}p_{z|x}+\mathbb{P}\left(Y(D(z))=1|Z=0,X=x\right)(1-p_{z|x})\\&\geq p_{Y|z,x}p_{z|x}
\end{align*}

Furthermore, note that by the Law of Total Probability, 

\begin{align*}
    &\mathbb{P}\left ( Y(D(z))=1|X=x \right )=\mathbb{P}\left ( Y(D(z))=1,D(z)=1|X=x \right )+\mathbb{P}\left ( Y(D(z))=1,D(z)=0|X=x \right )\\&\geq\frac{p_{Y,D|z,x}p_{z|x}}{p_{z|x}+c}+\frac{p_{Y,1-D|z,x}p_{z|x}}{p_{z|x}+c}=\frac{p_{Y|z,x}p_{z|x}}{p_{z|x}+c}
\end{align*}

and

\begin{align*}
    &\mathbb{P}\left ( Y(D(z))=1|X=x \right )=\mathbb{P}\left ( Y(D(z))=1,D(z)=1|X=x \right )+\mathbb{P}\left ( Y(D(z))=1,D(z)=0|X=x \right )\\&\geq\frac{p_{Y,D|z,x}p_{z|x}-c}{p_{z|x}-c}+\frac{p_{Y,1-D|z,x}p_{z|x}-c}{p_{z|x}-c}=\frac{p_{Y|z,x}p_{z|x}-2c}{p_{z|x}-c}
\end{align*}

Combining the inequalities yields the lower bound, which concludes the proof for the bounds.

For sharpness, I show the proof for $\tau_{Y(D(1))}(x)$. The proof for other potential quantities is analogous. Fix $\tau^{*}\in\left[\tau_{Y(D(1))}^{LB}(c,x),\tau_{Y(D(1))}^{UB}(c,x)\right]$. Note that $\mathbb{P}\left(Y(D(1))=1|Z=1,X=x\right)$ is readily identified by the data. Hence, to prove sharpness we must chose a value for $\mathbb{P}\left(Y(D(1))=1|Z=0,X=x\right)$ such that 1) $\mathbb{P}\left(Y(D(1))=1|X=x\right)=\tau^{*}$, 2) the probability $\mathbb{P}\left(Y(D(1))=1|Z=0,X=x\right)$ is well defined and 3) joint \textit{$c$-dependence} holds.

First, define $\mathbb{P}\left ( Y(D(1))=1,D(1)=1|Z=0,X=x \right )=\alpha\frac{\tau^{*}-p_{Y|1,x}p_{1|x}}{p_{0|x}}$ for some $\alpha\in\left[0,1\right]$ and define $\mathbb{P}\left ( Y(D(1))=1,D(1)=0|Z=0,X=x \right )=(1-\alpha)\frac{\tau^{*}-p_{Y|1,x}p_{1|x}}{p_{0|x}}$ such that we obtain $\mathbb{P}\left ( Y(D(1))=1|Z=0,X=x \right )=\frac{\tau^{*}-p_{Y|1,x}p_{1|x}}{p_{0|x}}$ by the Law of Total Probabilities. It follows from the second part of Theorem 5 from \cite{mastenpoirier2018} that 1) and 2) hold.

To prove that joint \textit{$c$-dependence} holds, note that $\mathbb{P}\left(Z=1|Y(D(1))=1,D(1)=1,X=x\right)$ can be written as

{\small \begin{align*}
    &\mathbb{P}\left(Z=1|Y(D(1))=1,D(1)=1,X=x\right)=\frac{\mathbb{P}\left ( Y(D(1))=1|X=x \right )}{\mathbb{P}\left ( Y(D(1))=1,D(1)=1|X=x \right )}\\&\times\left [ \mathbb{P}\left ( Z=1|Y(D(1))=1,X=x \right )-\frac{\mathbb{P}\left ( Y(D(1))=1,D(1)=0|X=x \right )}{\mathbb{P}\left ( D(1)=1|X=x \right )}\mathbb{P}\left ( Z=1|Y(D(1))=1,D(1)=0,X=X \right ) \right ]
\end{align*}}

Using Bayes rule, the expression can be written as

\begin{align*}
    &\mathbb{P}\left ( Z=1|Y(D(1))=1,D(1)=1,X=x \right )=\frac{(p_{Y|1,x}-p_{Y,(1-D)|1,x})p_{1|x}}{\mathbb{P}\left ( Y(D(1))=1,D(1)=1|X=x \right )}\\&=\frac{p_{Y,D|1,x}p_{1|x}}{\mathbb{P}\left ( Y(D(1))=1,D(1)=1|X=x \right )}
\end{align*}

Note that

\begin{align*}
    &\frac{p_{Y,D|1,x}p_{1|x}}{\mathbb{P}\left ( Y(D(1))=1,D(1)=1|X=x \right )}\\&\geq\frac{p_{Y,D|1,x}p_{1|x}}{\min\left\{ \frac{p_{Y,D|1,x}p_{1|x}}{p_{1|x}-c},\frac{p_{Y,D|1,x}p_{1|x}+c}{p_{1|x}+c},p_{Y,D|1,x}p_{1|x}+p_{0|x}\right\}}\geq p_{1|x}-c
\end{align*}

and

\begin{align*}
    &\frac{p_{Y,D|1,x}p_{1|x}}{\mathbb{P}\left ( Y(D(1))=1,D(1)=1|X=x \right )}\\&\leq\frac{p_{Y,D|1,x}p_{1|x}}{\max\left\{ \frac{p_{Y,D|1,x}p_{1|x}}{p_{1|x}+c},\frac{p_{Y,D|1,x}p_{1|x}-c}{p_{1|x}-c},p_{Y,D|1,x}p_{1|x}\right\}}\leq \frac{p_{Y,D|1,x}p_{1|x}(p_{1|x}+c)}{p_{Y,D|1,x}p_{1|x}}=p_{1|x}+c
\end{align*}

which concludes the proof of sharpness.

\subsection*{Proof of Proposition 3}

Define $\tau^{UB}_{D}(c,x)=\tau^{UB}_{D(1)}(c,x)-\tau^{LB}_{D(0)}(c,x)$, $\tau^{LB}_{D}(c,x)=\tau^{LB}_{D(1)}(c,x)-\tau^{UB}_{D(0)}(c,x)$, $\tau^{UB}_{Y}(c,x)=\tau^{UB}_{Y(D(1))}(c,x)-\tau^{LB}_{Y(D(0))}(c,x)$ and $\tau^{LB}_{Y}(c,x)=\tau^{LB}_{Y(D(1))}(c,x)-\tau^{UB}_{Y(D(0))}(c,x)$. Furthemore, define $\widehat{\tau}^{UB}_{D}(c,x),\widehat{\tau}^{LB}_{D}(c,x),\widehat{\tau}^{UB}_{Y}(c,x),\widehat{\tau}^{LB}_{Y}(c,x)$ as the estimators for these quantities, respectively. It follows from Lemma 6 that

\begin{equation*}
    \sqrt{N}\begin{pmatrix}
\widehat{\tau}^{UB}_{D}(c,x)-\tau^{UB}_{D}(c,x) \\\widehat{\tau}^{LB}_{D}(c,x)-\tau^{LB}_{D}(c,x)
 \\\widehat{\tau}^{UB}_{Y}(c,x)-\tau^{UB}_{Y}(c,x)
 \\\widehat{\tau}^{LB}_{Y}(c,x)-\tau^{LB}_{Y}(c,x)
\end{pmatrix}\overset{d}{\rightarrow}\begin{pmatrix}
\textbf{Z}_{j,FS}^{(1)}(d,1,x,c)-\textbf{Z}_{j,FS}^{(2)}(d,0,x,c) \\ \textbf{Z}_{j,FS}^{(2)}(d,1,x,c)-\textbf{Z}_{j,FS}^{(1)}(d,0,x,c)
 \\ \textbf{Z}_{j,RF}^{(1)}(d,1,x,c)-\textbf{Z}_{j,RF}^{(2)}(d,0,x,c)
 \\\textbf{Z}_{j,RF}^{(2)}(d,1,x,c)-\textbf{Z}_{j,RF}^{(1)}(d,0,x,c)
\end{pmatrix}
\end{equation*}

And thus,

\begin{align*}
    &\sqrt{N}\begin{pmatrix}
\widehat{\tau}^{UB}_{D}(c)-\tau^{UB}_{D}(c) \\\widehat{\tau}^{LB}_{D}(c)-\tau^{LB}_{D}(c)
 \\\widehat{\tau}^{UB}_{Y}(c)-\tau^{UB}_{Y}(c)
 \\\widehat{\tau}^{LB}_{Y}(c)-\tau^{LB}_{Y}(c)
\end{pmatrix}\\&\overset{d}{\rightarrow}\begin{pmatrix}
\sum_{k=1}^{K}q_{x_{k}}\left ( \textbf{Z}_{j,FS}^{(1)}(d,1,x_{k},c)-\textbf{Z}_{j,FS}^{(2)}(d,0,x_{k},c) \right )+\sum_{k=1}^{K}\tau^{UB}_{D}(c,x_{k})\textbf{Z}_{j}^{(3)}(0,0,0,x_{k})  \\ \sum_{k=1}^{K}q_{x_{k}}\left ( \textbf{Z}_{j,FS}^{(2)}(d,1,x_{k},c)-\textbf{Z}_{j,FS}^{(1)}(d,0,x_{k},c) \right )+\sum_{k=1}^{K}\tau^{LB}_{D}(c,x_{k})\textbf{Z}_{j}^{(3)}(0,0,0,x_{k})
 \\ \sum_{k=1}^{K}q_{x_{k}}\left ( \textbf{Z}_{j,RF}^{(1)}(y,1,x_{k},c)-\textbf{Z}_{j,RF}^{(2)}(y,0,x_{k},c) \right )+\sum_{k=1}^{K}\tau^{UB}_{Y}(c,x_{k})\textbf{Z}_{j}^{(3)}(0,0,0,x_{k})
 \\ \sum_{k=1}^{K}q_{x_{k}}\left ( \textbf{Z}_{j,RF}^{(1)}(y,1,x_{k},c)-\textbf{Z}_{j,RF}^{(2)}(y,0,x_{k},c) \right )+\sum_{k=1}^{K}\tau^{LB}_{Y}(c,x_{k})\textbf{Z}_{j}^{(3)}(0,0,0,x_{k})
\end{pmatrix}\\&\equiv\textbf{Z}_{j}^{*}(y,d,z,x,c)
\end{align*}

Let $\widehat{\Omega}_{\tau}=\left ( \widehat{\tau}^{UB}_{Y}(c),\widehat{\tau}_{Y}^{LB}(c), \widehat{\tau}^{UB}_{D}(c),\widehat{\tau}_{D}^{LB}(c) \right )$ and $\Omega_{\tau}=\left ( \tau^{UB}_{Y}(c),\tau_{Y}^{LB}(c), \tau^{UB}_{D}(c),\tau_{D}^{LB}(c) \right )$. For fixed $y$, $d$ and $c$, define the mapping

\begin{align*}
    &\psi_{\tau}:l^{\infty}\left ( \left\{ 0,1\right\}^{3}\times\mathcal{S}(X) \right )\times l^{\infty}\left ( \left\{ 0,1\right\}^{2}\times\mathcal{S}(X) \right )\times l^{\infty}\left ( \left\{ 0,1\right\}\times\mathcal{S}(X) \right )\times l^{\infty}\left ( \mathcal{S}(X) \right )\\&\rightarrow l^{\infty}\left ( \left\{ -1,1\right\},\mathcal{S}(X),\mathbb{R}^{2} \right )
\end{align*}

by

\begin{equation*}
    \left [ \psi_{\tau}(\Omega_{\tau}) \right ](z,x)=\begin{pmatrix}
\min\left\{ \frac{\Omega^{(1)}(y,z,x)}{\Omega^{(3)}(d,z,x)},1\right\} \\\max\left\{ \frac{\Omega^{(2)}(y,z,x)}{\Omega^{(4)}(d,z,x)},-1\right\}
\end{pmatrix}
\end{equation*}

The mapping is comprise with four elements. We have

\begin{equation*}
    \left [ \delta_{\tau,1}(\Omega) \right ]=\frac{\Omega^{(2)}(y,z,x)}{\Omega^{(4)}(d,z,x)}
\end{equation*}

with Hadamard derivative equal to

\begin{equation*}
    \left [ \delta^{'}_{\tau,1,\Omega}(h) \right ]=\frac{h^{(2)}(y,z,x)}{\Omega^{(4)}(d,z,x)}-\frac{h^{(4)}(d,z,x)\Omega^{(2)}(y,z,x)}{\left ( \Omega^{(4)}(d,z,x) \right )^{2}},
\end{equation*}

\begin{equation*}
    \left [ \delta_{\tau,2}(\Omega) \right ]=-1
\end{equation*}

with Hadamard derivative equal to

\begin{equation*}
    \left [ \delta^{'}_{\tau,2,\Omega}(h) \right ]=0,
\end{equation*}

\begin{equation*}
    \left [ \delta_{\tau,3}(\Omega) \right ]=\frac{\Omega^{(1)}(y,z,x)}{\Omega^{(3)}(d,z,x)}
\end{equation*}

with Hadamard derivative equal to

\begin{equation*}
    \left [ \delta^{'}_{\tau,3,\Omega}(h) \right ]=\frac{h^{(1)}(y,z,x)}{\Omega^{(3)}(d,z,x)}-\frac{h^{(3)}(d,z,x)\Omega^{(1)}(y,z,x)}{\left ( \Omega^{(3)}(d,z,x) \right )^{2}},
\end{equation*}

and

\begin{equation*}
    \left [ \delta_{\tau,4}(\Omega) \right ]=1
\end{equation*}

with Hadamard derivative equal to

\begin{equation*}
    \left [ \delta^{'}_{\tau,4,\Omega}(h) \right ]=0,
\end{equation*}

Therefore, the Hadamard directional derivate of $\psi_{\tau}$ evaluated at $\Omega_{\tau}$ is

\begin{equation*}
    \psi_{\tau,\Omega_{\tau}}^{'}(h)=\begin{pmatrix}
\mathbf{1}\left ( \delta_{3}(\Omega_{\tau})=1) \right )\min\left\{ \delta_{3,\tau,\Omega_\tau}^{'}(h),0\right\} \\+\mathbf{1}\left ( \delta_{3}(\Omega_{\tau}>1  \right )\delta_{3,\tau,\Omega_{\tau}}^{'}(h) 
 \\
 \\\mathbf{1}\left ( \delta_{1}(\Omega_{\tau})=-1) \right )\min\left\{ \delta_{1,\tau,\Omega_\tau}^{'}(h),0\right\} 
 \\+\mathbf{1}\left ( \delta_{1}(\Omega_{\tau}>-1  \right )\delta_{1,\tau,\Omega_{\tau}}^{'}(h) 
\end{pmatrix}
\end{equation*}

It follows from the Delta Method that

\begin{equation*}
    \left [ \sqrt{N}\left ( \psi_{\tau}(\widehat{\Omega})-\psi_{\tau}(\Omega_{\tau}) \right ) \right ](z,x)\overset{d}{\rightarrow}\left [ \psi_{\tau,\Omega_{\tau}}^{'}(\textbf{Z}_{j}^{*}) \right ](z,x)\equiv\textbf{Z}_{j,\tau}(z,x)
\end{equation*}

which yields the process $\textbf{Z}_{j,\tau}(y,d,z,x,c)$

\subsection*{Proof of Theorem 4}

The steps of the proof are the same as in the proof of Theorem 1.

\section*{Appendix B}

In this section I derive the asymptotic properties of the estimators used as plug-ins for the estimations of the bounds. I show that these estimators converge uniformly to mean-zero Gaussian processes.

\begin{lemma}
    Suppose Assumptions  6 and 7 hold. Then, 

    \begin{equation*}
        \sqrt{N}\begin{pmatrix}
\widehat{p}_{D|z,x}-p_{D|z,x} \\
\widehat{p}_{z|x}-p_{z|x} \\\widehat{q}_{x}-q_{x}
\end{pmatrix}\overset{d}{\rightarrow}\textbf{Z}_{D}(d,z,x)
    \end{equation*}

    and

\begin{equation*}
    \sqrt{N}\begin{pmatrix}
\widehat{p}_{Y|z,x}-p_{Y|z,x} \\
\widehat{p}_{z|x}-p_{z|x} \\\widehat{q}_{x}-q_{x}
\end{pmatrix}\overset{d}{\rightarrow}\textbf{Z}_{Y}(y,z,x)
\end{equation*}

which are mean-zero Gaussian Processes in $l^{\infty}\left ( \left\{ 0,1\right\}\times\left\{ 0,1\right\}\times\mathcal{S}(X),\mathbb{R}^{3} \right )$ with covariance kernels respectively equal to $\Sigma_{FS}=\mathbb{E}\left [ \textbf{Z}_{D}(d,z,x)\textbf{Z}_{D}(d,\tilde{z},\tilde{x})^{'}  \right ]$ and $\Sigma_{RF}=\mathbb{E}\left [ \textbf{Z}_{Y}(y,z,x)\textbf{Z}_{Y}(y,\tilde{z},\tilde{x})^{'}  \right ]$
    
\end{lemma}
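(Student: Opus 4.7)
The plan is to reduce the claim to a multivariate central limit theorem for a vector of indicator averages, and then apply the functional delta method to the smooth ratio maps defining the conditional probabilities. The finite-support assumption on $X$ is what makes this straightforward, since it collapses the $l^\infty$ space on $\{0,1\}\times\{0,1\}\times\mathcal{S}(X)$ to a Euclidean space of dimension proportional to $K$.

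First, I would work with the raw cell-probability estimators
$$\widehat{P}_{y,d,z,x}=\frac{1}{N}\sum_{i=1}^N\mathbf{1}(Y_i=y,D_i=d,Z_i=z,X_i=x)$$
and their marginals. Under Assumption 6 these are sample averages of bounded iid random vectors, so the multivariate CLT yields joint $\sqrt{N}$-asymptotic normality around the population cell probabilities, with the usual multinomial-type covariance matrix. This covers $\widehat{q}_x$ directly, since $\widehat{q}_x$ is obtained by summing $\widehat{P}_{y,d,z,x}$ across $(y,d,z)$.

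Second, I would express $\widehat{p}_{D|z,x}$, $\widehat{p}_{Y|z,x}$, and $\widehat{p}_{z|x}$ as smooth functions of the raw averages. By Assumption 1.3 combined with the finiteness of $\mathcal{S}(X)$, the denominators $p_{z|x}$ and $q_x$ are bounded away from zero at every point of the support, so the ratio map is continuously (and in particular Hadamard) differentiable at the population parameter. The delta method then gives joint $\sqrt{N}$-asymptotic normality of the stacked vector $(\widehat{p}_{D|z,x},\widehat{p}_{z|x},\widehat{q}_x)$, and analogously for the outcome block, with zero mean inherited from the centered CLT. Weak convergence in $l^\infty$ over the finite index set is equivalent to finite-dimensional joint convergence, giving the Gaussian processes $\mathbf{Z}_D$ and $\mathbf{Z}_Y$.

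Third, to pin down the covariance kernels $\Sigma_{FS}$ and $\Sigma_{RF}$ I would write out the influence-function representation that the delta method produces, namely
$$\widehat{p}_{D|z,x}-p_{D|z,x}=\frac{1}{N}\sum_{i=1}^N\frac{\mathbf{1}(Z_i=z,X_i=x)}{p_{z|x}q_x}\bigl(\mathbf{1}(D_i=1)-p_{D|z,x}\bigr)+o_p\!\left(N^{-1/2}\right),$$
with analogous expansions for $\widehat{p}_{Y|z,x}$, $\widehat{p}_{z|x}$, and with $\widehat{q}_x-q_x$ having influence function $\mathbf{1}(X_i=x)-q_x$. The covariance kernels are then the expectations of the outer products of these stacked influence functions evaluated at $(d,z,x)$ and $(d,\tilde z,\tilde x)$. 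The main, and really only, source of care is bookkeeping: because the indicators in the influence functions are supported on disjoint cells when $(z,x)\neq(\tilde z,\tilde x)$, the off-diagonal entries of the kernel vanish for the conditional components and the only cross-cell correlation enters through the shared $\widehat{q}_x$ whenever $x=\tilde x$. I would record these entries explicitly to obtain the claimed $\Sigma_{FS}$ and $\Sigma_{RF}$.
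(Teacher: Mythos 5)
Your proposal is correct and follows essentially the same route as the paper: both linearize the ratio estimators to the identical influence-function representation (your $\frac{\mathbf{1}(Z_i=z,X_i=x)}{p_{z|x}q_x}\left(\mathbf{1}(D_i=1)-p_{D|z,x}\right)$ is exactly the paper's $\frac{\mathbf{1}(Z_i=z,X_i=x)\left(\mathbf{1}(D_i=1)-p_{D|z,x}\right)}{\mathbb{P}(Z=z,X=x)}$) and then obtain the covariance kernels as expectations of outer products of the stacked influence functions, with the same observation that the cross terms vanish by conditioning on the cell. The only difference is in how uniformity over the index set is justified --- the paper runs a second-order Taylor expansion and invokes bracketing-entropy/P-Donsker arguments, whereas you exploit Assumption 7 to reduce everything to a finite-dimensional multivariate CLT plus the ordinary delta method --- which is a more economical justification given the finite support, though the paper's empirical-process route would survive a relaxation to continuous covariates.
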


\begin{proof}
    By a second-order Taylor Expansion,

\begin{align*}
    \widehat{p}_{D|z,x}-p_{D|z,x}=\frac{\frac{1}{N}\sum_{i=1}^{N}\mathbf{1}\left ( D_{i}=1 \right )\mathbf{1}\left ( Z_{i}=z,X_{i}=x \right )}{\frac{1}{N}\sum_{i=1}^{N}\mathbf{1}\left ( Z_{i}=z,X_{i}=x \right )}-\frac{\mathbb{P}\left ( D_{i}=1,Z_{i}=z,X_{i}=x \right )}{\mathbb{P}\left ( Z_{i}=z,X_{i}=x \right )}\\=\frac{\frac{1}{N}\sum_{i=1}^{N}\mathbf{1}\left ( D_{i}=1 \right )\mathbf{1}\left ( Z_{i}=z,X_{i}=x \right )-\mathbb{P}\left ( D_{i}=1,Z_{i}=z,X_{i}=x \right )}{\mathbb{P}\left ( Z_{i}=z,X_{i}=x \right )}\\-\frac{p_{D|z,x}}{\mathbb{P}\left ( Z_{i}=z,X_{i}=x \right )}\left ( \frac{1}{N}\sum_{i=1}^{N}\mathbf{1}\left ( Z_{i}=z,X_{i}=x \right ) -\mathbb{P}\left ( Z_{i}=z,X_{i}=x \right )\right )\\+O_{p} [ \left ( \frac{1}{N}\sum_{i=1}^{N}\mathbf{1}\left ( D_{i}=1 \right )\mathbf{1}\left ( Z_{i}=z,X_{i}=x \right )-p_{D|z,x}\mathbb{P}\left ( Z_{i}=z,X_{i}=x \right ) \right ) \\\cdot \left ( \frac{1}{N}\sum_{i=1}^{N}\mathbf{1}\left ( Z_{i}=z,X_{i}=x \right )-\mathbb{P}\left ( Z_{i}=z,X_{i}=x \right ) \right ) ]\\+O_{p}\left [ \left ( \frac{1}{N}\sum_{i=1}^{N} \mathbf{1}\left ( Z_{i}=z,X_{i}=x \right )-\mathbb{P}\left ( Z_{i}=z,X_{i}=x \right )\right )^{2} \right ]
\end{align*}

By standard bracket entropy results (van der Vaart, 2000), the functions classes given by $\left\{ \mathbf{1}\left ( D_{i}=1 \right )\mathbf{1}\left ( Z_{i}=z,X_{i}=x \right ):z\in\left\{ 0,1\right\},x\in\mathcal{S}(X)\right\}$ and $\left\{ \mathbf{1}\left ( Z_{i}=z,X_{i}=x \right ):z\in\left\{ 0,1\right\},x\in\mathcal{S}(X)\right\}$ are both P-Donsker. Hence, the residuals of order $O_{p}(N^{-1})$ uniformly over $\left ( d,z,x \right )\in 1\times\left\{ 0,1\right\}\times\mathcal{S}(X)$. Using Slutkys's Theorem, we obtain the following asymptotically linear representation:

\begin{equation*}
    \widehat{p}_{D|z,x}-p_{D|z,x}=\frac{1}{N}\frac{\sum_{i=1}^{N}\left ( \mathbf{1}\left ( Z_{i}=z,X_{i}=x \right )\left ( \mathbf{1}\left ( D_{i}=1 \right )-p_{D|Z,X} \right ) \right )}{\mathbb{P}\left ( Z_{i}=z,X_{i}=x \right )}+o_{p}(N^{-1/2})
\end{equation*}

Using the same bracket entropy arguments, it follows that the linear representation is also P-Donsker. Hence, $\sqrt{N}\left(\widehat{p}_{D|z,x}-p_{D|z,x} \right)$ converges to a mean-zero gaussian process with continuous paths. Using similar arguments, we obtain

\begin{equation*}
    \widehat{p}_{z|x}-p_{z|x}=\frac{1}{N}\frac{\sum_{i=1}^{N}\mathbf{1}\left ( X_{i}=x \right )\left ( \mathbf{1}\left ( Z_{i}=z \right )-p_{z|x} \right )}{q_{x}}+o_{p}(N^{-1/2})
\end{equation*}

and

\begin{equation*}
    \widehat{q}_{x}-q_{x}=\frac{1}{N}\sum_{i=1}^{N}\mathbf{1}\left ( X_{i}=x \right )-q_{x}+o_{p}(N^{-1/2})
\end{equation*}

The covariance kernel $\Sigma_{FS}$ is calculated as follows

\begin{align*}
    &\left [ \Sigma_{FS}(z,x,\tilde{z},\tilde{x}) \right ]_{1,1}\\&=\mathbb{E}\left [ \frac{\mathbf{1}\left ( Z_{i}=z,X_{i}=x \right )\mathbf{1}\left ( Z_{i}=\tilde{z},X_{i}=\tilde{x} \right )\left ( \mathbf{1}\left ( D_{i}=1 \right )-p_{D|z,x} \right )\left ( \mathbf{1}\left ( D_{i}=1 \right )-p_{D|\tilde{z},\tilde{x}} \right )}{\mathbb{P}\left ( Z_{i}=z,X_{i}=x \right )\mathbb{P}\left ( Z_{i}=\tilde{z},X_{i}=\tilde{x} \right )} \right ],
\end{align*}

\begin{equation*}
    \left [ \Sigma_{FS}(z,x,\tilde{z},\tilde{x}) \right ]_{1,2}=\mathbb{E}\left [ \frac{\mathbf{1}\left ( Z_{i}=z,X_{i}=\tilde{x} \right )\left ( \mathbf{1}\left ( Z_{i}=\tilde{z} \right )-p_{\tilde{z}|\tilde{x}} \right )\left ( \mathbf{1}\left ( D_{i}=1 \right )-p_{D|z,x} \right )}{p_{z|X}q_{x}q_{\tilde{x}}} \right ]=0,
\end{equation*}

\begin{equation*}
    \left [ \Sigma_{FS}(z,x,\tilde{z},\tilde{x}) \right ]_{1,3}=\mathbb{E}\left [ \frac{\left ( \mathbf{1}\left ( X_{i}=\tilde{x} \right )-q_{\tilde{x}} \right )\mathbf{1}\left ( Z_{i}=z,X_{i}=x \right )\left ( \mathbf{1}\left ( D_{i}=1 \right )-p_{D|z,x} \right )}{\mathbb{P}\left ( Z_{i}=z,X_{i}=x \right )} \right ]=0,
\end{equation*}

\begin{equation*}
    \left [ \Sigma_{FS}(z,x,\tilde{z},\tilde{x}) \right ]_{2,1}=\left [ \Sigma_{FS}(z,x,\tilde{z},\tilde{x}) \right ]_{1,2}=0,
\end{equation*}

\begin{equation*}
    \left [ \Sigma_{FS}(z,x,\tilde{z},\tilde{x}) \right ]_{2,2}=\mathbb{E}\left [ \mathbf{1}\left ( X_{i}=x \right )\mathbf{1}\left ( X_{i}=\tilde{x} \right )\left ( \mathbf{1}\left ( Z_{i}=z \right )-p_{z|x} \right )\left ( \mathbf{1}\left ( Z_{i}=\tilde{z} \right ) -p_{\tilde{z}|\tilde{x}}\right ) \right ],
\end{equation*}

\begin{equation*}
    \left [ \Sigma_{FS}(z,x,\tilde{z},\tilde{x}) \right ]_{2,3}=\mathbb{E}\left [ \frac{\left ( \mathbf{1}\left ( X_{i}=\tilde{x} \right )-q_{\tilde{x}} \right )\left ( \mathbf{1}\left ( Z_{i}=z \right )-p_{z|x} \right )}{q_{x}} \right ]=0,
\end{equation*}

\begin{equation*}
    \left [ \Sigma_{FS}(z,x,\tilde{z},\tilde{x}) \right ]_{3,1}=\left [ \Sigma_{FS}(z,x,\tilde{z},\tilde{x}) \right ]_{1,3}=0,
\end{equation*}

\begin{equation*}
    \left [ \Sigma_{FS}(z,x,\tilde{z},\tilde{x}) \right ]_{3,2}=\left [ \Sigma_{FS}(z,x,\tilde{z},\tilde{x}) \right ]_{2,3}=0,
\end{equation*}

\begin{equation*}
    \left [ \Sigma_{FS}(z,x,\tilde{z},\tilde{x}) \right ]_{3,3}=\mathbb{E}\left [ \left ( \mathbf{1}\left ( X_{i}=\tilde{x} \right )-q_{\tilde{x}} \right )\left ( \mathbf{1}\left ( X_{i}=x \right )-q_{x} \right ) \right ]
\end{equation*}

Repeating the procedure for the estimators used in the plug-in of the reduced form yields the asymptotic distribution $\textbf{Z}_{Y}(y,z,x)$, which concludes the proof
\end{proof}

The following lemma provides the asymptotic distributions for the bounds of potential quantities.

\begin{lemma}
    Suppose Assumptions 1-3 and 6-8 hold. Then, 

    \begin{equation*}
        \sqrt{N}\begin{pmatrix}
\widehat{\tau}^{LB}_{D(z)}(c_{1},x)-\tau^{LB}_{D(z)}(c_{1},x) \\\widehat{\tau}^{UB}_{D(z)}(c_{1},x)-\tau^{UB}_{D(z)}(c_{1},x)
\end{pmatrix}\overset{d}{\rightarrow}\tilde{\textbf{Z}}_{D}(d,z,x,c_{1})
    \end{equation*}

    and

    \begin{equation*}
        \sqrt{N}\begin{pmatrix}
\widehat{\tau}^{LB}_{Y(D(z))}(c_{2},x)-\tau^{LB}_{Y(D(z))}(c_{2},x) \\\widehat{\tau}^{UB}_{Y(D(z))}(c_{2},x)-\tau^{UB}_{Y(D(z))}(c_{2},x)
\end{pmatrix}\overset{d}{\rightarrow}\tilde{\textbf{Z}}_{Y}(y,z,x,c_{2})
    \end{equation*}

    both tight elements of $l^{\infty}\left ( \left\{ 0,1\right\}\times\left\{ 0,1\right\}\times\mathcal{S}(X),\mathbb{R}^{2} \right )$.
\end{lemma}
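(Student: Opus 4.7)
\subsection*{Proof Plan for Lemma 2 (Appendix B)}

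The plan is to express each of the conditional bound estimators as a composition of (i) the plug-in vector $\widehat{\theta}=(\widehat{p}_{D|z,x},\widehat{p}_{z|x},\widehat{q}_{x})$ (or its $Y$-counterpart) established in Lemma B1, followed by (ii) a smooth map into the three candidate expressions inside the max/min, and finally (iii) the max or min operator over finitely many coordinates. Then the Hadamard directional delta method of \cite{fangsantos} delivers weak convergence in $l^{\infty}$ of the whole vector, with a non-Gaussian but tight limit. Because the argument for the reduced-form bounds is structurally identical to the first-stage bounds (just replacing $p_{D|z,x}$ by $p_{Y|z,x}$ and $c_1$ by $c_2$), I would prove it in detail for $\tau^{LB}_{D(z)}(c_1,x)$ and $\tau^{UB}_{D(z)}(c_1,x)$ and indicate that the reduced-form case follows mutatis mutandis.

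First I would invoke Lemma B1 to obtain
\begin{equation*}
\sqrt{N}(\widehat{\theta}-\theta_{0})\overset{d}{\rightarrow}\textbf{Z}_{D}(d,z,x)
\end{equation*}
uniformly over $(d,z,x)$. Under Assumption 8, the constraint $c_{1}<\min\{p_{1|x},p_{0|x}\}$ guarantees that the denominators $p_{z|x}-c_{1}$ and $p_{z|x}+c_{1}$ are bounded away from zero on a neighbourhood of $\theta_{0}$. Hence the three inner functionals
\begin{equation*}
g_{1}(\theta)=\frac{\theta_{1}\theta_{2}}{\theta_{2}+c_{1}},\qquad g_{2}(\theta)=\frac{\theta_{1}\theta_{2}-c_{1}}{\theta_{2}-c_{1}},\qquad g_{3}(\theta)=\theta_{1}\theta_{2}
\end{equation*}
(and the analogous trio for the upper bound) are continuously Fréchet differentiable at $\theta_{0}$, with derivatives $g'_{j,\theta_0}(h)$ that I would compute by routine quotient rule. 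By standard application of the delta method on each coordinate,
\begin{equation*}
\sqrt{N}\bigl(g_{j}(\widehat{\theta})-g_{j}(\theta_{0})\bigr)\overset{d}{\rightarrow}g'_{j,\theta_{0}}(\textbf{Z}_{D})
\end{equation*}
jointly in $j$ and uniformly over $(d,z,x)$.

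Next I would handle the outer max and min. The map $\phi^{LB}(u_{1},u_{2},u_{3})=\max\{u_{1},u_{2},u_{3}\}$ is Hadamard directionally differentiable at every point of $\mathbb{R}^{3}$, with derivative
\begin{equation*}
\phi^{LB\prime}_{u_{0}}(h)=\max_{j\in\mathcal{A}(u_{0})}h_{j},\qquad \mathcal{A}(u_{0})=\{j:u_{0,j}=\max_{k}u_{0,k}\},
\end{equation*}
and symmetrically for $\phi^{UB}=\min\{\cdot\}$. Composing with the smooth stage above, and using the chain rule for Hadamard directional differentiability, gives
\begin{equation*}
\sqrt{N}\bigl(\widehat{\tau}^{LB}_{D(z)}(c_{1},x)-\tau^{LB}_{D(z)}(c_{1},x)\bigr)\overset{d}{\rightarrow}\phi^{LB\prime}_{g(\theta_{0})}\!\bigl(g'_{\theta_{0}}(\textbf{Z}_{D})\bigr),
\end{equation*}
and likewise for the upper bound. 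Stacking both coordinates and reading off the joint limit as a measurable functional of the common Gaussian process $\textbf{Z}_{D}$ yields the tight element $\tilde{\textbf{Z}}_{D}(d,z,x,c_{1})$. Since $z$, $x$ and $d$ range over finite sets, the uniform convergence is automatic once the pointwise convergence is established, and the limit lives in $l^{\infty}(\{0,1\}\times\{0,1\}\times\mathcal{S}(X),\mathbb{R}^{2})$.

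The main obstacle is that the limiting process is not Gaussian whenever two or more of the $g_{j}$ coincide at $\theta_{0}$, because the max/min operator is only directionally differentiable. I would address this by appealing explicitly to Theorem 2.1 of \cite{fangsantos}, which only requires Hadamard directional differentiability and tangential convergence to a tight element. Assumption 8 plays a double role here: it keeps the denominators separated from zero (so $g_{j}$ are Fréchet differentiable) and, via $\tau^{LB}_{D}(c_{1},x)>0$, it excludes the lower-bound truncation at zero at the conditional level, so no further max with a constant has to be inserted at this stage. The reduced-form case is handled by the identical argument applied to $(\widehat{p}_{Y|z,x},\widehat{p}_{z|x},\widehat{q}_{x})$, producing $\tilde{\textbf{Z}}_{Y}(y,z,x,c_{2})$ and closing the proof. \rule{0.5em}{0.5em}
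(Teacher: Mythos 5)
Your proposal follows essentially the same route as the paper's proof: invoke the joint Gaussian convergence of the plug-in vector from the first appendix lemma, compute the Hadamard derivatives of the three inner functionals under each max/min via the quotient rule (with Assumption 8 keeping the denominators away from zero so the simplified bound expressions apply), and then compose with the directionally differentiable max/min operator and apply the Fang--Santos delta method and chain rule, with finiteness of the index set giving tightness in $l^{\infty}$. The paper writes the directional derivative of the max/min out casewise with indicators for the active set and ties, which is just the expanded form of your $\max_{j\in\mathcal{A}(u_{0})}h_{j}$, so the two arguments coincide.
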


\begin{proof}
    Let $\theta_{0}=\left ( p_{D|z,x},p_{z|x},q_{x} \right )$ and $\widehat{\theta}=\left ( \widehat{p}_{D|z,x},\widehat{p}_{z|x},\widehat{q}_{x} \right )$. For a fixed $d$ and fixed $c_{1}$, define the mapping

\begin{align*}
    & \phi_{D}:l^{\infty}\left ( \left\{ 0,1\right\}\times\left\{ 0,1\right\}\times \mathcal{S}(X) \right )\times l^{\infty}\left ( \left\{ 0,1\right\}\times \mathcal{S}(X) \right )\times l^{\infty}\left ( \mathcal{S}(X) \right )\\&\rightarrow l^{\infty}\left ( \left\{ 0,1\right\},\mathcal{S}(X),\mathbb{R}^{2} \right )
\end{align*}

by

\begin{equation*}
    \left [ \phi_{D}(\theta) \right ](z,x)=\begin{pmatrix}
\min\left\{ \frac{\theta^{(1)}(d,z,x)\theta^{(2)}(z,x)}{\theta^{(2)}(z,x)-c_{1}},\frac{\theta^{(1)}(d,z,x)\theta^{(2)}(z,x)+c_{1}}{\theta^{(2)}(z,x)+c_{1}},\theta^{(1)}(d,z,x)\theta^{(2)}(z,x)+(1-\theta^{(2)}(z,x))\right\} \\
\max\left\{ \frac{\theta^{(1)}(d,z,x)\theta^{(2)}(z,x)}{\theta^{(2)}(z,x)+c_{1}},\frac{\theta^{(1)}(d,z,x)\theta^{(2)}(z,x)-c_{1}}{\theta^{(2)}(z,x)-c_{1}},\theta^{(1)}(d,z,x)\theta^{(2)}(z,x)\right\}
\end{pmatrix}
\end{equation*}

where $\theta^{(j)}$ is the j-th component of $\theta$. Note that

\begin{equation*}
    \left [ \phi_{D}(\theta_{0}) \right ](z,x)=\begin{pmatrix}
\tau^{UB}_{D(z)}(c_{1},x) \\
\tau^{LB}_{D(z)}(c_{1},x)
\end{pmatrix}
\end{equation*}

The mapping $\phi_{D}$ is comprised with max and min operators, along with six other functions. We begin by computing the Hadamard derivative of these functions with respect to $\theta$ using \cite{fangsantos} and the Chain rule for Hadamard differentiable functions to obtain the derivative of $\phi_{D}$.

Let $h\in\mathbb{R}^{2}$. First, consider $\left [ \delta_{D,1}(\theta) \right ](z,x)=\frac{\theta^{(1)}(d,z,x)\theta^{(2)}(z,x)}{\theta^{2}(z,x)+c_{1}}$, which has Hadamard derivative equal to

\begin{equation*}
    \left [ \delta_{D,1,\theta}^{'}(h) \right ](z,x)=\frac{\theta^{(1)}(d,z,x)h^{(2)}(z,x)+h^{(1)}(d,z,x)\theta^{(2)}(z,x)}{\theta^{(2)}(z,x)+c_{1}}-\frac{\theta^{(1)}(d,z,x)\theta^{(2)}(z,x)h^{(2)}(z,x)}{\left ( \theta^{(2)}(z,x)+c_{1} \right )^{2}}
\end{equation*}

Next, $\left [ \delta_{D,2}(\theta) \right ](z,x)=\frac{\theta^{(1)}(d,z,x)\theta^{(2)}(z,x)-c_{1}}{\theta^{(2)}(z,x)-c_{1}}$ has Hadamard derivative equal to

\begin{equation*}
    \left [ \delta_{D,2,\theta}^{'}(h) \right ](z,x)=\frac{\theta^{(1)}(d,z,x)h^{(2)}(z,x)+h^{(1)}(d,z,x)\theta^{(2)}(z,x)}{\theta^{(2)}(z,x)-c_{1}}-\frac{\left ( \theta^{(1)}(d,z,x)\theta^{(2)}(z,x)-c_{1} \right )h^{(2)}(z,x)}{\left ( \theta^{(2)}(z,x)-c_{1} \right )^{2}}
\end{equation*}

Next, $\left [ \delta_{D,3}(\theta) \right ](z,x)=\theta^{(1)}(d,z,x)\theta^{2}(z,x)$ has Hadamard derivative equal to

\begin{equation*}
    \left [ \delta_{D,3,\theta}^{'}(h) \right ](z,x)=h^{(1)}(d,z,x)\theta^{(2)}(z,x)+\theta^{(1)}(d,z,x)+h^{(2)}(z,x)
\end{equation*}

Now, we turn to the functionals inside the $\min$ operator. First, we have $\left [ \delta_{D,4}(\theta) \right ](z,x)=\frac{\theta^{(1)}(d,z,x)\theta^{(2)}(z,x)}{\theta^{(2)}(z,x)-c_{1}}$, which has Hadamard derivative equal to

\begin{equation*}
    \left [ \delta_{D,4,\theta}^{'}(h) \right ](z,x)=\frac{\theta^{(1)}(d,z,x)h^{(2)}(z,x)+h^{(1)}(d,z,x)\theta^{(2)}(z,x)}{\theta^{(2)}(z,x)-c_{1}}-\frac{\theta^{(1)}(d,z,x)\theta^{(2)}(z,x)h^{(2)}(z,x)}{\left ( \theta^{(2)}(z,x)-c_{1} \right )^{2}}
\end{equation*}

Next, $\left [ \delta_{D,5}(\theta) \right ](z,x)=\frac{\theta^{(1)}(d,z,x)\theta^{(2)}(z,x)+c_{1}}{\theta^{(2)}(z,x)+c_{1}}$ has Hadamard derivative equal to

\begin{equation*}
    \left [ \delta_{D,5,\theta}^{'}(h) \right ](z,x)=\frac{\theta^{(1)}(d,z,x)h^{(2)}(z,x)+h^{(1)}(d,z,x)\theta^{(2)}(z,x)}{\theta^{(2)}(z,x)+c_{1}}-\frac{\left ( \theta^{(1)}(d,z,x)\theta^{(2)}(z,x)+c_{1} \right )h^{(2)}(z,x)}{\left ( \theta^{(2)}(z,x)+c_{1} \right )^{2}}
\end{equation*}

Finally, $\left [ \delta_{D,6,\theta}^{'}(h) \right ](z,x)=h^{(1)}(d,z,x)\theta^{(2)}(z,x)+h^{(2)}(z,x)(\theta^{(1)}(d,z,x)-1)$.

Using this notation, we write the functional $\phi_{D}$ as

\begin{equation*}
    \phi_{D}(\theta)=\begin{pmatrix}
\min\left\{ \delta_{D,4}(\theta),\delta_{D,5}(\theta),\delta_{D,6}(\theta\right\} \\
\max\left\{ \delta_{D,1}(\theta),\delta_{D,2}(\theta),\delta_{D,3}(\theta)\right\}
\end{pmatrix}
\end{equation*}

Using the chain rule \citep{mastenpoirier2020}, the Hadamard derivative of $\phi_{D}$ at $\theta_{0}$ is

\begin{equation*}
    \phi_{D,\theta_{0}}^{'}(h)=\begin{pmatrix}
\mathbf{1}\left ( \delta_{D,6}(\theta_{0})>\max\left\{\delta_{D,4}(\theta_{0}),\delta_{D,5}(\theta_{0}) \right\} \right )\delta_{D,6,\theta_{0}}^{'}(h)\\
+\mathbf{1}\left ( \delta_{D,5}(\theta_{0})>\max\left\{\delta_{D,4}(\theta_{0}),\delta_{D,6}(\theta_{0}) \right\} \right )\delta_{D,5,\theta_{0}}^{'}(h) \\
+\mathbf{1}\left ( \delta_{D,4}(\theta_{0})>\max\left\{\delta_{D,5}(\theta_{0}),\delta_{D,6}(\theta_{0}) \right\} \right )\delta_{D,4,\theta_{0}}^{'}(h)\\
+\mathbf{1}\left ( \delta_{D,6}(\theta_{0})=\delta_{D,5}(\theta_{0})>\delta_{D,4}(\theta_{0})  \right )\min\left\{ \delta_{D,6,\theta_{0}}^{'}(h),\delta_{D,5,\theta_{0}}^{'}(h)\right\}\\
+\mathbf{1}\left ( \delta_{D,6}(\theta_{0})=\delta_{D,4}(\theta_{0})>\delta_{D,5}(\theta_{0})  \right )\min\left\{ \delta_{D,6,\theta_{0}}^{'}(h),\delta_{D,4,\theta_{0}}^{'}(h)\right\}\\
+\mathbf{1}\left ( \delta_{D,4}(\theta_{0})=\delta_{D,5}(\theta_{0})>\delta_{D,6}(\theta_{0})  \right )\min\left\{ \delta_{D,4,\theta_{0}}^{'}(h),\delta_{D,5,\theta_{0}}^{'}(h)\right\}\\
+\mathbf{1}\left ( \delta_{D,6}(\theta_{0})=\delta_{D,5}(\theta_{0})=\delta_{D,4}(\theta_{0})  \right )\min\left\{ \delta_{D,6,\theta_{0}}^{'}(h),\delta_{D,5,\theta_{0}}^{'}(h),\delta_{D,6,\theta_{0}}^{'}(h)\right\}\\
 \\
\mathbf{1}\left ( \delta_{D,3}(\theta_{0})<\min\left\{\delta_{D,1}(\theta_{0}),\delta_{D,2}(\theta_{0}) \right\} \right )\delta_{D,3,\theta_{0}}^{'}(h)\\
+\mathbf{1}\left ( \delta_{D,2}(\theta_{0})<\min\left\{\delta_{D,1}(\theta_{0}),\delta_{D,3}(\theta_{0}) \right\} \right )\delta_{D,2,\theta_{0}}^{'}(h) \\
+\mathbf{1}\left ( \delta_{D,3}(\theta_{0})<\min\left\{\delta_{D,1}(\theta_{0}),\delta_{D,2}(\theta_{0}) \right\} \right )\delta_{D,3,\theta_{0}}^{'}(h)\\
+\mathbf{1}\left ( \delta_{D,3}(\theta_{0})=\delta_{D,2}(\theta_{0})<\delta_{D,1}(\theta_{0})  \right )\max\left\{ \delta_{D,3,\theta_{0}}^{'}(h),\delta_{D,2,\theta_{0}}^{'}(h)\right\}\\
+\mathbf{1}\left ( \delta_{D,3}(\theta_{0})=\delta_{D,1}(\theta_{0})<\delta_{D,2}(\theta_{0})  \right )\max\left\{ \delta_{D,3,\theta_{0}}^{'}(h),\delta_{D,1,\theta_{0}}^{'}(h)\right\}\\
+\mathbf{1}\left ( \delta_{D,1}(\theta_{0})=\delta_{D,2}(\theta_{0})>\delta_{D,3}(\theta_{0})  \right )\max\left\{ \delta_{D,1,\theta_{0}}^{'}(h),\delta_{D,2,\theta_{0}}^{'}(h)\right\}\\
+\mathbf{1}\left ( \delta_{D,3}(\theta_{0})=\delta_{D,2}(\theta_{0})=\delta_{D,1}(\theta_{0})  \right )\max\left\{ \delta_{D,3,\theta_{0}}^{'}(h),\delta_{D,2,\theta_{0}}^{'}(h),\delta_{D,1,\theta_{0}}^{'}(h)\right\}\\
\end{pmatrix}
\end{equation*}

By Lemma 3, $\sqrt{N}\left ( \widehat{\theta}-\theta_{0} \right )\overset{d}{\rightarrow}\textbf{Z}_{D}(d,x)$. Using the Delta Method for Hadamard differentiable functions, we obtain

\begin{equation*}
    \left [ \sqrt{N}\left ( \phi_{D}(\widehat{\theta})-\phi_{D}(\theta_{0}) \right ) \right ](z,x)\overset{d}{\rightarrow}\left [ \phi_{D,\theta_{0}}^{'}(\textbf{Z}_{D}) \right ](z,x)\equiv \tilde{\textbf{Z}}_{D}(z,x)
\end{equation*}

which concludes the proof for the bounds of potential treatments. Now, consider the bounds for potential outcomes.

Let $\theta_{0}=\left ( p_{Y|z,x},p_{z|x},q_{x} \right )$ and $\widehat{\theta}=\left ( \widehat{p}_{Y|z,x},\widehat{p}_{z|x},\widehat{q}_{x} \right )$. For a fixed $y$ and fixed $c_{2}$, define the mapping

\begin{align*}
    & \phi_{Y}:l^{\infty}\left ( \left\{ 0,1\right\}\times\left\{ 0,1\right\}\times \mathcal{S}(X) \right )\times l^{\infty}\left ( \left\{ 0,1\right\}\times \mathcal{S}(X) \right )\times l^{\infty}\left ( \mathcal{S}(X) \right )\\&\rightarrow l^{\infty}\left ( \left\{ 0,1\right\},\mathcal{S}(X),\mathbb{R}^{2} \right )
\end{align*}

by

\begin{equation*}
    \left [ \phi_{Y}(\theta) \right ](z,x)=\begin{pmatrix}
\min\left\{ \frac{\theta^{(1)}(y,z,x)\theta^{(2)}(z,x)}{\theta^{(2)}(z,x)-c_{1}},\frac{\theta^{(1)}(y,z,x)\theta^{(2)}(z,x)+c_{1}}{\theta^{(2)}(z,x)+c_{1}},\theta^{(1)}(y,z,x)\theta^{(2)}(z,x)+(1-\theta^{(2)}(z,x))\right\} \\
\max\left\{ \frac{\theta^{(1)}(y,z,x)\theta^{(2)}(z,x)}{\theta^{(2)}(z,x)+c_{1}},\frac{\theta^{(1)}(y,z,x)\theta^{(2)}(z,x)-c_{1}}{\theta^{(2)}(z,x)-c_{1}},\theta^{(1)}(y,z,x)\theta^{(2)}(z,x)\right\}
\end{pmatrix}
\end{equation*}

where $\theta^{(j)}$ is the j-th component of $\theta$. Note that

\begin{equation*}
    \left [ \phi_{Y}(\theta_{0}) \right ](z,x)=\begin{pmatrix}
\tau^{UB}_{Y(D(z))}(c_{2},x) \\
\tau^{LB}_{Y(D(z))}(c_{2},x)
\end{pmatrix}
\end{equation*}

The mapping $\phi_{Y}$ is also comprised with max and min operators, along with six other functions.

We begin by computing the Hadamard derivative of these functions with respect to $\theta$ using Fang and Santos (2019) and use the Chain rule for Hadamard differentiable functions to obtain the derivative of $\phi_{Y}$.

First, consider $\left [ \delta_{Y,1}(\theta) \right ](z,x)=\frac{\theta^{(1)}(y,z,x)\theta^{(2)}(z,x)}{\theta^{2}(z,x)+c_{2}}$, which has Hadamard derivative equal to

\begin{equation*}
    \left [ \delta_{Y,1,\theta}^{'}(h) \right ](z,x)=\frac{\theta^{(1)}(y,z,x)h^{(2)}(z,x)+h^{(1)}(y,z,x)\theta^{(2)}(z,x)}{\theta^{(2)}(z,x)+c_{2}}-\frac{\theta^{(1)}(y,z,x)\theta^{(2)}(z,x)h^{(2)}(z,x)}{\left ( \theta^{(2)}(z,x)+c_{2} \right )^{2}}
\end{equation*}

Next, $\left [ \delta_{Y,2}(\theta) \right ](z,x)=\frac{\theta^{(1)}(y,z,x)\theta^{(2)}(z,x)-c_{2}}{\theta^{(2)}(z,x)-c_{2}}$ has Hadamard derivative equal to

\begin{equation*}
    \left [ \delta_{Y,2,\theta}^{'}(h) \right ](z,x)=\frac{\theta^{(1)}(y,z,x)h^{(2)}(z,x)+h^{(1)}(y,z,x)\theta^{(2)}(z,x)}{\theta^{(2)}(z,x)-c_{2}}-\frac{\left ( \theta^{(1)}(y,z,x)\theta^{(2)}(z,x)-c_{2} \right )h^{(2)}(z,x)}{\left ( \theta^{(2)}(z,x)-c_{2} \right )^{2}}
\end{equation*}

Next, $\left [ \delta_{Y,3}(\theta) \right ](z,x)=\theta^{(1)}(y,z,x)\theta^{2}(z,x)$ has Hadamard derivative equal to

\begin{equation*}
    \left [ \delta_{Y,3,\theta}^{'}(h) \right ](z,x)=h^{(1)}(y,z,x)\theta^{(2)}(z,x)+\theta^{(1)}(y,z,x)+h^{(2)}(z,x)
\end{equation*}

Now, we turn to the functionals inside the $\min$ operator. First, we have $\left [ \delta_{Y,4}(\theta) \right ](z,x)=\frac{\theta^{(1)}(y,z,x)\theta^{(2)}(z,x)}{\theta^{(2)}(z,x)-c_{2}}$, which has Hadamard derivative equal to

\begin{equation*}
    \left [ \delta_{Y,4,\theta}^{'}(h) \right ](z,x)=\frac{\theta^{(1)}(y,z,x)h^{(2)}(z,x)+h^{(1)}(y,z,x)\theta^{(2)}(z,x)}{\theta^{(2)}(z,x)-c_{2}}-\frac{\theta^{(1)}(y,z,x)\theta^{(2)}(z,x)h^{(2)}(z,x)}{\left ( \theta^{(2)}(z,x)-c_{2} \right )^{2}}
\end{equation*}

Next, $\left [ \delta_{Y,5}(\theta) \right ](z,x)=\frac{\theta^{(1)}(y,z,x)\theta^{(2)}(z,x)+c_{2}}{\theta^{(2)}(z,x)+c_{2}}$ has Hadamard derivative equal to

\begin{equation*}
    \left [ \delta_{Y,5,\theta}^{'}(h) \right ](z,x)=\frac{\theta^{(1)}(y,z,x)h^{(2)}(z,x)+h^{(1)}(y,z,x)\theta^{(2)}(z,x)}{\theta^{(2)}(z,x)+c_{2}}-\frac{\left ( \theta^{(1)}(y,z,x)\theta^{(2)}(z,x)+c_{2} \right )h^{(2)}(z,x)}{\left ( \theta^{(2)}(z,x)+c_{2} \right )^{2}}
\end{equation*}

Finally, $\left [ \delta_{Y,6,\theta}(h) \right ](z,x)=h^{(1)}(y,z,x)\theta^{(2)}(z,x)+h^{(2)}(z,x)(\theta^{(1)}(y,z,x)-1)$

\begin{equation*}
    \phi_{Y}(\theta)=\begin{pmatrix}
\min\left\{ \delta_{Y,4}(\theta),\delta_{Y,5}(\theta),\delta_{Y,6}(\theta)\right\} \\
\max\left\{ \delta_{Y,1}(\theta),\delta_{Y,2}(\theta),\delta_{Y,3}(\theta)\right\}
\end{pmatrix}
\end{equation*}

Using the chain rule from \cite{mastenpoirier2020}, the Hadamard derivative of $\phi_{Y}$ at $\theta_{0}$ is

\begin{equation*}
    \phi_{Y,\theta_{0}}^{'}(h)=\begin{pmatrix}
\mathbf{1}\left ( \delta_{Y,6}(\theta_{0})>\max\left\{\delta_{Y,4}(\theta_{0}),\delta_{Y,5}(\theta_{0}) \right\} \right )\delta_{Y,6,\theta_{0}}^{'}(h)\\
+\mathbf{1}\left ( \delta_{Y,5}(\theta_{0})>\max\left\{\delta_{Y,4}(\theta_{0}),\delta_{Y,6}(\theta_{0}) \right\} \right )\delta_{Y,5,\theta_{0}}^{'}(h) \\
+\mathbf{1}\left ( \delta_{Y,4}(\theta_{0})>\max\left\{\delta_{Y,5}(\theta_{0}),\delta_{Y,6}(\theta_{0}) \right\} \right )\delta_{Y,4,\theta_{0}}^{'}(h)\\
+\mathbf{1}\left ( \delta_{Y,6}(\theta_{0})=\delta_{Y,5}(\theta_{0})>\delta_{Y,4}(\theta_{0})  \right )\min\left\{ \delta_{Y,6,\theta_{0}}^{'}(h),\delta_{Y,5,\theta_{0}}^{'}(h)\right\}\\
+\mathbf{1}\left ( \delta_{Y,6}(\theta_{0})=\delta_{Y,4}(\theta_{0})>\delta_{Y,5}(\theta_{0})  \right )\min\left\{ \delta_{Y,6,\theta_{0}}^{'}(h),\delta_{Y,4,\theta_{0}}^{'}(h)\right\}\\
+\mathbf{1}\left ( \delta_{Y,4}(\theta_{0})=\delta_{Y,5}(\theta_{0})>\delta_{Y,6}(\theta_{0})  \right )\min\left\{ \delta_{Y,4,\theta_{0}}^{'}(h),\delta_{Y,5,\theta_{0}}^{'}(h)\right\}\\
+\mathbf{1}\left ( \delta_{Y,6}(\theta_{0})=\delta_{Y,5}(\theta_{0})=\delta_{Y,4}(\theta_{0})  \right )\min\left\{ \delta_{Y,6,\theta_{0}}^{'}(h),\delta_{Y,5,\theta_{0}}^{'}(h),\delta_{Y,6,\theta_{0}}^{'}(h)\right\}\\
 \\
\mathbf{1}\left ( \delta_{Y,3}(\theta_{0})<\min\left\{\delta_{Y,1}(\theta_{0}),\delta_{Y,2}(\theta_{0}) \right\} \right )\delta_{Y,3,\theta_{0}}^{'}(h)\\
+\mathbf{1}\left ( \delta_{Y,2}(\theta_{0})<\min\left\{\delta_{Y,1}(\theta_{0}),\delta_{Y,3}(\theta_{0}) \right\} \right )\delta_{Y,2,\theta_{0}}^{'}(h) \\
+\mathbf{1}\left ( \delta_{Y,3}(\theta_{0})<\min\left\{\delta_{Y,1}(\theta_{0}),\delta_{Y,2}(\theta_{0}) \right\} \right )\delta_{Y,3,\theta_{0}}^{'}(h)\\
+\mathbf{1}\left ( \delta_{Y,3}(\theta_{0})=\delta_{Y,2}(\theta_{0})<\delta_{Y,1}(\theta_{0})  \right )\max\left\{ \delta_{Y,3,\theta_{0}}^{'}(h),\delta_{Y,2,\theta_{0}}^{'}(h)\right\}\\
+\mathbf{1}\left ( \delta_{Y,3}(\theta_{0})=\delta_{Y,1}(\theta_{0})<\delta_{Y,2}(\theta_{0})  \right )\max\left\{ \delta_{Y,3,\theta_{0}}^{'}(h),\delta_{Y,1,\theta_{0}}^{'}(h)\right\}\\
+\mathbf{1}\left ( \delta_{Y,1}(\theta_{0})=\delta_{Y,2}(\theta_{0})>\delta_{Y,3}(\theta_{0})  \right )\max\left\{ \delta_{Y,1,\theta_{0}}^{'}(h),\delta_{Y,2,\theta_{0}}^{'}(h)\right\}\\
+\mathbf{1}\left ( \delta_{Y,3}(\theta_{0})=\delta_{Y,2}(\theta_{0})=\delta_{Y,1}(\theta_{0})  \right )\max\left\{ \delta_{Y,3,\theta_{0}}^{'}(h),\delta_{Y,2,\theta_{0}}^{'}(h),\delta_{Y,1,\theta_{0}}^{'}(h)\right\}\\
\end{pmatrix}
\end{equation*}

By Lemma 3, $\sqrt{N}\left ( \widehat{\theta}-\theta_{0} \right )\overset{d}{\rightarrow}\textbf{Z}_{RF}(d,x)$. Using the Delta Method for Hadamard differentiable functions, we obtain

\begin{equation*}
    \left [ \sqrt{N}\left ( \phi_{Y}(\widehat{\theta})-\phi_{Y}(\theta_{0}) \right ) \right ](z,x)\overset{d}{\rightarrow}\left [ \phi_{Y,\theta_{0}}^{'}(\textbf{Z}_{Y}) \right ](z,x)\equiv \tilde{\textbf{Z}}_{Y}(z,x)
\end{equation*}

which concludes the proof.

\end{proof}

The next lemma provides the asymptotic properties of the nonparametric estimators of conditional probabilities which are used as plug-ins in the estimator of the LATE bounds under joint \textit{$c$-dependence}:

\begin{lemma}
    Suppose Assumptions 6 and 7 hold. Then,

    \begin{equation*}
        \sqrt{N}\begin{pmatrix}
\widehat{p}_{Y|z,x}-p_{Y|z,x} \\
\widehat{p}_{D|z,x}-p_{D|z,x} \\
\widehat{p}_{z|x}-p_{z|x}\\
\widehat{q}_{x}-q_{x}
\end{pmatrix}\rightarrow \textbf{Z}_{j}(y,d,z,x)
    \end{equation*}

    which is a mean-zero Gaussian processes in $l^{\infty}\left(\left\{0,1\right\}^{3}\times\mathcal{S}(X),\mathbb{R}^{3}\right)$ with covariance kernel equal to $\Sigma_{j}=\mathbb{E}\left [ \textbf{Z}_{j}(y,d,z,x)\textbf{Z}_{j}(y,d,\tilde{z},\tilde{x})^{'}  \right ]$
\end{lemma}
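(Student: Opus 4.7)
The plan is to follow the same strategy used in the proof of Lemma B1, now stacking all four estimators into a single joint vector rather than splitting them into two triples. First I would apply a second-order Taylor expansion around the population quantities to each ratio estimator. For $\widehat{p}_{Y|z,x}$ and $\widehat{p}_{D|z,x}$ this yields asymptotically linear representations of the form
\begin{equation*}
\widehat{p}_{Y|z,x} - p_{Y|z,x} = \frac{1}{N}\sum_{i=1}^{N}\frac{\mathbf{1}(Z_i = z, X_i = x)(\mathbf{1}(Y_i = 1) - p_{Y|z,x})}{\mathbb{P}(Z_i = z, X_i = x)} + o_p(N^{-1/2}),
\end{equation*}
with the analogous representation for $\widehat{p}_{D|z,x}$ obtained by replacing $Y$ with $D$. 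The linear representations of $\widehat{p}_{z|x}$ and $\widehat{q}_x$ are inherited verbatim from Lemma B1.

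Second, I would invoke the bracket-entropy argument from Lemma B1: since $\mathcal{S}(X)$ is finite by Assumption 7 and $Y, D, Z$ are binary, the union of the four indicator function classes (now including $\mathbf{1}(Y_i = 1)\mathbf{1}(Z_i = z, X_i = x)$ and $\mathbf{1}(D_i = 1)\mathbf{1}(Z_i = z, X_i = x)$ simultaneously) is P-Donsker with square-integrable envelope. This delivers uniform $o_p(N^{-1/2})$ control of the Taylor remainders and allows the multivariate functional CLT to be applied to the stacked vector of influence functions, yielding joint weak convergence to a tight mean-zero Gaussian element $\textbf{Z}_j(y,d,z,x)$.

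Third, I would compute the covariance kernel $\Sigma_j$ by expanding the outer product of the influence functions. Most entries are identical to those already derived in Lemma B1: the blocks involving $\widehat{p}_{z|x}$ or $\widehat{q}_x$ produce the same diagonal and vanishing off-diagonal terms, since $\widehat{p}_{z|x}$ is uncorrelated with either conditional numerator by the law of iterated expectations, and $\widehat{q}_x$ is uncorrelated with all centered Bernoulli residuals given $X$.

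The main obstacle, and the only genuinely new computation relative to Lemma B1, is the off-diagonal entry linking $\widehat{p}_{Y|z,x}$ and $\widehat{p}_{D|z,x}$, which were previously placed in separate vectors. Multiplying their influence functions and taking expectations gives
\begin{equation*}
[\Sigma_j(z,x,\tilde z,\tilde x)]_{1,2} = \mathbf{1}(z = \tilde z, x = \tilde x)\,\frac{\mathbb{P}(Y_i = 1, D_i = 1 \mid Z_i = z, X_i = x) - p_{Y|z,x}\, p_{D|z,x}}{\mathbb{P}(Z_i = z, X_i = x)},
\end{equation*}
which is the within-cell conditional covariance of $Y_i$ and $D_i$ rescaled by the inverse cell probability, and it vanishes off the diagonal by disjointness of the conditioning events. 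Once this entry is recorded, the remaining covariance calculations are direct analogues of those carried out in Lemma B1, which completes the argument.
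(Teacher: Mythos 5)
Your proposal takes essentially the same route as the paper: asymptotically linear representations of the four ratio estimators via a second-order Taylor expansion, a bracketing/Donsker argument (trivially satisfied here because Assumption 7 makes the index set finite and all indicators are bounded), joint weak convergence of the stacked influence functions, and then a term-by-term computation of the covariance kernel. The one place where you and the paper part ways is exactly the entry you single out as the only new computation: the $(1,2)$ block linking $\widehat{p}_{Y|z,x}$ and $\widehat{p}_{D|z,x}$. The paper asserts that this expectation equals zero, whereas you compute it as $\mathbf{1}(z=\tilde z, x=\tilde x)$ times the within-cell conditional covariance of $Y$ and $D$ rescaled by the inverse cell probability. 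Your version is the correct one: on the diagonal $(z,x)=(\tilde z,\tilde x)$ the relevant expectation is $\mathrm{Cov}\left(Y,D\mid Z=z,X=x\right)/\mathbb{P}\left(Z=z,X=x\right)$, which vanishes only if $Y$ and $D$ are uncorrelated conditional on $(Z,X)$ --- a property that is neither assumed nor plausible when the treatment affects the outcome. So your proof is sound, matches the paper's strategy, and in fact corrects what appears to be an error in the paper's own covariance calculation; the downstream results are unaffected in structure (the limit is still a mean-zero Gaussian process), but any variance formula built from $\Sigma_j$ should use your nonzero cross term.
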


\begin{proof}
    Using a Taylor expansion and bracket entropy arguments as in Lemma 3, the following asymptotically linear representations are obtained: 

    \begin{align*}
        &\widehat{p}_{Y|z,x}-p_{Y|z,x}=\frac{1}{N}\frac{\sum_{i=1}^{N}\mathbf{1}\left ( Z_{i}=z,X_{i}=x \right )\left ( \mathbf{1}\left ( Y_{i}=1 \right )-p_{Y|z,x} \right )}{\mathbb{P}\left ( Z=z,X=x \right )}+o_{p}(N^{-1/2})\\&\widehat{p}_{D|z,x}-p_{D|z,x}=\frac{1}{N}\frac{\sum_{i=1}^{N}\mathbf{1}\left ( Z_{i}=z,X_{i}=x \right )\left ( \mathbf{1}\left ( D_{i}=1 \right )-p_{D|z,x} \right )}{\mathbb{P}\left ( Z=z,X=x \right )}+o_{p}(N^{-1/2})\\&\widehat{p}_{z|x}-p_{z|x}=\frac{1}{N}\frac{\sum_{i=1}^{N}\mathbf{1}\left ( X_{i}=x \right )\left ( \mathbf{1}\left ( Z_{i}=z \right )-p_{z|x} \right )}{q_{x}}+o_{p}(N^{-1/2})\\&\widehat{q}_{x}-q_{x}=\frac{1}{N}\sum_{i=1}^{N}\left ( \mathbf{1}\left ( X_{i}=x \right )-q_{x} \right )+o_{p}(N^{-1/2})
    \end{align*}

    The covariance kernel $\Sigma_{j}$ can be calculated as follows:

\begin{equation*}
    \left [ \Sigma_{j}  \right ]_{1,1}=\mathbb{E}\left [ \frac{\mathbf{1}\left ( Z_{i}=z,X_{i}=x \right )\mathbf{1}\left ( Z_{i}=\tilde{z},X_{i}=\tilde{x} \right )\left ( \mathbf{1}\left ( Y_{i}=1 \right )-p_{Y|z,x} \right )\left ( \mathbf{1}\left ( Y_{i}=1 \right )-p_{Y|\tilde{z},\tilde{x}} \right )}{\mathbb{P}\left ( Z=z,X=x \right )\mathbb{P}\left ( Z=\tilde{z},X=\tilde{x} \right )} \right ],
\end{equation*}

\begin{equation*}
    \left [ \Sigma_{j}  \right ]_{1,2}=\mathbb{E}\left [ \frac{\mathbf{1}\left ( Z_{i}=z,X_{i}=x \right )\mathbf{1}\left ( Z_{i}=\tilde{z},X_{i}=\tilde{x} \right )\left ( \mathbf{1}\left ( Y_{i}=1 \right )-p_{Y|z,x} \right )\left ( \mathbf{1}\left ( D_{i}=1 \right )-p_{D|\tilde{z},\tilde{x}} \right )}{\mathbb{P}\left ( Z=z,X=x \right )\mathbb{P}\left ( Z=\tilde{z},X=\tilde{x} \right )} \right ]=0,
\end{equation*}

\begin{equation*}
    \left [ \Sigma_{j}  \right ]_{1,3}=\mathbb{E}\left [ \frac{\mathbf{1}\left ( Z_{i}=z,X_{i}=\tilde{x} \right )\left ( \mathbf{1}\left ( Y_{i}=1 \right )-p_{Y|z,x} \right )\left ( \mathbf{1}\left ( Z_{i}=\tilde{z} \right )-p_{\tilde{z}|\tilde{x}} \right )}{p_{z|x}q_{x}q_{\tilde{x}}} \right ]=0,
\end{equation*}

\begin{equation*}
    \left [ \Sigma_{j}  \right ]_{1,4}=\mathbb{E}\left [ \frac{\left ( \mathbf{1}\left ( X_{i}=\tilde{x} \right )-q_{\tilde{x}} \right )\mathbf{1}\left ( Z_{i}=z,X_{i}=x \right )\left ( \mathbf{1}\left ( Y_{i}=1 \right )-p_{Y|z,x} \right )}{\mathbb{P}\left ( Z=z,X=x \right )} \right ]=0,
\end{equation*}

\begin{equation*}
    \left [ \Sigma_{j} \right ]_{2,1}=\left [ \Sigma_{j} \right ]_{1,2}=0,
\end{equation*}

\begin{equation*}
    \left [ \Sigma_{j}  \right ]_{2,2}=\mathbb{E}\left [ \frac{\mathbf{1}\left ( Z_{i}=z,X_{i}=x \right )\mathbf{1}\left ( Z_{i}=\tilde{z},X_{i}=\tilde{x} \right )\left ( \mathbf{1}\left ( D_{i}=1 \right )-p_{D|z,x} \right )\left ( \mathbf{1}\left ( D_{i}=1 \right )-p_{D|\tilde{z},\tilde{x}} \right )}{\mathbb{P}\left ( Z=z,X=x \right )\mathbb{P}\left ( Z=\tilde{z},X=\tilde{x} \right )} \right ],
\end{equation*}

\begin{equation*}
    \left [ \Sigma_{j}  \right ]_{2,3}=\mathbb{E}\left [ \frac{\mathbf{1}\left ( Z_{i}=z,X_{i}=\tilde{x} \right )\left ( \mathbf{1}\left ( D_{i}=1 \right )-p_{D|z,x} \right )\left ( \mathbf{1}\left ( Z_{i}=\tilde{z} \right )-p_{\tilde{z}|\tilde{x}} \right )}{p_{z|x}q_{x}q_{\tilde{x}}} \right ]=0,
\end{equation*}

\begin{equation*}
    \left [ \Sigma_{j}  \right ]_{2,4}=\mathbb{E}\left [ \frac{\left ( \mathbf{1}\left ( X_{i}=\tilde{x} \right )-q_{\tilde{x}} \right )\mathbf{1}\left ( Z_{i}=z,X_{i}=x \right )\left ( \mathbf{1}\left ( D_{i}=1 \right )-p_{D|z,x} \right )}{\mathbb{P}\left ( Z=z,X=x \right )} \right ]=0,
\end{equation*}

\begin{equation*}
    \left [ \Sigma_{j} \right ]_{3,1}=\left [ \Sigma_{j} \right ]_{1,3}=0,
\end{equation*}

\begin{equation*}
    \left [ \Sigma_{j} \right ]_{3,2}=\left [ \Sigma_{j} \right ]_{2,3}=0,
\end{equation*}

\begin{equation*}
    \left [ \Sigma_{j} \right ]_{3,3}=\mathbb{E}\left [ \frac{\mathbf{1}\left ( X_{i}=x \right )\mathbf{1}\left ( X_{i}=\tilde{x} \right )\left ( \mathbf{1}\left ( Z_{i}=z \right )-p_{z|x} \right )\left ( \mathbf{1}\left ( Z_{i}=\tilde{z} \right )-p_{\tilde{z}|\tilde{x}} \right )}{q_{x}q_{\tilde{x}}} \right ],
\end{equation*}

\begin{equation*}
    \left [ \Sigma_{j} \right ]_{3,4}=\mathbb{E}\left [ \frac{\left ( \mathbf{1}\left ( X_{i}=\tilde{x} \right )-q_{\tilde{x}} \right )\left ( \mathbf{1}\left ( Z_{i}=z \right )-p_{z|x} \right )}{q_{x}} \right ]=0,
\end{equation*}

\begin{equation*}
    \left [ \Sigma_{j} \right ]_{4,1}=\left [ \Sigma_{j} \right ]_{1,4}=0,
\end{equation*}

\begin{equation*}
    \left [ \Sigma_{j} \right ]_{4,2}=\left [ \Sigma_{j} \right ]_{2,4}=0,
\end{equation*}

\begin{equation*}
    \left [ \Sigma_{j} \right ]_{4,3}=\left [ \Sigma_{j} \right ]_{3,4}=0,
\end{equation*}

\begin{equation*}
    \left [ \Sigma_{j} \right ]_{4,4}=\mathbb{E}\left [\left ( \mathbf{1}\left ( X_{i}=x \right )-q_{x} \right )\left ( \mathbf{1}\left ( X_{i}=\tilde{x} \right )-q_{\tilde{x}} \right )  \right ]
\end{equation*}

\end{proof}

The next lemma provides the asymptotic properties for the estimators of the bounds of potential quantities under joint \textit{$c$-dependence}:

\begin{lemma}
    Suppose Assumptions 1-3 and 6-9 hold. Then, 

    \begin{equation*}
        \sqrt{N}\begin{pmatrix}
\widehat{\tau}^{LB}_{D(z)}(c,x)-\tau^{LB}_{D(z)}(c,x) \\\widehat{\tau}^{UB}_{D(z)}(c,x)-\tau^{UB}_{D(z)}(c,x)
\end{pmatrix}\overset{d}{\rightarrow}\tilde{\textbf{Z}}_{j,D}(d,z,x,c)
    \end{equation*}

    and

    \begin{equation*}
        \sqrt{N}\begin{pmatrix}
\widehat{\tau}^{LB}_{Y(D(z))}(c,x)-\tau^{LB}_{Y(D(z))}(c,x) \\\widehat{\tau}^{UB}_{Y(D(z))}(c,x)-\tau^{UB}_{Y(D(z))}(c,x)
\end{pmatrix}\overset{d}{\rightarrow}\tilde{\textbf{Z}}_{j,Y}(y,z,x,c)
    \end{equation*}

    both tight elements of $l^{\infty}\left ( \left\{ 0,1\right\}\times\left\{ 0,1\right\}\times\mathcal{S}(X),\mathbb{R}^{2} \right )$.
\end{lemma}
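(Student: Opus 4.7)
The plan is to follow the same functional Delta Method strategy used in the proof of Lemma 4, but applied to the closed-form bounds derived in Theorem 3. The key observation is that the joint $c$-dependence bounds $\tau^{LB}_{Q(z)}(c,x)$ and $\tau^{UB}_{Q(z)}(c,x)$ (for $Q\in\{Y,D\}$) are structurally identical to the separate-dependence bounds from Lemma 1 and Lemma 2: each is a max (respectively min) of three elementary rational/affine functionals of $\theta_{0}=(p_{Q|z,x},p_{z|x},q_{x})$, the only differences being the replacement of $c_{1}$ or $c_{2}$ by $c$ and the replacement of $\pm c$ by $\pm 2c$ in the middle term. Hence the same composition of Hadamard directionally differentiable mappings applies.

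Concretely, for fixed $d$ and $c$ I would define a mapping
\begin{equation*}
\phi_{j,D}:l^{\infty}(\{0,1\}^{2}\times\mathcal{S}(X))\times l^{\infty}(\{0,1\}\times\mathcal{S}(X))\times l^{\infty}(\mathcal{S}(X))\to l^{\infty}(\{0,1\}\times\mathcal{S}(X),\mathbb{R}^{2})
\end{equation*}
whose second coordinate is $\max\{\delta_{j,D,1}(\theta),\delta_{j,D,2}(\theta),\delta_{j,D,3}(\theta)\}$ and whose first coordinate is $\min\{\delta_{j,D,4}(\theta),\delta_{j,D,5}(\theta),\delta_{j,D,6}(\theta)\}$, where the six $\delta$ functionals are the obvious analogues of the ones in the proof of Lemma 4, with $\delta_{j,D,2}$ and $\delta_{j,D,5}$ now carrying $2c$ in their numerators. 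Each $\delta_{j,D,k}$ is ordinarily Hadamard differentiable at $\theta_{0}$ (its derivative is computed by routine quotient-rule differentiation, as in Lemma 4, using Assumption 8 to rule out zero denominators), so by the chain rule from \cite{mastenpoirier2020} combined with the max/min derivative formulas from \cite{fangsantos}, $\phi_{j,D}$ is Hadamard directionally differentiable at $\theta_{0}$ with an explicit piecewise derivative $\phi'_{j,D,\theta_{0}}$ that selects the active constraint(s) among the three arguments.

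Next I would invoke Lemma B3 of Appendix B, which gives $\sqrt{N}(\widehat{\theta}-\theta_{0})\overset{d}{\to}\textbf{Z}_{j}(y,d,z,x)$ jointly as a tight Gaussian element. Applying the functional Delta Method (for Hadamard directionally differentiable mappings) yields
\begin{equation*}
\sqrt{N}\bigl(\phi_{j,D}(\widehat{\theta})-\phi_{j,D}(\theta_{0})\bigr)(z,x)\overset{d}{\to}\bigl[\phi'_{j,D,\theta_{0}}(\textbf{Z}_{j})\bigr](z,x)\equiv\tilde{\textbf{Z}}_{j,D}(d,z,x,c),
\end{equation*}
which is a tight element of $l^{\infty}(\{0,1\}\times\{0,1\}\times\mathcal{S}(X),\mathbb{R}^{2})$ because it is a continuous transformation of a tight Gaussian process. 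Repeating the construction verbatim with $Q=Y$ in place of $Q=D$ and the appropriate marginals of $\textbf{Z}_{j}$ gives the corresponding result for $\tilde{\textbf{Z}}_{j,Y}(y,z,x,c)$.

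The only delicate point, and the main obstacle, is handling the kinks of the max and min operators when two or more of the three candidate functions coincide at $\theta_{0}$; but this is precisely what the Fang--Santos framework is built for, and the resulting limit is non-Gaussian exactly on those coincidence sets. Assumption 8 ensures $p_{z|x}\pm c$ stays bounded away from zero uniformly in $x$, so all of the $\delta_{j,\cdot,k}$ functionals (and their Hadamard derivatives) are uniformly well-defined over the finite support $\mathcal{S}(X)$, which together with Assumption 7 makes the convergence uniform and the limit tight. This completes the plan.
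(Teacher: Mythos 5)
Your proposal is correct and follows essentially the same route as the paper: the paper likewise defines min/max compositions ($\psi_{FS}$ and $\psi_{RF}$) of six elementary rational functionals of $\Omega_{0}=(p_{Y|z,x},p_{D|z,x},p_{z|x},q_{x})$, with the $2c$ appearing in the numerators of the second and fifth pieces, computes each Hadamard directional derivative by the quotient rule, assembles the piecewise derivative of the min/max via the chain rule of \cite{mastenpoirier2020} and \cite{fangsantos}, and applies the Delta Method to the joint Gaussian limit $\textbf{Z}_{j}$ from the preceding lemma. No substantive differences.
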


\begin{proof}
    Let $\Omega_{0}=\left ( p_{Y|z,x},p_{D|z,x},p_{z|x},q_{x} \right )$ and $\widehat{\Omega}=\left ( \widehat{p}_{Y|z,x},\widehat{p}_{D|z,x},\widehat{p}_{z|x},\widehat{q}_{x} \right )$. For fixed $d$ and $c$, define the mapping

    \begin{align*}
        &\psi_{FS}:l^{\infty}\left ( \left\{ 0,1\right\}^{2} \times\mathcal{S}(X)\right )\times l^{\infty}\left ( \left\{ 0,1\right\}^{2}\times\mathcal{S}(X) \right )\times l^{\infty}\left ( \mathcal{S}(X) \right )\\&\rightarrow l^{\infty}\left ( \left\{ 0,1\right\}\times\mathcal{S}(X),\mathbb{R}^{2} \right )
    \end{align*}

    by

    \begin{equation*}
        \left [ \psi_{FS}(\Omega) \right ](z,x)=\begin{pmatrix}
\min\left\{ \delta_{FS,4}(\Omega),\delta_{FS,5}(\Omega),\delta_{FS,6}(\Omega) \right\} \\\max\left\{ \delta_{FS,1}(\Omega),\delta_{FS,2}(\Omega),\delta_{FS,3}(\Omega)\right\}
\end{pmatrix}
    \end{equation*}

    where 

    \begin{equation*}
        \left [ \delta_{FS,1}(\Omega) \right ](z,x)=\frac{\Omega^{(2)}(d,z,x)\Omega^{(3)}(z,x)}{\Omega^{(3)}(z,x)+c}
    \end{equation*}

    which has Hadamard directional derivative equal to

    \begin{equation*}
        \left [ \delta^{'}_{FS,1,\Omega}(h) \right ](z,x)=\frac{\Omega^{(2)}(d,z,x)h^{(3)}(z,x)+h^{(2)}(d,z,x)\Omega^{(3)}(z,x)}{\Omega^{(3)}(z,x)+c}-\frac{\Omega^{(2)}(d,z,x)\Omega^{(3)}(z,x)h^{(3)}(z,x)}{\left ( \Omega^{(3)}(z,x)+c \right )^{2}}
    \end{equation*}

    \begin{equation*}
        \left [ \delta_{FS,2}(\Omega) \right ](z,x)=\frac{\Omega^{(2)}(d,z,x)\Omega^{(3)}(z,x)-2c}{\Omega^{(3)}(z,x)-c}
    \end{equation*}

    which has Hadamard directional derivative equal to

    \begin{equation*}
        \left [ \delta^{'}_{FS,2,\Omega}(h) \right ](z,x)=\frac{\Omega^{(2)}(d,z,x)h^{(3)}(z,x)+h^{(2)}(d,z,x)\Omega^{(3)}(z,x)}{\Omega^{(3)}(z,x)-c}-\frac{\left ( \Omega^{(2)}(d,z,x)\Omega^{(3)}(z,x)-2c \right )h^{(3)}(z,x)}{\left ( \Omega^{(3)}(z,x)-c \right )^{2}},
    \end{equation*}

    \begin{equation*}
        \left [ \delta_{FS,3}(\Omega) \right ](z,x)=\Omega^{(2)}(d,z,x)\Omega^{(3)}(z,x)
    \end{equation*}

    which has Hadamard directional derivative equal to

    \begin{equation*}
        \left [ \delta^{'}_{FS,3,\Omega}(h) \right ](z,x)=h^{(2)}(d,z,x)\Omega^{(3)}(z,x)+\Omega^{(2)}(d,z,x)h^{(3)}(z,x),
    \end{equation*}

    \begin{equation*}
        \left [ \delta_{FS,4}(\Omega) \right ](z,x)=\frac{\Omega^{(2)}(d,z,x)\Omega^{(3)}(z,x)}{\Omega^{(3)}(z,x)-c}
    \end{equation*}

    which has Hadamard directional derivative equal to

    \begin{equation*}
        \left [ \delta^{'}_{FS,4,\Omega}(h) \right ](z,x)=\frac{\Omega^{(2)}(d,z,x)h^{(3)}(z,x)+h^{(2)}(d,z,x)\Omega^{(3)}(z,x)}{\Omega^{(3)}(z,x)-c}-\frac{\Omega^{(2)}(d,z,x)\Omega^{(3)}(z,x)h^{(2)}(z,x)}{\left ( \Omega^{(3)}(z,x)-c \right )^{2}}
    \end{equation*},

\begin{equation*}
    \left [ \delta_{FS,5}(\Omega) \right ](z,x)=\frac{\Omega^{(1)}(d,z,x)\Omega^{(2)}(z,x)+2c}{\Omega^{(2)}(z,x)+c},
\end{equation*}

which has Hadamard derivative equal to

\begin{equation*}
    \left [ \delta_{FS,5,\Omega}^{'}(h) \right ](z,x)=\frac{\Omega^{(2)}(d,z,x)h^{(3)}(z,x)+h^{(2)}(d,z,x)\Omega^{(3)}(z,x)}{\Omega^{(3)}(z,x)+c}-\frac{\left ( \Omega^{(2)}(d,z,x)\Omega^{(3)}(z,x)+2c \right )h^{(3)}(z,x)}{\left ( \Omega^{(3)}(z,x)+c \right )^{2}},
\end{equation*}

and

\begin{equation*}
    \left [ \delta_{FS,6,\Omega}^{'}(h) \right ](z,x)=h^{(2)}(d,z,x)\Omega^{(3)}(z,x)+h^{(3)}(z,x)(\Omega^{(2)}(d,z,x)-1)
\end{equation*}

Hence, the Hadamard directional derivative of the map $\psi_{FS}$ evaluated at $\Omega_{0}$ is 

\begin{equation*}
    \psi_{FS,\Omega_{0}}^{'}(h)=\begin{pmatrix}
\mathbf{1}\left ( \delta_{FS,6}(\Omega_{0})>\max\left\{\delta_{FS,4}(\Omega_{0}),\delta_{FS,5}(\Omega_{0}) \right\} \right )\delta_{FS,6,\Omega_{0}}^{'}(h)\\
+\mathbf{1}\left ( \delta_{FS,5}(\Omega_{0})>\max\left\{\delta_{FS,4}(\Omega_{0}),\delta_{FS,6}(\Omega_{0}) \right\} \right )\delta_{FS,5,\Omega_{0}}^{'}(h) \\
+\mathbf{1}\left ( \delta_{FS,4}(\Omega_{0})>\max\left\{\delta_{FS,5}(\Omega_{0}),\delta_{FS,6}(\Omega_{0}) \right\} \right )\delta_{FS,4,\Omega_{0}}^{'}(h)\\
+\mathbf{1}\left ( \delta_{FS,6}(\Omega_{0})=\delta_{FS,5}(\Omega_{0})>\delta_{FS,4}(\Omega_{0})  \right )\min\left\{ \delta_{FS,6,\Omega_{0}}^{'}(h),\delta_{FS,5,\Omega_{0}}^{'}(h)\right\}\\
+\mathbf{1}\left ( \delta_{FS,6}(\Omega_{0})=\delta_{FS,4}(\Omega_{0})>\delta_{FS,5}(\Omega_{0})  \right )\min\left\{ \delta_{FS,6,\Omega_{0}}^{'}(h),\delta_{FS,4,\Omega_{0}}^{'}(h)\right\}\\
+\mathbf{1}\left ( \delta_{FS,4}(\Omega_{0})=\delta_{FS,5}(\Omega_{0})>\delta_{FS,6}(\Omega_{0})  \right )\min\left\{ \delta_{FS,4,\Omega_{0}}^{'}(h),\delta_{FS,5,\Omega_{0}}^{'}(h)\right\}\\
+\mathbf{1}\left ( \delta_{FS,6}(\Omega_{0})=\delta_{FS,5}(\Omega_{0})=\delta_{FS,4}(\Omega_{0})  \right )\min\left\{ \delta_{FS,6,\Omega_{0}}^{'}(h),\delta_{FS,5,\Omega_{0}}^{'}(h),\delta_{FS,6,\Omega_{0}}^{'}(h)\right\}\\
 \\
\mathbf{1}\left ( \delta_{FS,3}(\Omega_{0})<\min\left\{\delta_{FS,1}(\Omega_{0}),\delta_{FS,2}(\Omega_{0}) \right\} \right )\delta_{FS,3,\Omega_{0}}^{'}(h)\\
+\mathbf{1}\left ( \delta_{FS,2}(\Omega_{0})<\min\left\{\delta_{FS,1}(\Omega_{0}),\delta_{FS,3}(\Omega_{0}) \right\} \right )\delta_{FS,2,\Omega_{0}}^{'}(h) \\
+\mathbf{1}\left ( \delta_{FS,3}(\Omega_{0})<\min\left\{\delta_{FS,1}(\Omega_{0}),\delta_{FS,2}(\Omega_{0}) \right\} \right )\delta_{FS,3,\Omega_{0}}^{'}(h)\\
+\mathbf{1}\left ( \delta_{FS,3}(\Omega_{0})=\delta_{FS,2}(\Omega_{0})<\delta_{FS,1}(\Omega_{0})  \right )\max\left\{ \delta_{FS,3,\Omega_{0}}^{'}(h),\delta_{FS,2,\Omega_{0}}^{'}(h)\right\}\\
+\mathbf{1}\left ( \delta_{FS,3}(\Omega_{0})=\delta_{FS,1}(\Omega_{0})<\delta_{FS,2}(\Omega_{0})  \right )\max\left\{ \delta_{FS,3,\Omega_{0}}^{'}(h),\delta_{FS,1,\Omega_{0}}^{'}(h)\right\}\\
+\mathbf{1}\left ( \delta_{FS,1}(\Omega_{0})=\delta_{FS,2}(\Omega_{0})>\delta_{FS,3}(\Omega_{0})  \right )\max\left\{ \delta_{FS,1,\Omega_{0}}^{'}(h),\delta_{FS,2,\Omega_{0}}^{'}(h)\right\}\\
+\mathbf{1}\left ( \delta_{FS,3}(\Omega_{0})=\delta_{FS,2}(\Omega_{0})=\delta_{FS,1}(\Omega_{0})  \right )\max\left\{ \delta_{FS,3,\Omega_{0}}^{'}(h),\delta_{FS,2,\Omega_{0}}^{'}(h),\delta_{FS,1,\Omega_{0}}^{'}(h)\right\}\\
\end{pmatrix}
\end{equation*}

It follows from the Delta Method that

\begin{equation*}
    \left [ \sqrt{N}\left ( \psi_{FS}(\widehat{\Omega})-\psi_{FS}(\Omega_{0}) \right ) \right ]\overset{d}{\rightarrow}\left [ \psi_{FS,\Omega_{0}}^{'}(\textbf{Z}_{j}) \right ](z,x)\equiv \textbf{Z}_{j,FS}(z,x)
\end{equation*}

For fixed $y$ and $c$ define the mapping

\begin{align*}
        &\psi_{RF}:l^{\infty}\left ( \left\{ 0,1\right\}^{2} \times\mathcal{S}(X)\right )\times l^{\infty}\left ( \left\{ 0,1\right\}^{2}\times\mathcal{S}(X) \right )\times l^{\infty}\left ( \mathcal{S}(X) \right )\\&\rightarrow l^{\infty}\left ( \left\{ 0,1\right\}\times\mathcal{S}(X),\mathbb{R}^{2} \right )
    \end{align*}

    by

    \begin{equation*}
        \left [ \psi_{RF}(\Omega) \right ](z,x)=\begin{pmatrix}
\min\left\{ \delta_{RF,4}(\Omega),\delta_{RF,5}(\Omega),\delta_{RF,6}(\Omega) \right\} \\\max\left\{ \delta_{RF,1}(\Omega),\delta_{RF,2}(\Omega),\delta_{RF,3}(\Omega)\right\}
\end{pmatrix}
    \end{equation*}

    where 

\begin{equation*}
        \left [ \delta_{RF,1}(\Omega) \right ](z,x)=\frac{\Omega^{(1)}(y,z,x)\Omega^{(3)}(z,x)}{\Omega^{(3)}(z,x)+c}
    \end{equation*}

    which has Hadamard directional derivative equal to

    \begin{equation*}
        \left [ \delta^{'}_{RF,1,\Omega}(h) \right ](z,x)=\frac{\Omega^{(1)}(y,z,x)h^{(3)}(z,x)+h^{(1)}(y,z,x)\Omega^{(3)}(z,x)}{\Omega^{(3)}(z,x)+c}-\frac{\Omega^{(1)}(y,z,x)\Omega^{(3)}(z,x)h^{(3)}(z,x)}{\left ( \Omega^{(3)}(z,x)+c \right )^{2}}
    \end{equation*}

    \begin{equation*}
        \left [ \delta_{RF,2}(\Omega) \right ](z,x)=\frac{\Omega^{(1)}(y,z,x)\Omega^{(3)}(z,x)-2c}{\Omega^{(3)}(z,x)-c}
    \end{equation*}

    which has Hadamard directional derivative equal to

    \begin{equation*}
        \left [ \delta^{'}_{RF,2,\Omega}(h) \right ](z,x)=\frac{\Omega^{(1)}(y,z,x)h^{(3)}(z,x)+h^{(1)}(y,z,x)\Omega^{(3)}(z,x)}{\Omega^{(3)}(z,x)-c}-\frac{\left ( \Omega^{(1)}(y,z,x)\Omega^{(3)}(z,x)-2c \right )h^{(3)}(z,x)}{\left ( \Omega^{(3)}(z,x)-c \right )^{2}},
    \end{equation*}

    \begin{equation*}
        \left [ \delta_{RF,3}(\Omega) \right ](z,x)=\Omega^{(1)}(y,z,x)\Omega^{(3)}(z,x)
    \end{equation*}

    which has Hadamard directional derivative equal to

    \begin{equation*}
        \left [ \delta^{'}_{RF,3,\Omega}(h) \right ](z,x)=h^{(1)}(y,z,x)\Omega^{(3)}(z,x)+\Omega^{(1)}(y,z,x)h^{(3)}(z,x),
    \end{equation*}

    \begin{equation*}
        \left [ \delta_{RF,4}(\Omega) \right ](z,x)=\frac{\Omega^{(1)}(y,z,x)\Omega^{(3)}(z,x)}{\Omega^{(3)}(z,x)-c}
    \end{equation*}

    which has Hadamard directional derivative equal to

    \begin{equation*}
        \left [ \delta^{'}_{RF,4,\Omega}(h) \right ](z,x)=\frac{\Omega^{(1)}(y,z,x)h^{(3)}(z,x)+h^{(1)}(y,z,x)\Omega^{(3)}(z,x)}{\Omega^{(3)}(z,x)-c}-\frac{\Omega^{(1)}(y,z,x)\Omega^{(3)}(z,x)h^{(3)}(z,x)}{\left ( \Omega^{(3)}(z,x)-c \right )^{2}}
    \end{equation*},

\begin{equation*}
    \left [ \delta_{RF,5}(\Omega) \right ](z,x)=\frac{\Omega^{(1)}(y,z,x)\Omega^{(2)}(z,x)+2c}{\Omega^{(2)}(z,x)+c},
\end{equation*}

which has Hadamard derivative equal to

\begin{equation*}
    \left [ \delta_{RF,5,\Omega}^{'}(h) \right ](z,x)=\frac{\Omega^{(1)}(y,z,x)h^{(3)}(z,x)+h^{(1)}(y,z,x)\Omega^{(3)}(z,x)}{\Omega^{(3)}(z,x)+c}-\frac{\left ( \Omega^{(1)}(y,z,x)\Omega^{(3)}(z,x)+2c \right )h^{(3)}(z,x)}{\left ( \Omega^{(3)}(z,x)+c \right )^{2}},
\end{equation*}

and

\begin{equation*}
    \left [ \delta_{RF,6,\Omega}^{'}(h) \right ](z,x)=h^{(1)}(y,z,x)\Omega^{(3)}(z,x)+h^{(3)}(z,x)(\Omega^{(1)}(y,z,x)-1)
\end{equation*}

Hence, the Hadamard directional derivative of the map $\psi_{FS}$ evaluated at $\Omega_{0}$ is 

\begin{equation*}
    \psi_{RF,\Omega_{0}}^{'}(h)=\begin{pmatrix}
\mathbf{1}\left ( \delta_{RF,6}(\Omega_{0})>\max\left\{\delta_{RF,4}(\Omega_{0}),\delta_{RF,5}(\Omega_{0}) \right\} \right )\delta_{RF,6,\Omega_{0}}^{'}(h)\\
+\mathbf{1}\left ( \delta_{RF,5}(\Omega_{0})>\max\left\{\delta_{RF,4}(\Omega_{0}),\delta_{RF,6}(\Omega_{0}) \right\} \right )\delta_{RF,5,\Omega_{0}}^{'}(h) \\
+\mathbf{1}\left ( \delta_{RF,4}(\Omega_{0})>\max\left\{\delta_{RF,5}(\Omega_{0}),\delta_{RF,6}(\Omega_{0}) \right\} \right )\delta_{RF,4,\Omega_{0}}^{'}(h)\\
+\mathbf{1}\left ( \delta_{RF,6}(\Omega_{0})=\delta_{RF,5}(\Omega_{0})>\delta_{RF,4}(\Omega_{0})  \right )\min\left\{ \delta_{RF,6,\Omega_{0}}^{'}(h),\delta_{RF,5,\Omega_{0}}^{'}(h)\right\}\\
+\mathbf{1}\left ( \delta_{RF,6}(\Omega_{0})=\delta_{RF,4}(\Omega_{0})>\delta_{RF,5}(\Omega_{0})  \right )\min\left\{ \delta_{RF,6,\Omega_{0}}^{'}(h),\delta_{RF,4,\Omega_{0}}^{'}(h)\right\}\\
+\mathbf{1}\left ( \delta_{RF,4}(\Omega_{0})=\delta_{RF,5}(\Omega_{0})>\delta_{RF,6}(\Omega_{0})  \right )\min\left\{ \delta_{RF,4,\Omega_{0}}^{'}(h),\delta_{RF,5,\Omega_{0}}^{'}(h)\right\}\\
+\mathbf{1}\left ( \delta_{RF,6}(\Omega_{0})=\delta_{RF,5}(\Omega_{0})=\delta_{RF,4}(\Omega_{0})  \right )\min\left\{ \delta_{RF,6,\Omega_{0}}^{'}(h),\delta_{RF,5,\Omega_{0}}^{'}(h),\delta_{RF,6,\Omega_{0}}^{'}(h)\right\}\\
 \\
\mathbf{1}\left ( \delta_{RF,3}(\Omega_{0})<\min\left\{\delta_{RF,1}(\Omega_{0}),\delta_{RF,2}(\Omega_{0}) \right\} \right )\delta_{RF,3,\Omega_{0}}^{'}(h)\\
+\mathbf{1}\left ( \delta_{RF,2}(\Omega_{0})<\min\left\{\delta_{RF,1}(\Omega_{0}),\delta_{RF,3}(\Omega_{0}) \right\} \right )\delta_{RF,2,\Omega_{0}}^{'}(h) \\
+\mathbf{1}\left ( \delta_{RF,3}(\Omega_{0})<\min\left\{\delta_{RF,1}(\Omega_{0}),\delta_{RF,2}(\Omega_{0}) \right\} \right )\delta_{RF,3,\Omega_{0}}^{'}(h)\\
+\mathbf{1}\left ( \delta_{RF,3}(\Omega_{0})=\delta_{RF,2}(\Omega_{0})<\delta_{RF,1}(\Omega_{0})  \right )\max\left\{ \delta_{RF,3,\Omega_{0}}^{'}(h),\delta_{RF,2,\Omega_{0}}^{'}(h)\right\}\\
+\mathbf{1}\left ( \delta_{RF,3}(\Omega_{0})=\delta_{RF,1}(\Omega_{0})<\delta_{RF,2}(\Omega_{0})  \right )\max\left\{ \delta_{RF,3,\Omega_{0}}^{'}(h),\delta_{RF,1,\Omega_{0}}^{'}(h)\right\}\\
+\mathbf{1}\left ( \delta_{RF,1}(\Omega_{0})=\delta_{RF,2}(\Omega_{0})>\delta_{RF,3}(\Omega_{0})  \right )\max\left\{ \delta_{RF,1,\Omega_{0}}^{'}(h),\delta_{RF,2,\Omega_{0}}^{'}(h)\right\}\\
+\mathbf{1}\left ( \delta_{RF,3}(\Omega_{0})=\delta_{RF,2}(\Omega_{0})=\delta_{RF,1}(\Omega_{0})  \right )\max\left\{ \delta_{RF,3,\Omega_{0}}^{'}(h),\delta_{RF,2,\Omega_{0}}^{'}(h),\delta_{RF,1,\Omega_{0}}^{'}(h)\right\}\\
\end{pmatrix}
\end{equation*}
   
It follows from the Delta Method that

\begin{equation*}
    \left [ \sqrt{N}\left ( \psi_{RF}(\widehat{\Omega})-\psi_{RF}(\Omega_{0}) \right ) \right ]\overset{d}{\rightarrow}\left [ \psi_{RF,\Omega_{0}}^{'}(\textbf{Z}_{j}) \right ](z,x)\equiv \textbf{Z}_{j,RF}(z,x)
\end{equation*}

which concludes the proof.

\end{proof}

\end{document}